\def\UrlSpecials{\do\~{\kern -.15em\lower .7ex\hbox{~}\kern .04em}} \catcode`~=13 
\newcommand{\nn}{\nonumber}
\newcommand{\calA}{\mathcal{A}}
\newcommand{\calB}{\mathcal{B}}
\newcommand{\calC}{\mathcal{C}}
\newcommand{\calD}{\mathcal{D}}
\newcommand{\calE}{\mathcal{E}}
\newcommand{\calF}{\mathcal{F}}
\newcommand{\calG}{\mathcal{G}}
\newcommand{\calI}{\mathcal{I}}
\newcommand{\calJ}{\mathcal{J}}
\newcommand{\calK}{\mathcal{K}}
\newcommand{\calL}{\mathcal{L}}
\newcommand{\calM}{\mathcal{M}}
\newcommand{\calN}{\mathcal{N}}
\newcommand{\calR}{\mathcal{R}}
\newcommand{\calT}{\mathcal{T}}
\newcommand{\calX}{\mathcal{X}}
\newcommand{\calY}{\mathcal{Y}}
\newcommand{\ba}{\mathbf{a}}
\newcommand{\bA}{\mathbf{A}}
\newcommand{\bD}{\mathbf{D}}
\newcommand{\bG}{\mathbf{G}}
\newcommand{\bI}{\mathbf{I}}
\newcommand{\bj}{\mathbf{j}}
\newcommand{\bR}{\mathbf{R}}
\newcommand{\bu}{\mathbf{u}}
\newcommand{\bU}{\mathbf{U}}
\newcommand{\bv}{\mathbf{v}}
\newcommand{\bV}{\mathbf{V}}
\newcommand{\bx}{\mathbf{x}}
\newcommand{\bX}{\mathbf{X}}
\newcommand{\by}{\mathbf{y}}
\newcommand{\bY}{\mathbf{Y}}
\newcommand{\bZ}{\mathbf{Z}}
\newcommand{\rmd}{\mathrm{d}}
\newcommand{\rme}{\mathrm{e}}
\newcommand{\bbE}{\mathbb{E}}
\newcommand{\bbN}{\mathbb{N}}
\newcommand{\bbR}{\mathbb{R}}
\DeclareMathAlphabet{\mathbsf}{OT1}{cmss}{bx}{n}
\DeclareMathAlphabet{\mathssf}{OT1}{cmss}{m}{sl}
\newcommand{\rvC}{\mathsf{C}}
\newcommand{\rvM}{\mathsf{M}}
\newcommand{\rvV}{\mathsf{V}}
\DeclareSymbolFont{bsfletters}{OT1}{cmss}{bx}{n}  
\DeclareSymbolFont{ssfletters}{OT1}{cmss}{m}{n}
\DeclareMathSymbol{\bsfGamma}{0}{bsfletters}{'000}
\DeclareMathSymbol{\ssfGamma}{0}{ssfletters}{'000}
\DeclareMathSymbol{\bsfDelta}{0}{bsfletters}{'001}
\DeclareMathSymbol{\ssfDelta}{0}{ssfletters}{'001}
\DeclareMathSymbol{\bsfTheta}{0}{bsfletters}{'002}
\DeclareMathSymbol{\ssfTheta}{0}{ssfletters}{'002}
\DeclareMathSymbol{\bsfLambda}{0}{bsfletters}{'003}
\DeclareMathSymbol{\ssfLambda}{0}{ssfletters}{'003}
\DeclareMathSymbol{\bsfXi}{0}{bsfletters}{'004}
\DeclareMathSymbol{\ssfXi}{0}{ssfletters}{'004}
\DeclareMathSymbol{\bsfPi}{0}{bsfletters}{'005}
\DeclareMathSymbol{\ssfPi}{0}{ssfletters}{'005}
\DeclareMathSymbol{\bsfSigma}{0}{bsfletters}{'006}
\DeclareMathSymbol{\ssfSigma}{0}{ssfletters}{'006}
\DeclareMathSymbol{\bsfUpsilon}{0}{bsfletters}{'007}
\DeclareMathSymbol{\ssfUpsilon}{0}{ssfletters}{'007}
\DeclareMathSymbol{\bsfPhi}{0}{bsfletters}{'010}
\DeclareMathSymbol{\ssfPhi}{0}{ssfletters}{'010}
\DeclareMathSymbol{\bsfPsi}{0}{bsfletters}{'011}
\DeclareMathSymbol{\ssfPsi}{0}{ssfletters}{'011}
\DeclareMathSymbol{\bsfOmega}{0}{bsfletters}{'012}
\DeclareMathSymbol{\ssfOmega}{0}{ssfletters}{'012}
\newcommand{\hatm}{\hat{m}}
\newcommand{\tilm}{\tilde{m}}
\newcommand{\bLambda}{\bm{\Lambda}}
\newcommand{\bSigma	}{\bm{\Sigma}}
\newcommand{\hrho}{\hat{\rho}}
\newcommand{\eps}{\varepsilon}
\DeclareMathOperator{\var}{\mathsf{Var}}
\DeclareMathOperator{\cov}{\mathsf{Cov}}
\newcommand{\bzero}{\mathbf{0}}
\newcommand{\bone}{\mathbf{1}}
\newtheorem{theorem}{Theorem} 
\newtheorem{lemma}[theorem]{Lemma}
\newtheorem{proposition}[theorem]{Proposition}
\newtheorem{definition}{Definition}
\newtheorem{remark}{Remark}
\newcommand{\qednew}{\nobreak \ifvmode \relax \else
      \ifdim\lastskip<1.5em \hskip-\lastskip
      \hskip1.5em plus0em minus0.5em \fi \nobreak
      \vrule height0.75em width0.5em depth0.25em\fi}
\newcommand{\underg}{\underline{g}}
\newcommand{\overg}{\overline{g}}
\newcommand{\openone}{\leavevmode\hbox{\small1\normalsize\kern-.33em1}} 
\newcommand{\bYtil}{\tilde{\mathbf{Y}}}
\newcommand{\bytil}{\tilde{\mathbf{y}}}
\newcommand{\Ytil}{\tilde{Y}}
\newcommand{\ytil}{\tilde{y}}
\begin{document} 
\flushbottom
\title{Second-Order Asymptotics for the  Gaussian \\ MAC with Degraded Message Sets} 

\author{Jonathan Scarlett, {\em Member, IEEE}  $\,$ and  $\,$
        Vincent Y.~F.~Tan, {\em Senior Member, IEEE}  
\thanks{J.~Scarlett was
with the Department of Engineering, University of Cambridge, Cambridge CB2 1PZ, U.K.\ 
He is now with the Laboratory for Information and Inference Systems, \'Ecole Polytechnique F\'ed\'erale de Lausanne, CH-1015, Switzerland
(email:\,jmscarlett@gmail.com).}
\thanks{V.~Y.~F.~Tan was with the Institute for Infocomm Research (I$^2$R), Agency for Science, Technology and Research (A*STAR). He is  now with the   Department of Electrical and Computer Engineering and the Department of Mathematics, National University of Singapore.
(email:\,vtan@nus.edu.sg). }
\thanks{This paper was presented in part at the 2014 IEEE International Symposium on Information Theory in Honolulu, HI.}
}
 
\IEEEpeerreviewmaketitle
 
\maketitle

\begin{abstract} 
This paper studies the second-order asymptotics of the Gaussian multiple-access channel with degraded message sets. For a fixed  average error probability $\eps \in (0,1)$ and an arbitrary point on the boundary of the capacity region, we characterize the speed of convergence of rate pairs that converge to that boundary point for codes that have asymptotic error probability no larger than $\eps$. 
As a stepping stone to this local notion of second-order asymptotics, we study a global notion, and establish relationships between the two.  We provide a numerical example to illustrate how the angle of approach to a boundary point affects the second-order coding rate.  This is the first conclusive characterization of the second-order asymptotics of a network information theory problem in which the capacity region is not a polygon. 
\end{abstract} 

\begin{IEEEkeywords} 
Gaussian multiple-access channel, Degraded message sets, Superposition coding, Strong converse, Finite blocklengths, Second-order coding rates, Dispersion. 
\end{IEEEkeywords}
 
\section{Introduction}

In this paper, we revisit the Gaussian multiple-access channel (MAC) with degraded message sets.  This is a communication model in which two independent messages are to be sent from two  sources to a common destination; see Fig.~\ref{fig:gauss_mac}. One encoder, the cognitive or informed encoder, has access to both messages, while the uninformed encoder only has access to its own message.  Both transmitted signals are power limited, and their sum is corrupted by additive white Gaussian noise (AWGN).

The capacity region $\calC$, i.e.\ the set of all pairs of achievable rates, is well-known (e.g.\ see \cite[Ex.~5.18(b)]{elgamal}), and is given by the set of rate pairs $(R_1, R_2)$ satisfying
\begin{align}
R_1 &\le \rvC\big((1-\rho^2) S_1\big) \label{eqn:cap1} \\
R_1 + R_2 &\le \rvC\big( S_1+S_2 + 2\rho\sqrt{ S_1 S_2 }  \big) \label{eqn:cap2}
\end{align} 
for some $\rho \in [0,1]$, where $S_1$ and $S_2$ are the admissible transmit powers, 
and $\rvC(x):=\frac{1}{2}\log(1+x)$ is the Gaussian capacity function. The capacity 
region $\calC$ does not depend on whether the average or maximal error probability  
formalism is employed, and no time-sharing is required. The region  $\calC$ for 
$S_1=S_2=1$ is illustrated in Fig.~\ref{fig:cr}; observe that $\calC$  is formed 
from a union of trapezoids, each parametrized by $\rho$. The vertical line segment 
corresponds to $\rho=0$, while the curved part corresponds to $\rho\in (0,1]$. 
The direct part of the coding theorem for $\calC$ is proved using superposition coding~\cite{cover72}.  

While the capacity region is well-known, there is substantial motivation to understand the {\em second-order asymptotics} for this problem. For any given point $(R_1^*, R_2^*)$ on the boundary of the capacity region, we study the rate of convergence to that point for an $\eps$-reliable code. More precisely, we characterize the set of all $(L_1, L_2)$ pairs, known as second-order coding rates~\cite{Hayashi08,Hayashi09,Nom13,Nom13b}, for which there exist sequences of codes whose asymptotic error probability does not exceed $\eps$, and whose code sizes $M_{1,n}$ and $M_{2,n}$ behave as 
\begin{align}
\log M_{j,n}\ge n R_j^* + \sqrt{n}L_j + o\big( \sqrt{n}\big),\quad j = 1,2. \label{eqn:roughly}
\end{align}
This study allows us to understand the fundamental  tradeoffs between the rates of transmission and average error probability from a perspective different from the study of  error exponents.  Here,  instead of fixing a pair of rates and studying the exponential decay of the  error probability $\eps$, we fix $\eps$ and study the speed at which a sequence of rate pairs approaches an information-theoretic limit as the blocklength grows. 

\begin{figure}[t]
\centering
\setlength{\unitlength}{.05cm}
\begin{picture}(200,100)

\put(0, 15){\vector(1, 0){30}}
\put(0, 75){\vector(1, 0){30}}
\put(0, 15){\vector(1, 2){30}}

\put(30, 0){\line(1, 0){30}}
\put(30, 30){\line(1, 0){30}}
\put(30, 0){\line(0,1){30}}
\put(60, 0){\line(0, 1){30}}

\put(30, 60){\line(1, 0){30}}
\put(30, 90){\line(1, 0){30}}
\put(30, 60){\line(0,1){30}}
\put(60, 60){\line(0, 1){30}}

\put(60, 75){\vector(1, -1){30}}
\put(60, 15){\vector(1, 1){30}}

\put(97, 45){\circle{14}}

\put(94, 43){\mbox{$+$}}

\put(97, 82){\vector(0, -1){30}}

\put(104, 45){\vector(1, 0){30}}

\put(134, 30){\line(1, 0){30}}
\put(134, 60){\line(1, 0){30}}
\put(134, 30){\line(0,1){30}}
\put(164, 30){\line(0, 1){30}}

\put(164, 45){\vector(1, 0){30}}

\put(-13,13){\mbox{$\rvM_2$}}
\put(-13,73){\mbox{$\rvM_1$}}

\put(75,61){\mbox{$\bX_1$}}
\put(75,24){\mbox{$\bX_2$}}

\put(39,73){\mbox{$f_{1,n}$}}
\put(39,13){\mbox{$f_{2,n}$}}

\put(90,85){\mbox{$\bZ\sim\calN(\bzero,\bI_n)$}}

\put(115,48){\mbox{$\bY$}}

\put(144,45){\mbox{$\varphi_n$}}

\put(175,49){\mbox{$(\hat{\rvM}_1, \hat{\rvM}_2)=\varphi_n(\bY)$}}
\end{picture} 
\caption{The model for the Gaussian MAC with degraded message sets. }
\label{fig:gauss_mac}
\end{figure}
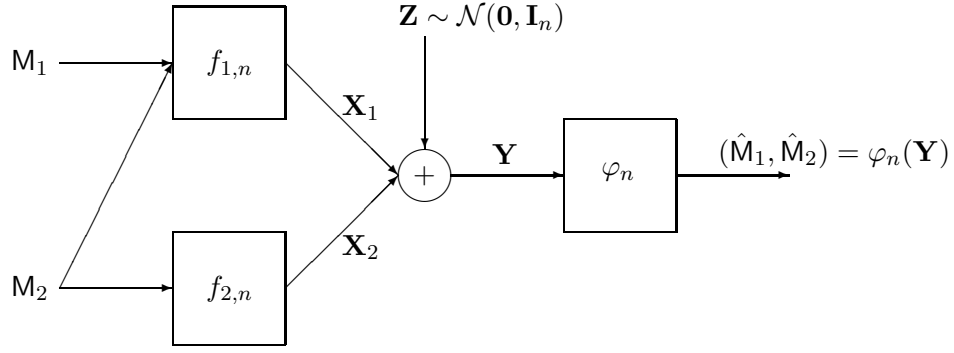

\subsection{Related Work}
The most notable early work on the second-order asymptotics for channel coding is that of Strassen~\cite{Strassen}, who considered discrete memoryless channels. For the single-user AWGN channel with a maximal power constraint $S$, a specialization of our model with $M_{2,n}=1$, Hayashi~\cite{Hayashi09} and Polyanskiy {\em et al.}~\cite{PPV10} showed that the optimum (highest) second-order coding rate is $\sqrt{\rvV(S) }\Phi^{-1}(\eps)$, where $\rvV(x):=\frac{x(x+2)}{2(x+1)^2}$ is the {\em Gaussian dispersion function}. 
Polyanskiy {\em et al.}~\cite[Thm.~54]{PPV10} and Tan-Tomamichel \cite{TanTom14} showed the refined asymptotic expansion
\begin{equation}
\log M^*(n,\eps)=n\rvC(S) + \sqrt{n\rvV(S)}\Phi^{-1}(\eps)+\frac{1}{2}\log n+O(1), \label{eqn:single_user}
\end{equation}
where $M^*(n,\eps)$ is the maximum size  of a length-$n$ block code  with average  error probability not exceeding $\eps$.
In fact, the expression for $\rvV(S)$ was already known to Shannon~\cite[Sec.~X]{Sha59b}, who analyzed  the reliability function of the AWGN channel for rates close to  capacity.  

There have been numerous attempts to study the finite blocklength behavior and   second-order asymptotics for MACs~\cite{TK12, huang12, Mol12, Mol12b, Mol13, Ver12,Mou13,  Scarlett13b,  Haim12}, but most of these works focus on inner bounds (the direct part). The development of tight and easily-evaluated converse bounds remains more modest, and those available do not match the direct part in general or are very restrictive (e.g.\ product channels were considered in \cite{Haim12}). We will see that the assumption of Gaussianity of the channel model together with the degradedness of the message sets allows us to circumvent some of the difficulties in proving second-order converses for the MAC, thus allowing us to obtain a {\em conclusive}  second-order result. 

We focus primarily on {\em local} second-order asymptotics propounded by Haim {\em et al.}~\cite{Haim12} for general network information theory problems, where a boundary point is fixed and the rate of approach is characterized. This is different from the global asymptotics studied in~\cite{TK12, huang12, Mol12,Mol12b,Mol13,Ver12,Mou13,Scarlett13b}, which we also study here as an initial step towards obtaining the local result. 
 
\subsection{Main Contributions}
Our main contribution is the characterization of the set of admissible local second-order coding rates 
$(L_1, L_2)$ for points on the curved part of the boundary of the capacity region (Theorem \ref{thm:local}). 
For a point characterized by $\rho \in (0,1)$, we show that the achievable second-order rate pairs $(L_1, L_2)$ are precisely those satisfying
\begin{align}
\begin{bmatrix}
L_1\\ L_1+L_2
\end{bmatrix}\in \bigcup_{\beta\in\bbR} \big\{ \beta\, \bD(\rho)+\Psi^{-1}(\bV(\rho),\eps)  \big\}, \label{eqn:L_intro}
\end{align}
where the entries of $\bD(\rho)$ are the derivatives of the capacities in \eqref{eqn:cap1}--\eqref{eqn:cap2}, $\bV(\rho)$ is the {\em dispersion matrix}~\cite{TK12,huang12}, and $\Psi^{-1}$ is the $2$-dimensional generalization of the inverse of the cumulative distribution function of a Gaussian. (All quantities are defined precisely in the sequel.) Thus, the contribution from the Gaussian approximation $\Psi^{-1}(\bV(\rho),\eps)$ is insufficient for characterizing the second-order asymptotics of multi-terminal channel coding problems in general; in this case, the vector $\bD(\rho)$ is also required.  
This is in stark contrast to single-user problems (e.g.~\cite{Hayashi08, Hayashi09, PPV10, Strassen, Nom13b}) and the (two-encoder) Slepian-Wolf problem~\cite{TK12, Nom13} where the Gaussian approximation in terms of a dispersion quantity is sufficient for the second-order asymptotics.
Our main result, which comprises the statement in \eqref{eqn:L_intro}, provides the first complete characterization of the local second-order asymptotics of a multi-user information theory problem in which the boundary of the capacity region (or optimal rate region for source coding problems) is curved.

Some intuition can be gained as to why the extra derivative term is needed by considering the possible
angles of approach to a fixed boundary point $(R_1^*,R_2^*)\in\calC$.  Using a {\em single} multivariate 
Gaussian input distribution with correlation $\rho$ for all blocklengths is suboptimal in the second-order 
sense, as we can only achieve the angles of approach within the trapezoid parametrized by  $\rho$ (see 
Fig.~\ref{fig:cr} and its caption). Our strategy is to consider sequences of input distributions that vary 
with the blocklength, i.e.\ they are parametrized by a sequence $\{\rho_n\}_{n\in\bbN}$ that converges to 
$\rho$ with speed $\Theta\big(\frac{1}{\sqrt{n}}\big)$. By a Taylor expansion of the first-order capacity vector 
$\bI(\rho)$ (the vector of capacities in \eqref{eqn:cap1}--\eqref{eqn:cap2}), 
\begin{equation}
\bI(\rho_n)\approx\bI(\rho) + (\rho_n-\rho) \bD(\rho), \label{eqn:back_envelope}
\end{equation}
we see that this sequence results in the derivative/slope term $\bD(\rho)$ 
observed in~\eqref{eqn:L_intro}.   Thus, the slope term corresponds 
to the deviation of $\rho_n$ from $\rho$, while the dispersion term involving $\bV(\rho)$ 
results from, by now, standard central limit (fixed error) analysis of Shannon-theoretic 
coding problems~\cite{Tan_FnT}.
 
We briefly comment on $\rho_n$ converging to $\rho$ at different speeds.
If $\rho_n-\rho=o(\frac{1}{\sqrt{n}})$, then the contribution of the remainder
term in \eqref{eqn:back_envelope} is dominated by the dispersion term, and 
hence this is, up to second order, equivalent to considering $\rho_n = \rho$.
In contrast, for $\rho_n-\rho=\omega(\frac{1}{\sqrt{n}})$, 
this remainder term dominates the dispersion term. 
Nevertheless, this case does not feature in the local result, due to the way we define the 
second-order coding rate region in \eqref{eqn:roughly}---the backoff terms with 
coefficients $L_1$ and $L_2$ scale as $\sqrt{n}$.  In particular, we show in the
converse proof that if $\rho_n-\rho=\omega(\frac{1}{\sqrt{n}})$, then no finite 
$(L_1,L_2)$ pairs  satisfy the conditions in this definition.


\begin{figure}[t]
\centering
\begin{overpic}[width = .553\columnwidth]{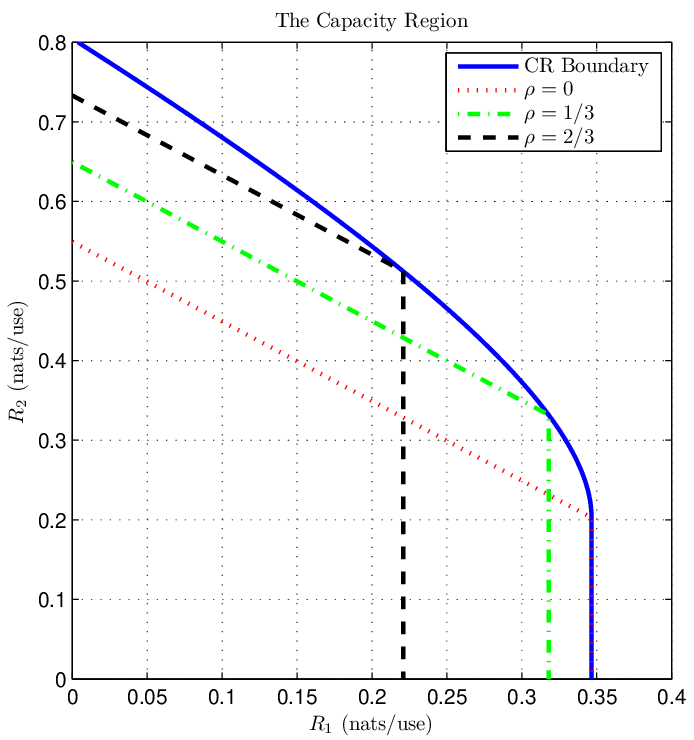}  
\thicklines 
\put(40, 48){\vector(1, 1){15}}
\put(65, 42.5){\vector(-1, 2){10}}  
\put(38, 45){\mbox{$\bv$}}
\put(65, 39.5){\mbox{$\bv'$}}
\end{overpic} 
\caption{Capacity region of the Gaussian MAC with degraded message sets in the case that 
$S_1 =  S_2 = 1$. Observe that $\rho \in [0,1]$ parametrizes points on the boundary.    
The vertical line segment corresponds to $\rho=0$, while the curved part corresponds to 
$\rho\in (0,1]$.  Each $\rho \in (0,1]$ corresponds to a trapezoid of rate pairs that are 
achievable by a unique input distribution   $\calN(\bzero,\bSigma(\rho))$. This 
coding strategy is insufficient to allow for all possible angles of approach to the fixed point 
parametrized by $\rho$, as there are non-empty regions within $\calC$ that not in the 
trapezoid parametrized by $\rho$. In the figure above with $\rho=\frac{2}{3}$, one can 
approach the corner point in the direction indicated by the vector $\bv$ using the fixed 
input distribution $\calN(\bzero,\bSigma(\frac{2}{3}))$, but the same is not true of
the direction indicated by $\bv'$, since the approach is from outside the trapezoid.
}
\label{fig:cr} 
\end{figure}

  An auxiliary contribution is  a global second-order result~\cite{Haim12,TK12}  (Theorem \ref{thm:global}), 
which we  use  as an important {\em stepping stone} to obtain our local second-order result.   
We show   
that for any sequence $\rho_n\in[0,1]$, all rate pairs $(R_{1,n}, R_{2,n})$ satisfying
\begin{equation}
\begin{bmatrix}
R_{1,n}\\ R_{1,n}+R_{2,n}
\end{bmatrix}    \in\bI(\rho_n) + \frac{\Psi^{-1}(\bV(\rho_n),\eps)}{\sqrt{n}} + o\bigg(\frac{1}{\sqrt n}\bigg)\bone \label{eq:global_example}
\end{equation}
are achievable at blocklength $n$ and with average error probability no 
larger than $\eps + o(1)$.  Our proof technique yields a third-order 
term that remains $o( \frac{1}{\sqrt n})$ no matter how $\rho_n$ 
varies with $n$. This property does not typically hold in previous 
results on multi-user fixed error asymptotics, but it
turns out to be crucial in deriving the local result and the 
additional slope term (cf.~\eqref{eqn:back_envelope}), at least
using our proof techniques.

In summary, we submit that both the global and local results on their 
own provide complementary and useful insights into fundamental limits 
of the communication system, but in this paper our main goal is the latter.


\section{Problem Setting and Definitions}
In this section, we state the channel model, various definitions and some known results.   

\subsubsection*{Notation} Given  integers $l\le m$, we use the discrete interval~\cite{elgamal} notations $[l:m]:=\{l,\ldots, m\}$ and   $[m] :=[1:m]$. All $\log$'s  and $\exp$'s are  with respect to the  natural base $\rme$.  The $\ell_p$-norm of the vectorized version of matrix $\bA$ is denoted by $\|\bA\|_p := \big(\sum_{i,j} |a_{i,j}|^p\big)^{1/p}$. For two vectors of the same length $\ba,\mathbf{b}\in\bbR^d$, the notation $\ba\le\mathbf{b}$ means that $a_j\le b_j$ for all $j \in [d]$. The notation $\calN(\bu;\bm{\mu}, \bLambda)$   denotes the  multivariate Gaussian probability density function (pdf) with mean $\bm{\mu}$ and covariance $\bLambda$. The argument $\bu$ will often be omitted. We use standard asymptotic notations: $f_n\in O(g_n)$ if and only if  (iff) $\limsup_{n\to\infty} \big|f_n/g_n\big|<\infty$; $f_n \in \Omega(g_n)$ iff $g_n \in O(f_n)$; $f_n \in \Theta(g_n)$ iff $f_n \in O(g_n)\cap\Omega(g_n)$; $f_n \in o(g_n)$ iff $\limsup_{n\to\infty} \big|f_n/g_n\big|=0$; and  $f_n \in \omega(g_n)$ iff $\liminf_{n\to\infty} \big|f_n/g_n\big|=\infty$. 
\subsection{Channel Model}
The signal model is given by 
\begin{align}
Y=  X_1 + X_2 + Z, \label{eqn:channel}
\end{align}
where $X_1$ and $X_2$ represent the inputs to the channel,  $Z\sim\calN(0,1)$ is additive Gaussian noise with mean zero and unit variance, and $Y$ is the output of the channel. Thus, the   channel  from $(X_1, X_2)$ to $Y$ can be written as 
\begin{align}
W(y|x_1, x_2) = \frac{1}{\sqrt{2\pi}}\exp\left(-\frac{1}{2}(y-x_1-x_2)^2 \right).
\end{align}
The channel is used $n$ times in a memoryless manner without feedback. The channel inputs (i.e., the transmitted codewords) $\bx_1 = (x_{11},\ldots, x_{1n})$ and $\bx_2= (x_{21},\ldots, x_{2n})$ are required to  satisfy the maximal power constraints  
\begin{align}
\|\bx_1\|_2^2\le n S_1,\quad\mbox{and}\quad  \|\bx_2\|_2^2\le n S_2, \label{eqn:power_constraints}
\end{align}
where  $S_1$ and $S_2$    are arbitrary positive   numbers. We do not incorporate multiplicative gains $g_1$ and $g_2$ to $X_1$ and $X_2$ in the channel model in~\eqref{eqn:channel}; this is without loss of generality, since in the presence of these gains we may equivalently redefine~\eqref{eqn:power_constraints} with $S_j' :=S_j / g_j^2$   for $j = 1,2$. 

\subsection{Definitions}
\begin{definition}[Code] \label{def:code}
An  {\em $(n,M_{1,n},M_{2,n},S_1, S_2,\eps_n)$-code}  for the Gaussian MAC with degraded message sets consists of two encoders $f_{1,n},f_{2,n}$ and a decoder $\varphi_n$ of the form  $f_{1,n} : [M_{1,n}]\times [M_{2,n}] \to \bbR^n$,      $f_{2,n} : [M_{2,n}]\to\bbR^n$ and    $\varphi_n : \bbR^n\to [M_{1,n}]\times [M_{2,n}]$ satisfying
\begin{align}
  \|f_{1,n}(m_1, m_2)\|_2^2 &\le n S_1    \quad\forall  \, (m_1,m_2)\in [M_{1,n}]\times [M_{2,n}] , \label{eqn:power1} \\
 \|f_{2,n}(m_2)\|_2^2&\le n S_2    \quad\forall \,   m_2 \in  [M_{2,n}]   \label{eqn:power2}, \\
\Pr\big( (\rvM_1,\rvM_2)\ne (\hat{\rvM}_1,\hat{\rvM}_2) \big)  &\le\eps_n , \label{eqn:error_prob1}
\end{align}
where the messages $\rvM_1$ and $\rvM_2$ are uniformly distributed on $[M_{1,n}]$ and $[M_{2,n}]$ respectively, and $(\hat{\rvM}_1,\hat{\rvM}_2) :=\varphi_n(Y^n)$ is the decoded message pair. 
\end{definition}
Since $S_1$ and $S_2$ are fixed positive numbers, we suppress the dependence of the subsequent definitions, results and parameters on these constants. We will often make reference to {\em $(n,\eps)$-codes}; this is the family of $(n,M_{1,n},M_{2,n},S_1, S_2,\eps)$-codes where the sizes $M_{1,n},M_{2,n}$ are left unspecified.

\begin{definition}[$(n,\eps)$-Achievability] \label{def:neps}
A pair of  non-negative  numbers $(R_1, R_2)$ is {\em $(n,\eps)$-achievable} if there exists an  $(n,M_{1,n},M_{2,n},S_1, S_2,\eps_n)$-code such that 
\begin{align}
\frac{1}{n}\log M_{j,n}\ge R_j,\quad j = 1,2,\quad\mbox{and}\quad \eps_n \le \eps.
\end{align}
The {\em $(n,\eps)$-capacity region} $\calC(n,\eps) \subset\bbR_+^2$ is defined to be the set of all $(n,\eps)$-achievable rate pairs $(R_1, R_2)$. 
\end{definition}

Definition \ref{def:neps} is a non-asymptotic one that is used primarily for the global second-order results. We now introduce asymptotic-type definitions that involve the existence of {\em sequences} of codes. 
\begin{definition}[First-Order Coding Rates]
A pair of  non-negative numbers $(R_1, R_2)$ is {\em $\eps$-achievable} if there exists a sequence of $(n,M_{1,n},M_{2,n},S_1, S_2,\eps_n)$-codes such that  
\begin{align}
\liminf_{n\to\infty}\frac{1}{n}\log M_{j,n} \ge R_j,\quad j = 1,2,\quad\mbox{and}\quad \limsup_{n\to\infty}\eps_n \le \eps.
\end{align}
The {\em $\eps$-capacity region} $\calC(\eps) \subset\bbR_+^2$ is defined to be the closure of the set of all $\eps$-achievable rate pairs $(R_1, R_2)$.  The {\em capacity region}  $\calC$ is defined as 
\begin{equation}
\calC:=\bigcap_{\eps > 0}\calC(\eps)=\lim_{\eps\to 0}\calC(\eps),
\end{equation}
where the   limit exists because of the monotonicity of $\calC(\eps)$.
\end{definition}

Next, we state the most important definitions concerning local second-order coding rates in the spirit of Nomura-Han~\cite{Nom13} and Tan-Kosut~\cite{TK12}.  We will spend the majority of the paper developing tools to characterize these rates. Here    $(R_1^*, R_2^*)$ is a pair of rates   on the boundary of $\calC(\eps)$. 

\begin{definition}[Second-Order Coding Rates] \label{def;second}
A pair of numbers $(L_1, L_2)$ is {\em $(\eps,R_1^*,R_2^*)$-second-order achievable} if there exists a sequence of $(n,M_{1,n},M_{2,n},S_1, S_2,\eps_n)$-codes such that  
\begin{align}
\liminf_{n\to\infty}\frac{1}{\sqrt{n}}(\log M_{j,n} - nR_j^*) \ge L_j,\quad j = 1,2,\quad\mbox{and}\quad \limsup_{n\to\infty}\eps_n \le \eps. \label{eqn:second_def}
\end{align}
The {\em $(\eps,R_1^*,R_2^*)$-optimal second-order coding rate region} $\calL(\eps;R_1^*,R_2^*) \subset\bbR^2$ is defined to be the closure of the  set of all $(\eps,R_1^*,R_2^*)$-second-order achievable rate pairs  $(L_1, L_2)$. 
\end{definition}
Stated differently, if $(L_1, L_2)$ is   $(\eps,R_1^*,R_2^*)$-second-order  achievable, then there are codes  whose error probabilities are asymptotically no larger than $\eps$, and whose sizes $(M_{1,n},M_{2,n})$ satisfy the asymptotic relation in~\eqref{eqn:roughly}. Even though we refer to  $L_1$ and $L_2$ as ``rates'', they may be negative~\cite{Hayashi08,Hayashi09,Nom13,Nom13b}.  A negative value corresponds to a backoff from the first-order term, whereas a positive value corresponds to an addition to the first-order term.

\subsection{Existing  First-Order Results}
To put things in context, we review some existing results concerning the $\eps$-capacity region. To state the result compactly, we define the {\em mutual information (or capacity) vector} as 
\begin{align}
\bI(\rho)=\begin{bmatrix}
I_1(\rho) \\ I_{12}(\rho)
\end{bmatrix} := \begin{bmatrix}
\rvC\big( S_1(1-\rho^2)  \big)  \\ \rvC\big( S_1+S_2 + 2\rho\sqrt{S_1S_2}   \big) 
\end{bmatrix}   \label{eqn:mi_vec}
\end{align}
where $\rho\in [-1,1]$. 
For a pair of rates $(R_1, R_2)$, let the {\em rate vector} be
\begin{align}
\bR := \begin{bmatrix}
R_1 \\ R_1 + R_2
\end{bmatrix} . \label{eqn:rate_vec}
\end{align}
A statement of the following result is provided in~\cite[Ex.~5.18(b)]{elgamal}. A weak converse was proved for the more general Gaussian MAC a with common message in \cite{BLW12}.
\begin{proposition}[Capacity Region] \label{prop:first}
The capacity region of the Gaussian MAC with degraded message sets is given by
\begin{align}
\calC= \bigcup_{0\le\rho\le 1}\left\{ (R_1, R_2 ) \in\bbR_+^2 : \bR \le \bI(\rho) \right\}   . \label{eqn:first_order}
\end{align}
\end{proposition}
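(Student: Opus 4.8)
The plan is to prove the two inclusions of \eqref{eqn:first_order} separately: achievability (``$\supseteq$'') by Gaussian superposition coding, and the weak converse (``$\subseteq$'') by Fano's inequality together with single-letterization and the maximum-entropy property of the Gaussian. The common object is, for a fixed $\rho\in[0,1]$, the jointly Gaussian input $X_2=U$ with $U\sim\calN(0,S_2)$ and $X_1=\rho\sqrt{S_1/S_2}\,U+V$ with $V\sim\calN(0,(1-\rho^2)S_1)$ independent of $U$; a one-line computation with $Y=X_1+X_2+Z$ gives $\bbE[X_j^2]=S_j$, $I(X_1;Y\mid X_2)=\rvC((1-\rho^2)S_1)=I_1(\rho)$, and $I(X_1,X_2;Y)=\rvC(S_1+S_2+2\rho\sqrt{S_1S_2})=I_{12}(\rho)$, the two entries of $\bI(\rho)$.

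For achievability, fix $\rho\in[0,1)$ and a small $\gamma>0$. I would generate cloud centres $\bU(m_2)$, $m_2\in[M_{2,n}]$, i.i.d.\ $\calN(\bzero,(S_2-\gamma)\bI_n)$ and, conditionally on each, satellites $\bV(m_1,m_2)$, $m_1\in[M_{1,n}]$, i.i.d.\ $\calN(\bzero,(1-\rho^2)(S_1-\gamma)\bI_n)$, and set $\bX_2(m_2)=\bU(m_2)$ and $\bX_1(m_1,m_2)=\rho\sqrt{(S_1-\gamma)/(S_2-\gamma)}\,\bU(m_2)+\bV(m_1,m_2)$. Since the mean squared norms equal $n(S_j-\gamma)<nS_j$, only a vanishing fraction of codewords violate the maximal power constraints \eqref{eqn:power1}--\eqref{eqn:power2}, and these are expurgated at an $o(1)$ rate cost. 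The decoder declares the unique pair $(\hat m_1,\hat m_2)$ with $(\bX_1(\hat m_1,\hat m_2),\bX_2(\hat m_2),\bY)$ jointly typical. There are the usual three error events: the transmitted triple is atypical (vanishes by the AEP); some $(\hat m_1,m_2)$ with $\hat m_1\neq m_1$ is typical, where $\bX_1(\hat m_1,m_2)$ conditioned on $\bX_2(m_2)$ is a fresh codeword, so a union bound over at most $M_{1,n}$ competitors vanishes when $R_1<I_1(\rho)-O(\gamma)$; and some $(\hat m_1,\hat m_2)$ with $\hat m_2\neq m_2$ is typical, where $(\bX_1(\hat m_1,\hat m_2),\bX_2(\hat m_2))$ is independent of $\bY$ and distributed as a fresh input pair, so a union bound over at most $M_{1,n}M_{2,n}$ competitors vanishes when $R_1+R_2<I_{12}(\rho)-O(\gamma)$. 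Letting $\gamma\downarrow0$ and taking closures yields ``$\supseteq$''.

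For the converse, take any sequence of $(n,M_{1,n},M_{2,n},S_1,S_2,\eps_n)$-codes with $\eps_n\to0$. Fano's inequality and the Markov chains $\rvM_1\to\bX_1\to\bY$ (given $\rvM_2$) and $(\rvM_1,\rvM_2)\to(\bX_1,\bX_2)\to\bY$ give $\log M_{1,n}\le I(\bX_1;\bY\mid\bX_2)+n\delta_n$ and $\log M_{1,n}+\log M_{2,n}\le I(\bX_1,\bX_2;\bY)+n\delta_n$ with $\delta_n\to0$. Memorylessness and Gaussianity then let me single-letterize: $I(\bX_1;\bY\mid\bX_2)=h(\bY\mid\bX_2)-\tfrac n2\log(2\pi e)\le\sum_i\rvC\big(\bbE[\var(X_{1i}\mid X_{2i})]\big)$, using $h(Y_i\mid X_{2i})\le\tfrac12\log\big(2\pi e(\bbE[\var(X_{1i}\mid X_{2i})]+1)\big)$ (conditional maximum entropy plus Jensen), and likewise $I(\bX_1,\bX_2;\bY)\le\sum_i\rvC\big(\var(X_{1i}+X_{2i})\big)$. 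A final Jensen step (concavity of $\rvC$) gives $\tfrac1n\log M_{1,n}\le\rvC(\bar a)+\delta_n$ and $\tfrac1n(\log M_{1,n}+\log M_{2,n})\le\rvC(\bar b)+\delta_n$, where $\bar a=\tfrac1n\sum_i\bbE[\var(X_{1i}\mid X_{2i})]$ and $\bar b=\tfrac1n\sum_i\var(X_{1i}+X_{2i})$.

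The remaining step --- and the only genuinely non-routine one --- is to show that $(\bar a,\bar b)$ lies in the $\rho$-region for a \emph{single} $\rho\in[0,1]$, thereby absorbing the per-symbol power and correlation profile into one parameter. Writing $p_i=\var(X_{1i})$, $q_i=\var(X_{2i})$, $c_i=\cov(X_{1i},X_{2i})$, the power constraints give $\bar p:=\tfrac1n\sum p_i\le S_1$ and $\bar q:=\tfrac1n\sum q_i\le S_2$, and $|c_i|\le\sqrt{p_iq_i}$; since the conditional mean dominates the best linear estimator, $\bbE[\var(X_{1i}\mid X_{2i})]\le p_i-c_i^2/q_i$, so two Cauchy--Schwarz steps yield $\bar a\le\bar p-C^2/\bar q$ and $C:=\tfrac1n\sum c_i\le\sqrt{\bar p\,\bar q}\le\sqrt{S_1S_2}$, while $\bar b=\bar p+\bar q+2C$. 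Choosing $\rho:=\max\{0,\,C/\sqrt{S_1S_2}\}\in[0,1]$ and using $\bar p\le S_1$, $\bar q\le S_2$, one checks directly that $\bar a\le(1-\rho^2)S_1$ and $\bar b\le S_1+S_2+2\rho\sqrt{S_1S_2}$, i.e.\ $\bR\le\bI(\rho)$. Passing to a subsequence along which the relevant quantities converge, invoking continuity of $\bI(\cdot)$ on the compact set $[0,1]$ and closedness of the right-hand side of \eqref{eqn:first_order}, then gives $\calC\subseteq\bigcup_{0\le\rho\le1}\{(R_1,R_2)\in\bbR_+^2:\bR\le\bI(\rho)\}$. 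The main obstacle is thus not any single hard lemma but carrying out this collapse cleanly; the rest is standard superposition coding and Fano's inequality.
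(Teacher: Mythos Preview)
Your proof is correct. The paper does not actually prove Proposition~\ref{prop:first}; it quotes the result from \cite[Ex.~5.18(b)]{elgamal} and \cite{BLW12}, and only remarks that the direct part follows from superposition coding with the jointly Gaussian input $\calN(\bzero,\bSigma(\rho))$ --- precisely the construction you use. Your weak converse is a self-contained Fano/maximum-entropy argument that the paper does not supply; the only nontrivial step, collapsing the per-letter profile $(p_i,q_i,c_i)$ to a single $\rho\in[0,1]$ via the MMSE\,$\le$\,LMMSE inequality and Cauchy--Schwarz, is carried out correctly (in particular, your verification that $\bar a\le\bar p - C^2/\bar q\le S_1 - C^2/S_2$ and $\bar b=\bar p+\bar q+2C\le S_1+S_2+2\rho\sqrt{S_1S_2}$ for $\rho=\max\{0,C/\sqrt{S_1S_2}\}$ goes through). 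The paper later obtains the \emph{strong} converse~\eqref{eqn:str_conv} as a by-product of its second-order analysis (Theorem~\ref{thm:global}), which is a much heavier route; your direct weak-converse argument is the standard and appropriate one for this first-order statement.
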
  
The union on the right is a subset of $\calC(\eps)$ for every $\eps\in (0,1)$. However, only the weak converse is implied by~\eqref{eqn:first_order}. The strong converse has not been demonstrated previously. Thus, a by-product of the derivation of the second-order asymptotics in this paper is the strong converse, allowing us to assert that  for all $\eps\in (0,1)$, 
\begin{align}
\calC=\calC(\eps). \label{eqn:str_conv}
\end{align}
The direct part of Proposition~\ref{prop:first} can be proved using superposition coding~\cite{cover72}, treating $X_2$ as the cloud center and $X_1$ as the satellite codeword. The input distribution to achieve a point on the boundary characterized by some $\rho \in [0,1]$ is a  $2$-dimensional Gaussian   with mean zero and covariance matrix 
\begin{align} \label{eqn:sigmamatrix}
\bSigma(\rho) :=\begin{bmatrix}
S_1 & \rho \sqrt{S_1S_2}\\
\rho \sqrt{S_1S_2} & S_2
\end{bmatrix} .
\end{align}
Thus, the parameter $\rho$ represents the correlation between the two users' codewords.

\section{Global Second-Order Results}\label{sec:global}
In this section, we present inner and outer bounds on $\calC(n,\eps)$.  We begin with some definitions. Let $\rvV(x,y) := \frac{x(y+2)}{2(x+1)(y+1)}$ be the {\em Gaussian cross-dispersion function} and let $\rvV(x) := \rvV(x,x)$ be the  {\em Gaussian dispersion function}~\cite{Sha59b,PPV10,Hayashi09} for a single-user AWGN channel with signal-to-noise ratio $x$.  For fixed $0\le \rho\le 1$, define the {\em information-dispersion matrix}
\begin{align}
\bV(\rho) :=\begin{bmatrix}
V_1(\rho)  & V_{1,12}(\rho)\\ V_{1,12}(\rho)  & V_{12}(\rho) \end{bmatrix}, \label{eqn:inf_disp_matr}
\end{align}
where the elements of the matrix are  
\begin{align}
V_1(\rho) &:= \rvV\big( S_1 (1-\rho^2)\big) ,   \\ 
V_{1,12}(\rho) &:= \rvV\big( S_1 (1-\rho^2), S_1 + S_2 + 2\rho\sqrt{S_1 S_2} \big)  , \\
V_{12}(\rho) &:= \rvV\big(S_{1}+S_{2}+2\rho\sqrt{S_{1}S_{2}} \big)  . 
\end{align}
Let  $(X_1,X_2)\sim P_{X_1, X_2} = \calN(\bzero;\bSigma(\rho))$,   and define $Q_{Y|X_2}$ and $Q_Y$ to be Gaussian distributions   induced by $P_{X_1, X_2}$ and the channel $W$, namely
\begin{align}
Q_{Y|X_2}(y|x_2) & : = \calN\big(y; x_2 (1+ \rho \sqrt{S_1/S_2}) , 1+ S_1(1- \rho ^2) \big),  \label{eqn:out1}\\
Q_{Y}(y) & : = \calN\big(y;0, 1+S_1 + S_2 + 2\rho \sqrt{S_1 S_2}\big). \label{eqn:out2}
\end{align}  
It should be noted that the random variables  $(X_1, X_2)$ and the densities $Q_{Y|X_2}$ and $Q_Y$  all depend on $\rho$; this dependence is suppressed throughout the paper.  The mutual information vector  $\bI(\rho)$ and  information-dispersion matrix $\bV(\rho)$ are  the mean vector and conditional covariance matrix of the information density vector 
\begin{align} 
\bj(X_1, X_2  , Y) := \begin{bmatrix}
j_1(X_1, X_2  , Y) \\ j_{12}(X_1, X_2  , Y)
\end{bmatrix}  = \begin{bmatrix}
\log\dfrac{W(Y|X_1,X_2)}{Q_{Y|X_2}(Y|X_2)} , & \log\dfrac{W(Y|X_1,X_2)}{Q_{Y}(Y)}  
\end{bmatrix}^T.
\label{eqn:info_dens}
\end{align} 
That is, we can write $\bI(\rho)$ and $\bV(\rho)$ as
\begin{align}
\bI(\rho)   &=\bbE\big[\, \bj(X_1,X_2,Y)\big],  \label{eqn:mean_v}\\
  \bV(\rho)   &=\bbE\big[\cov \big(\bj(X_1,X_2,Y) \, \big|\, X_1,X_2 \big)\big].\label{eqn:cov_v}
\end{align}

\begin{figure}
\centering
\begin{overpic}[width =.85\columnwidth]{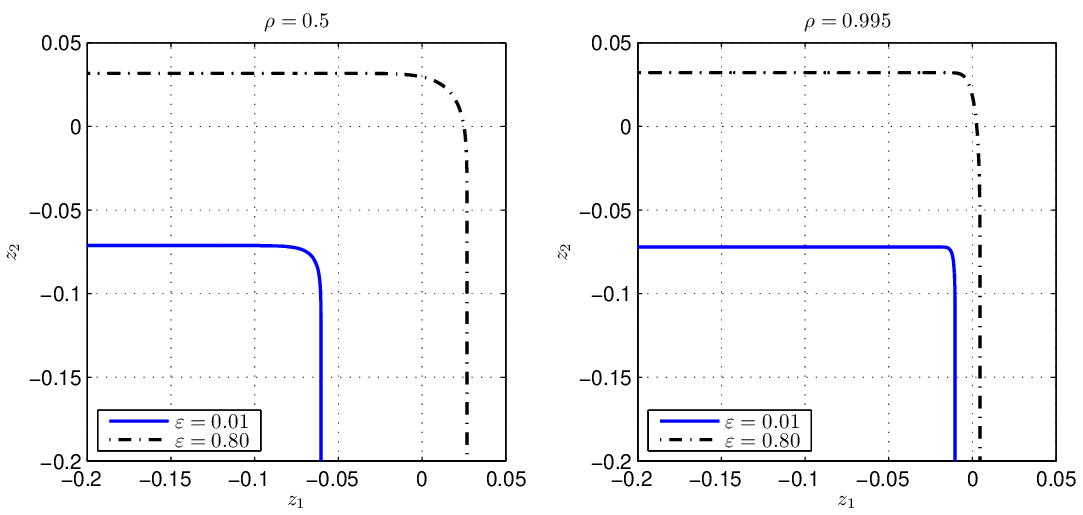}
\put(17,31){\Large $\frac{\Psi^{-1}(\bV(\rho),0.80)}{\sqrt{n}}$ }
\put(67,31){\Large $\frac{\Psi^{-1}(\bV(\rho),0.80)}{\sqrt{n}}$ } 
\put(10,15){\Large $\frac{\Psi^{-1}(\bV(\rho),0.01)}{\sqrt{n}}$ }
\put(60,15){\Large $\frac{\Psi^{-1}(\bV(\rho),0.01)}{\sqrt{n}}$ }
\end{overpic}
  \caption{Illustration of the set $\Psi^{-1}(\bV(\rho),\eps)/\sqrt{n}$ with $n=500$, $S_1=S_2=1$,  $\rho=0.5$ (moderate correlation) and $\rho=0.995$ (high correlation). The information dispersion matrix $\bV(\rho)$ is defined in \eqref{eqn:inf_disp_matr}. In the plots, $\eps$ takes two values, $0.01$ and $0.80$. }
  \label{fig:psi_set}
\end{figure}

For a given point $(z_1, z_2) \in \bbR^2$ and a (non-zero) positive semi-definite matrix $\bV$, define
\begin{align}
\Psi(z_1, z_2;\bV) :=\int_{-\infty}^{z_2}\int_{-\infty}^{z_1}\calN(\bu;\bzero,\bV)\,\rmd \bu,
\end{align}
and for a given $\eps\in (0,1)$, define the set
\begin{align}
\Psi^{-1}(\bV,\eps):=\left\{ (z_1, z_2)\in\bbR^2:\Psi(-z_1,- z_2;\bV) \ge  1-\eps\right\}. \label{eqn:psiinv}
\end{align}
These quantities can be thought of as the generalization of the cumulative distribution function (cdf) of the standard Gaussian $\Phi (z) := \int_{-\infty}^z \calN(u;0,1)\, \rmd u$ and its   inverse $\Phi^{-1}(\eps):=\sup\big\{z\in\bbR :\Phi(-z)\ge 1-\eps\big\}$ to the bivariate case.  For $\eps< \frac{1}{2}$, the points contained in  $\Psi^{-1}(\bV,\eps)$ have negative coordinates.  See Fig.~\ref{fig:psi_set} for an illustration of (scaled versions of) $\Psi^{-1}(\bV(\rho),\eps)$.

Let $\underg(\rho,\eps,n)$ and $\overg(\rho,\eps,n)$ be arbitrary functions of $\rho$, $\eps$ and $n$ for now, and define the inner and outer regions
\begin{align}
\calR_{\mathrm{in}}(n,\eps;\rho)  &:= \bigg\{ (R_1,R_2) \in \bbR^2 \,:\, \bR \in \bI(\rho) + \frac{\Psi^{-1}(\bV(\rho),\eps)}{\sqrt{n}} +  \underg(\rho,\eps,n) \bone\bigg\}, \label{eqn:Rin}   \\
\calR_{\mathrm{out}}(n,\eps;\rho)& := \bigg\{ (R_1,R_2) \in \bbR^2 \,:\, \bR \in \bI(\rho) + \frac{\Psi^{-1}(\bV(\rho),\eps)}{\sqrt{n}} +  \overg(\rho,\eps,n) \bone\bigg\} \label{eqn:Rout} .
\end{align}

\begin{theorem}[Global Bounds on the $(n,\eps)$-Capacity Region] \label{thm:global}
There exist functions $\underg(\rho,\eps,n)$ and $\overg(\rho,\eps,n)$ such that the $(n,\eps)$-capacity region satisfies
\begin{align}
\bigcup_{0\le\rho\le 1}\calR_{\mathrm{in}}(n,\eps;\rho) \subset\calC(n,\eps) \subset\bigcup_{-1\le\rho\le 1}\calR_{\mathrm{out}}(n,\eps;\rho), \label{eqn:unions}
\end{align}
and such that $\underg$ and $\overg$ satisfy the following properties:
\begin{enumerate}
  \item For any $\eps\in(0,1)$ and any sequence $\{\rho_n\}$ converging to some value $\rho \notin \{-1,+1\}$, we have
        \begin{equation}
            \underg(\rho_n,\eps,n) = O\left(\frac{\log n}{n}\right),\quad\mbox{and}\quad \overg(\rho_n,\eps,n) = O\left(\frac{\log n}{n}\right). \label{eqn:thirdorder1}
        \end{equation}
  \item For any $\eps\in(0,1)$ and any sequence $\{\rho_n\}$ with $\rho_n\to\rho\in\{ -1,+1\}$, we have
        \begin{equation}
            \underg(\rho_n,\eps,n) = o\left(\frac{1}{\sqrt{n}}\right),\quad\mbox{and}\quad \overg(\rho_n,\eps,n) = o\left(\frac{1}{\sqrt{n}}\right). \label{eqn:thirdorder2}
        \end{equation}
\end{enumerate}
\end{theorem}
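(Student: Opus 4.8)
\textbf{Proof plan for Theorem~\ref{thm:global}.}

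The plan is to prove the inner bound by a random-coding argument based on superposition coding and a two-dimensional central-limit-theorem (Berry–Esseen) analysis of the information-density vector $\bj(X_1,X_2,Y)$, and to prove the outer bound by a converse built on a judiciously chosen auxiliary output distribution together with the same Gaussian approximation. Throughout, the quantities $\bI(\rho)$, $\bV(\rho)$, $Q_{Y|X_2}$, $Q_Y$ from \eqref{eqn:mi_vec}, \eqref{eqn:inf_disp_matr}, \eqref{eqn:out1}--\eqref{eqn:out2} are the natural objects: $j_1$ governs the error event in which only $\rvM_1$ is decoded wrongly (so $Q_{Y|X_2}$ is the relevant reference measure, since $X_2$ is ``known'' after decoding the cloud centre), while $j_{12}$ governs the event in which both messages are wrong (reference measure $Q_Y$).

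\emph{Inner bound.} First I would fix $\rho\in[0,1]$ and generate a superposition codebook: cloud centres $\bx_2(m_2)$ drawn i.i.d.\ from (a suitably power-shaped version of) $\calN(\bzero,S_2\bI_n)$, and for each $m_2$ satellite codewords $\bx_1(m_1,m_2)$ drawn i.i.d.\ from the conditional Gaussian $\calN(\cdot\,;\,\rho\sqrt{S_1/S_2}\,\bx_2(m_2),\,S_1(1-\rho^2)\bI_n)$, so that the empirical pair distribution matches $\calN(\bzero,\bSigma(\rho))$. To handle the maximal power constraints in \eqref{eqn:power1}--\eqref{eqn:power2} I would use the standard device of drawing codewords on the power shells (or conditioning on the constraint, which costs only a polynomially small probability and hence only an $O(n^{-1}\log n)$ correction to the rate, absorbed into $\underg$). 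A threshold (information-density) decoder that looks for a unique pair $(m_1,m_2)$ with $\sum_i j_1 > n\gamma_1$ and $\sum_i j_{12} > n\gamma_{12}$ then has its error probability controlled by (i) the probability that the true pair fails the threshold, and (ii) union bounds over the three competing-message events. Setting the thresholds at $n\bI(\rho)+\sqrt n\,\bz$ for $\bz\in\Psi^{-1}(\bV(\rho),\eps)$ and invoking the multidimensional Berry–Esseen theorem for the i.i.d.\ sum $\frac{1}{\sqrt n}\sum_i(\bj_i-\bI(\rho))$—whose per-letter conditional covariance is exactly $\bV(\rho)$ by \eqref{eqn:cov_v}—makes term (i) at most $\eps+O(n^{-1/2})$; the remaining terms are killed by choosing $\log M_{1,n}$, $\log M_{1,n}+\log M_{2,n}$ slightly below the thresholds, giving exactly the form \eqref{eqn:Rin}. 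The finite third-moment of $\bj$ (finite because everything is Gaussian/quadratic) is what the Berry–Esseen bound needs.

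\emph{Outer bound.} For the converse I would start from a (wlog maximal-error, by a standard expurgation) $(n,M_{1,n},M_{2,n},\eps)$-code and apply a two-step information-spectrum / meta-converse argument. The key is to test the true channel $W^n$ against the product reference $Q_Y^n$ and against $Q_{Y|X_2}^n$: conditioned on the transmitted pair, the Neyman–Pearson / Verd\'u–Han converse lower-bounds the error probability in terms of $\Pr[\sum_i j_1 \le \log M_{1,n}+\text{const}]$ and $\Pr[\sum_i j_{12}\le \log(M_{1,n}M_{2,n})+\text{const}]$, and the joint statement that \emph{neither} tail can be too small forces $\bR$ into the region $\bI(\rho)+\Psi^{-1}(\bV(\rho),\eps)/\sqrt n + O(\cdot)\bone$ for the $\rho$ equal to the empirical correlation $\frac1n\langle \bx_1,\bx_2\rangle/\sqrt{S_1S_2}$ of the transmitted codewords — which may be negative, hence the union over $\rho\in[-1,1]$ rather than $[0,1]$. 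Because the codewords need not be Gaussian, I would discretize $\rho$ into $O(\sqrt n)$ bins (or use a covering argument on the finite-dimensional power-constrained input), handle each bin by the Berry–Esseen bound applied to the now-i.i.d.-within-the-type information densities, and take a union bound over bins (costing only $O(n^{-1}\log n)$, absorbed into $\overg$). The single-user strong-converse bookkeeping of \cite{PPV10} for the AWGN power constraint would be invoked to control the codewords whose empirical power is strictly below $S_j$.

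\emph{The behaviour of $\underg,\overg$.} Property~1 — the $O(n^{-1}\log n)$ bound for fixed $\rho\in(-1,1)$ — follows by tracking the constants: the Berry–Esseen remainder contributes $O(n^{-1/2})$ to the \emph{error probability}, which translates to $O(n^{-1/2}\cdot n^{-1/2})=O(1/n)$ in the rate after inverting through the (locally Lipschitz, at fixed non-degenerate $\bV(\rho)$) function $\Psi$; the power-shell and union-over-bins corrections are $O(n^{-1}\log n)$; and the decoder-threshold slack can be taken $O(n^{-1}\log n)$. The one genuinely delicate point — and I expect it to be the main obstacle — is Property~2, the \emph{uniform} statement as $\rho_n\to\pm1$: there $\bV(\rho_n)$ becomes singular (its condition number blows up, cf.\ Fig.~\ref{fig:psi_set}), so the inverse map $\Psi^{-1}$ degenerates and the naive Berry–Esseen constant (which scales like $\|\bV^{-1}\|^{3/2}$ times a third moment) explodes. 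The fix is to \emph{not} invert a near-singular $\Psi$: when $\rho_n\to 1$ the satellite variance $S_1(1-\rho_n^2)\to0$, so $j_1\to 0$ and the first-coordinate constraint degenerates to the deterministic bound $R_1\le I_1(\rho_n)$ with a vanishing dispersion; one then only needs a \emph{one-dimensional} Berry–Esseen estimate on $j_{12}$ (whose variance $V_{12}(\rho_n)$ stays bounded away from $0$), plus a crude bound showing the $j_1$-fluctuations are $o(\sqrt n)$ and hence contribute only to $\underg,\overg$ at the $o(n^{-1/2})$ level. Making this degeneration argument rigorous and uniform in the sequence $\rho_n$, and checking it also works from the converse side, is the technical heart; everything else is the standard superposition-coding-plus-CLT pipeline.
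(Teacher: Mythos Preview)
Your plan is essentially the paper's own: superposition coding plus a multivariate Berry--Esseen analysis for the direct part, a Verd\'u--Han-type bound with product auxiliary outputs $Q_{Y|X_2}^n,Q_Y^n$ together with a correlation-type-class reduction for the converse, and a separate one-dimensional treatment when $\rho_n\to\pm1$ because $\bV(\rho_n)$ degenerates. Two technical points deserve correction. First, $O(\sqrt n)$ bins is too coarse: a bin of width $\Theta(n^{-1/2})$ allows the mean of $\frac1n\sum_i\bj(x_{1i},x_{2i},Y_i)$ to drift from $\bI(\hat\rho)$ by $\Theta(n^{-1/2})$, which would force $\overg=\Theta(n^{-1/2})$ and destroy property~1; the paper uses $2n{+}1$ classes of width $1/n$, and the Csisz\'ar--K\"orner reduction to a constant-type subcode still costs only $O(n^{-1}\log n)$ in rate. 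Second, within a type class the summands $\bj(x_{1i},x_{2i},Y_i)$ are independent but not identically distributed; the paper handles this by showing that the law of $\sum_i\bj$ depends on $(\bx_1,\bx_2)$ only through $\|\bx_1\|_2,\|\bx_2\|_2,\langle\bx_1,\bx_2\rangle$, and then replacing the actual pair by a canonical one for which the per-letter third moments are uniformly bounded---and, crucially, for which the third moment of $j_1$ is $O(1-\rho_n)$ as $\rho_n\to1$, matching the $\Theta(1-\rho_n)$ variance so the univariate Berry--Esseen remainder $T/V^{3/2}\sqrt n=O([n(1-\rho_n)]^{-1/2})$ still vanishes when $1-\rho_n=\omega(1/n)$ (with Chebyshev covering $1-\rho_n=O(1/n)$). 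On the direct side the paper uses the cost-constrained ensemble of Scarlett \emph{et al.}\ (auxiliary costs equal to the entries of $\bbE[\bj\,|\,x_1,x_2]$, $\cov[\bj\,|\,x_1,x_2]$, and the third-moment functionals, so every generated pair automatically has the right empirical moments) rather than raw shell coding, but this is a packaging choice and your approach would reach the same conclusion with more ad hoc moment bookkeeping.
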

The proof of Theorem~\ref{thm:global} is provided in Section~\ref{sec:prf_global}. We remark that even though the union for the outer bound is taken over $\rho\in [-1,1]$, only the values $\rho\in[0,1]$ will play a role in establishing the local asymptotics in Section~\ref{sec:local}, since negative values of $\rho$ are not even first-order optimal, i.e. they fail to achieve a point on the boundary of the capacity region.

Note that we do not claim the remainder terms in \eqref{eqn:thirdorder1}--\eqref{eqn:thirdorder2} to be uniform
in $\{\rho_n\}$; such uniformity will not be required in establishing our main
local result below.  On the other hand, it is crucial that values of $\rho$ varying with $n$ are 
handled (in contrast, most existing global results in other settings consider fixed input distributions).

\section{Local Second-Order Coding Rates}\label{sec:local}
In this section, we present our main result, namely, the characterization of the $(\eps,R_1^*,R_2^*)$-optimal second-order coding rate region $\calL(\eps;R_1^*,R_2^*)$ (see Definition \ref{def;second}), where $(R_1^*,R_2^*)$ is an arbitrary point on the boundary of $\calC$. Our result is stated in terms of the derivative of the mutual information vector  with respect to $\rho$, namely 
\begin{align}
\bD(\rho)=\begin{bmatrix}
D_1(\rho) \\ D_{12}(\rho) 
\end{bmatrix} :=\frac{\rmd}{\rmd\rho}\begin{bmatrix}
I_1(\rho) \\ I_{12}(\rho) 
\end{bmatrix},  \label{eqn:derI} 
\end{align}
where the individual derivatives are given by
\begin{align}
\frac{\rmd I_1(\rho)}{\rmd\rho}  & =  \frac{-S_1\rho}{1+S_1(1-\rho^2)},\label{eqn:dvalues0} \\
 \frac{\rmd I_{12}(\rho)}{\rmd\rho}  & = \frac{\sqrt{S_1 S_2}}{1+S_1 + S_2 + 2\rho\sqrt{S_1S_2}} . \label{eqn:dvalues}
\end{align}
For a vector $\bv=(v_1, v_2)\in\bbR^2$, we define the {\em down-set} of $\bv$ as
\begin{equation}
\bv^- := \{(w_1,  w_2) \in \bbR^2 : w_1 \le v_1, w_2\le v_2\}. \label{eqn:minus_notation}
\end{equation}

\begin{theorem}[Optimal Second-Order Coding Rate Region] \label{thm:local}
Depending on $(R_1^*, R_2^*)$, we have the following three cases:
\begin{enumerate}
\item[(i)] If   $R_1^*=I_1(0)$ and  $R_1^*+R_2^* \le I_{12}(0)$ (vertical segment of the boundary corresponding to  $\rho=0$), then
\begin{align}
\calL(\eps;R_1^*,R_2^*) = \left\{ (L_1, L_2) \in \bbR^2 : L_1 \le  \sqrt{V_1(0)} \Phi^{-1}(\eps)\right\} \label{eqn:second1} .
\end{align}
\item[(ii)]  If $R_1^*=I_1(\rho)$ and $R_1^*+R_2^*=I_{12}(\rho)$ (curved segment of the boundary corresponding to  $0 <\rho <1$), then
\begin{align}
\calL(\eps;R_1^*,R_2^*) =    \left\{ (L_1, L_2) \in \bbR^2 : \begin{bmatrix}
L_1 \\ L_1+ L_2 
\end{bmatrix} \in  \bigcup_{\beta\in \bbR}\bigg\{  \beta\, \bD(\rho) +\Psi^{-1}(\bV(\rho),\eps)  \bigg\}\right\}. \label{eqn:second2}
\end{align}
\item[(iii)]  If   $R_1^*=0$ and  $R_1^*+R_2^* = I_{12}(1)$ (point on the vertical axis corresponding to  $\rho=1$), then
\begin{equation}
    \calL(\eps;R_1^*,R_2^*) =     \left\{(L_1,L_2)\in\bbR^2 \,:\, \begin{bmatrix}
L_1 \\ L_1+ L_2 
\end{bmatrix} \in  \bigcup_{\beta\le0}   \bigg\{ \beta\, \bD(1)  + \begin{bmatrix}
0 \\ \sqrt{ V_{12}(1)}\Phi^{-1}(\eps)
\end{bmatrix}^-\bigg\}    \right\} .\label{eqn:second3}
\end{equation}
\end{enumerate}
\end{theorem}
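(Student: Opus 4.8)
\emph{Strategy.} The plan is to deduce Theorem~\ref{thm:local} from the global bounds of Theorem~\ref{thm:global} by ``zooming in'' near the boundary point $\bR^* := [R_1^*,\,R_1^*+R_2^*]^T$ at scale $\sqrt{n}$. Two ingredients suffice: (a) the first-order Taylor expansion $\bI(\rho+\gamma)=\bI(\rho)+\gamma\,\bD(\rho)+O(\gamma^2)\,\bone$, uniform on compacta, coming from \eqref{eqn:dvalues0}--\eqref{eqn:dvalues}; and (b) joint continuity of $(z_1,z_2,\bV)\mapsto\Psi(z_1,z_2;\bV)$ at positive-definite $\bV$, together with the facts that $\Psi^{-1}(\bV,\eps)$ is a down-set (hence $\calR_{\mathrm{in}},\calR_{\mathrm{out}}$ and any $\beta$-union of translates of $\Psi^{-1}$ are down-sets in the $(R_1,R_1+R_2)$-coordinates) and that $\bV(\rho)\succ0$ for $\rho\in[0,1)$ while $\bV(1)=\diag(0,V_{12}(1))$ is singular. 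I would carry out the curved case (ii) in full and then describe the modifications for the degenerate endpoints (i) (where $D_1(0)=0$) and (iii) (where $\bV(1)$ is singular and $I_1(1)=0$).

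\emph{Achievability.} Given $(L_1,L_2)$ in the region of case (ii), pick $\beta\in\bbR$ with $\bz:=[L_1,\,L_1+L_2]^T-\beta\,\bD(\rho)\in\Psi^{-1}(\bV(\rho),\eps)$, and fix $\delta>0$. Take $\rho_n:=\rho+\beta/\sqrt{n}$ and use $\calR_{\mathrm{in}}(n,\eps;\rho_n)\subseteq\calC(n,\eps)$. By the Taylor expansion $\bI(\rho_n)=\bR^*+\beta\,\bD(\rho)/\sqrt{n}+O(1/n)\,\bone$; by part~1 of Theorem~\ref{thm:global}, $\underg(\rho_n,\eps,n)=o(1/\sqrt{n})$; and since $\bz-\delta\bone$ lies strictly inside $\Psi^{-1}(\bV(\rho),\eps)$, continuity of $\Psi$ in $\bV$ gives $\bz-\delta\bone\in\Psi^{-1}(\bV(\rho_n),\eps)$ for all large $n$. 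Hence the rate point $[R_{1,n},\,R_{1,n}+R_{2,n}]^T=\bR^*+([L_1,\,L_1+L_2]^T-\delta\bone)/\sqrt{n}$ lies in $\calR_{\mathrm{in}}(n,\eps;\rho_n)$, so $(R_{1,n},R_{2,n})$ is $(n,\eps)$-achievable; the induced codes give $\sqrt{n}(R_{1,n}-R_1^*)=L_1-\delta$ and $\sqrt{n}(R_{2,n}-R_2^*)=L_2$, i.e.\ $(L_1-\delta,L_2)$ is $(\eps,R_1^*,R_2^*)$-achievable. Letting $\delta\downarrow0$ and using closedness of $\calL(\eps;R_1^*,R_2^*)$ yields ``$\supseteq$''. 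In case (iii) one restricts to $\beta\le0$ (so $\rho_n=1+\beta/\sqrt{n}\le1$), uses part~2 of Theorem~\ref{thm:global} for $\underg$, and notes that the target shift has first coordinate $w_1-\delta<0$, so the limit of the bivariate cdf at the singular $\bV(1)$ is taken at a continuity point. In case (i) one takes $\rho_n=(\log n)/\sqrt{n}\to0$: since $D_1(0)=0$ the $R_1$-constraint moves by only $O((\log n)^2/n)=o(1/\sqrt{n})$ while the sum-rate constraint loosens by $\Theta((\log n)/\sqrt{n})$, which is precisely why $L_2$ may be arbitrary once $L_1\le\sqrt{V_1(0)}\Phi^{-1}(\eps)$.

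\emph{Converse.} Let $(L_1,L_2)$ be $(\eps,R_1^*,R_2^*)$-achievable, realized by codes with rate vectors $\bR_n=[R_{1,n},\,R_{1,n}+R_{2,n}]^T$ and errors $\eps_n$, $\limsup_n\eps_n\le\eps$. Fix $\delta>0$; by monotonicity of $\calC(n,\cdot)$ and the outer bound of Theorem~\ref{thm:global} at level $\eps+\delta$, for all large $n$ there are $\rho_n'\in[-1,1]$ and $\bz_n\in\Psi^{-1}(\bV(\rho_n'),\eps+\delta)$ with $\bR_n=\bI(\rho_n')+\bz_n/\sqrt{n}+\overg(\rho_n',\eps+\delta,n)\,\bone$. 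Bounding $\Psi$ by its marginals gives $z_{n,1}\le\sqrt{V_1(\rho_n')}\,\Phi^{-1}(\eps+\delta)$ and $z_{n,2}\le\sqrt{V_{12}(\rho_n')}\,\Phi^{-1}(\eps+\delta)$, so $\bz_n/\sqrt{n}\to\bzero$ and $\bR_n\to\bR^*$; any subsequential limit $\rho_{\infty}$ of $\rho_n'$ then satisfies $\bR^*\le\bI(\rho_{\infty})$, which by strict monotonicity of $I_1$ (decreasing) and $I_{12}$ (increasing) on $[0,1]$ forces $\rho_{\infty}=\rho$ (in case (iii), $\rho_{\infty}=1$, after ruling out $\rho_{\infty}=-1$ via $I_{12}(-1)<I_{12}(1)$); hence $\rho_n'\to\rho$. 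Writing $\sqrt{n}(\bR_n-\bR^*)=\sqrt{n}(\rho_n'-\rho)\,\bD(\rho)+o\big(\sqrt{n}\,|\rho_n'-\rho|\big)\,\bone+\bz_n+o(1)\,\bone$ and comparing with $\liminf_n\sqrt{n}(R_{j,n}-R_j^*)\ge L_j$: if $\sqrt{n}(\rho_n'-\rho)\to+\infty$ the first coordinate would tend to $-\infty$ (since $D_1(\rho)<0$), a contradiction, and $D_{12}(\rho)>0$ similarly excludes $-\infty$. So $\sqrt{n}(\rho_n'-\rho)$ is bounded; along a subsequence where it converges to $\beta\in\bbR$ and $\bz_n\to\bz$, taking limits in $\Psi(-z_{n,1},-z_{n,2};\bV(\rho_n'))\ge1-\eps_n$ gives $\bz\in\Psi^{-1}(\bV(\rho),\eps+\delta)$, while the displayed identity gives $[L_1,\,L_1+L_2]^T\le\beta\,\bD(\rho)+\bz$; the down-set property then yields $[L_1,\,L_1+L_2]^T\in\beta\,\bD(\rho)+\Psi^{-1}(\bV(\rho),\eps+\delta)$, and $\delta\downarrow0$ (right-continuity of $\Phi^{-1}$) closes case (ii). Case (i) is the same, stopping at the first coordinate: $L_1\le\limsup_n z_{n,1}\le\sqrt{V_1(0)}\Phi^{-1}(\eps)$ since $V_1(\rho_n')\to V_1(0)$. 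In case (iii) one additionally uses $\rho_n'\le1$ (so $\beta\le0$), $D_1(1)<0$, and $V_1(\rho_n')\to0$ (whence $\limsup_n z_{n,1}\le0$), and evaluates the limiting degenerate cdf at the continuity point $z_1<0$ (bounding by the marginal when $z_1=0$), placing $\bz$ in $(0,\,\sqrt{V_{12}(1)}\Phi^{-1}(\eps))^{-}$.

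\emph{Main obstacle.} The Taylor expansion, the continuity/closedness bookkeeping, and the $\eps\mapsto\eps+\delta$ reductions are routine. The crux is the converse step that upgrades $\rho_n'\to\rho$ to the rate statement $\rho_n'-\rho=O(1/\sqrt{n})$: this is exactly what supplies the free parameter $\beta$, hence the union over $\beta\in\bbR$ in \eqref{eqn:second2}, and it relies essentially on the sign pattern $D_1(\rho)<0<D_{12}(\rho)$ on the strictly concave part of the boundary. The only other delicate point is that in case (iii) (and mildly in (i)) $\bV(\rho_n')$ tends to a singular limit, so the continuity arguments for $\Psi$ must be applied at continuity points of the limiting degenerate cdf; the geometry ensures the relevant evaluation points avoid the discontinuity set $\{z_1=0\}$.
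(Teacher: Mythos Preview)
Your proposal is correct and follows essentially the same route as the paper: deduce the local result from the global bounds via $\rho_n=\rho+\beta/\sqrt n$ for achievability and, for the converse, establish $\rho_n'\to\rho$ and then upgrade to $\rho_n'-\rho=O(1/\sqrt n)$ using the sign pattern $D_1(\rho)<0<D_{12}(\rho)$, followed by Bolzano--Weierstrass to extract $\beta$. The only cosmetic differences are that the paper handles the case~(i) converse by a direct reduction to the single-user dispersion result (setting $M_{2,n}=1$) rather than through the global bound, and controls the singular limit in case~(iii) via an explicit sandwich lemma for $\Psi^{-1}(\bV(\hrho_n),\eps)$ as $\hrho_n\to1$ rather than through weak-convergence continuity arguments; your $\eps\mapsto\eps+\delta$ slack is also a welcome bit of extra care that the paper leaves implicit.
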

The proof of Theorem~\ref{thm:local} is provided in Section~\ref{sec:prf_local}. It leverages on the global second-order result in Theorem~\ref{thm:global}.

\subsection{Discussion} \label{sec:discuss}
Observe that in case (i), the second-order region is simply characterized by a scalar dispersion term $V_1(0)$ and the inverse of the Gaussian cdf $\Phi^{-1}$.  Roughly speaking, in this part of the boundary, there is effectively only a single rate constraint in terms of $R_1$, since we are operating ``far away'' from the sum rate constraint. This results in a large deviations-type event for the sum rate constraint which has no bearing on second-order asymptotics; see further discussions in \cite{TK12,Nom13} and \cite{Haim12}.

Cases (ii)--(iii) are more interesting, and their proofs are non-trivial. As in Nomura-Han~\cite{Nom13} and Tan-Kosut~\cite{TK12}, the second-order asymptotics for case (ii) depend on the dispersion matrix $\bV(\rho)$ and the $2$-dimensional analogue of the  inverse of the Gaussian cdf $\Psi^{-1}$. However, in our setting, the expression containing $\Psi^{-1}$ alone (i.e. the expression obtained by setting $\beta=0$ in \eqref{eqn:second2}) corresponds to only considering the unique   input distribution $\calN(\bzero,\bSigma(\rho))$ achieving the point $(R_1^*, R_2^*)=(I_1(\rho), I_{12}(\rho)-I_1(\rho))$. As discussed in the introduction and the caption of Fig.~\ref{fig:cr}, this is  {\em not} sufficient to achieve all second-order coding rates, since there are non-empty regions within the capacity region that are not contained in the trapezoid of rate pairs achievable using $\calN(\bzero,\bSigma(\rho))$. Using a sequence of input distributions parametrized by $\rho_n$ converging to $\rho$ with rate $\Theta\big(\frac{1}{\sqrt{n}}\big)$, we obtain the Taylor expansion in~\eqref{eqn:back_envelope}, yielding the gradient term $\bD(\rho)$. 


For the converse, we consider an arbitrary sequence of codes with rate pairs $\{(R_{1,n}, R_{2,n})\}_{n\in\bbN}$ converging to $(I_1(\rho), I_{12}(\rho)-I_1(\rho))$ with second-order behavior given by \eqref{eqn:second_def}. From the global result, we know  $[R_{1,n}, R_{1,n}+R_{2,n}]^T \in \calR_{\mathrm{out}}(n,\eps;\rho_n)$ for some sequence $\{\rho_n\}$.  Combining this with the definition of the second-order coding rate in~\eqref{eqn:second_def}, we establish that $\rho_n \to \rho$.
The final result readily follows provided that $\rho_n=\rho+O\big(\frac{1}{\sqrt{n}}\big)$, and the remaining cases are shown to have no effect on $\calL$.

A similar discussion holds true for case (iii); the main differences are that the covariance matrix is singular, and that the union in \eqref{eqn:second3} is taken over $\beta\le0$ only, since $\rho_n$ can only approach one from below.

\begin{figure}
\centering
        \includegraphics[width=0.5\paperwidth]{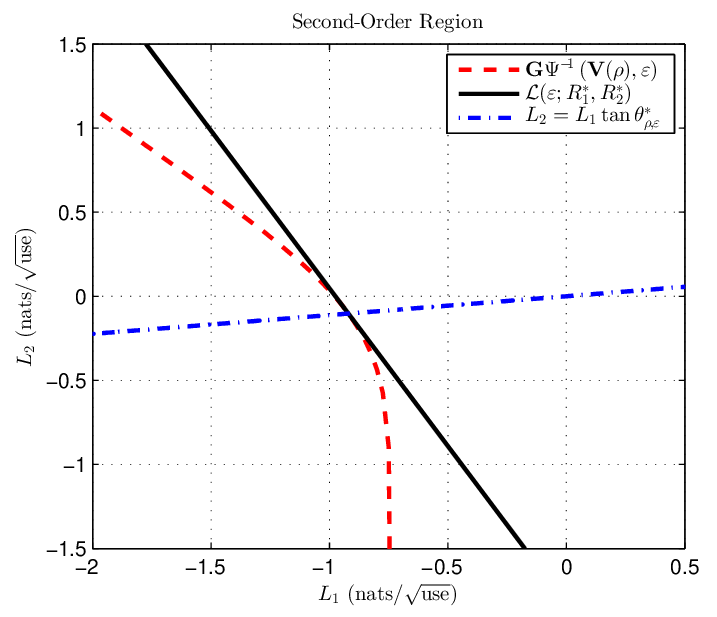} 
    \caption{Second-order coding rates in nats/$\sqrt{\mbox{use}}$  with $S_1=S_2=1$, $\rho=\frac{1}{2}$
             and $\eps = 0.1$. The regions $\bG\Psi^{-1}(\bV(\rho),\eps)$  (with $\bG := [1, 0; -1 , 1]$) and $\calL(\eps;R_1^*,R_2^*)$ are to the bottom left of the boundaries. We also plot the line $L_2=L_1\tan\theta^*_{\rho,\eps}$, where $\theta^*_{\rho,\eps}$ is the unique angle  $\theta$ for which the intersection of the boundary of $\calL(\eps;R_1^*,R_2^*)$ and the line $L_2=L_1\tan\theta$ coincides with the boundary of  $\bG\Psi^{-1}(\bV(\rho),\eps)$. \label{fig:local_region}}
\end{figure}

\subsection{Second-Order Asymptotics for a Given Angle of Approach} \label{sec:directions}
Here we study the second-order behavior when a point on the boundary is approached from a given 
angle, as was done in Tan-Kosut~\cite{TK12}.  We focus on the most interesting case in Theorem 
\ref{thm:local}, namely, case (ii) corresponding to $\rho\in(0,1)$.  Case (iii) can be handled 
similarly, and in case (i) the angle of approach is of little interest, since $L_2$ can be arbitrary. 
 
First, we present an alternative expression for the set $\calL=\calL(\eps;R_1^*,R_2^*)$ given in \eqref{eqn:second2} with $R_1^*=I_1(\rho)$ and $R_1^*+R_2^*=I_{12}(\rho)$ for some $\rho\in(0,1)$. 
It is easily seen that $(L_1,L_2)\in\calL$ implies $(L_1 + \beta D_1(\rho),L_2 + \beta D_2(\rho))\in\calL$,
where $D_2(\rho) := D_{12}(\rho) - D_1(\rho)$.  It follows
that $\calL$ equals the set of all points lying below a straight line with
slope $\frac{D_2(\rho)}{D_1(\rho)}$ which intersects the boundary of $\bG\Psi^{-1}(\bV(\rho),\eps)$, where $\bG := [1, 0; -1 , 1]$ is the invertible matrix that transforms the coordinate system from $[L_1, L_1+L_2]^T$ to $[L_1, L_2]^T$.  (In other words, $\bG\Psi^{-1}(\bV(\rho),\eps)$ is as in \eqref{eqn:second2}, but with the union removed and $\beta$ set to $0$.) 
In light of the preceding discussion,
\begin{equation}
    \calL(\eps;R_1^*,R_2^*) = \Big\{(L_1,L_2) \,:\, L_2 \le a_\rho L_1 + b_{ \rho,\eps} \Big\},
\end{equation}
where  
\begin{equation}
a_\rho := \frac{D_2(\rho)}{D_1(\rho)},\quad\mbox{and}\quad    b_{ \rho,\eps}:= \inf\Big\{b \,:\, \big(L_1,a_\rho L_1 + b\big) \in \bG\Psi^{-1}(\bV(\rho),\eps) \text{ for some } L_1 \in \bbR \Big\}.
\end{equation}
We provide an example in Fig.~\ref{fig:local_region} with the parameters $S_1=S_2=1$, $\rho=\frac{1}{2}$
and $\eps = 0.1$.  Since $\eps < \frac{1}{2}$, the boundary point $(R_1^*, R_2^*)$ is approached from the inside (see Fig.~\ref{fig:psi_set}, where for $\eps<\frac{1}{2}$, the set $\Psi^{-1}(\bV,\eps)$ only contains points with negative coordinates).

 \begin{figure}
\centering
        \includegraphics[width=0.5\paperwidth]{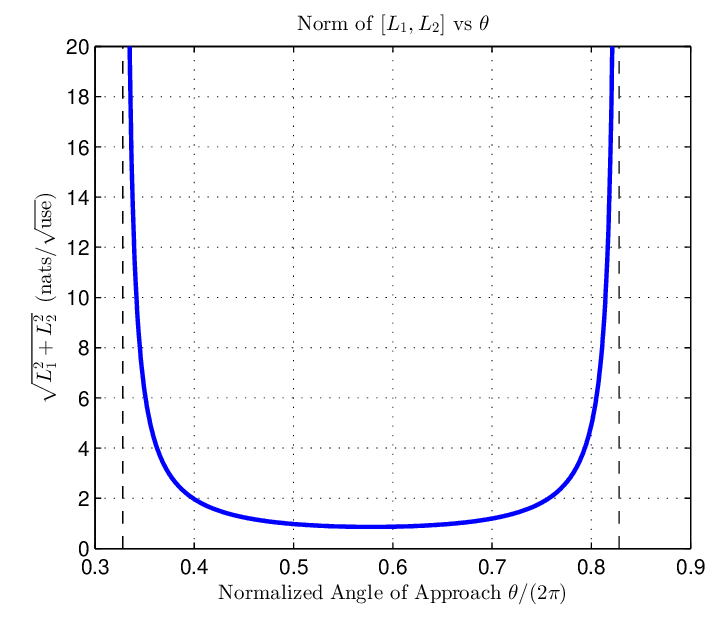} 
    \caption{Plot of $\sqrt{L_1^2+L_2^2}$ against the angle of approach $\theta$ with the same parameters as in Fig.~\ref{fig:local_region}. The second-order rates $L_1, L_2$, as functions of $\theta$, are defined in \eqref{eqn:intersection}.  Here, $\pi+\arctan(a_\rho) \approx 0.328\times 2\pi$ and $2\pi+\arctan(a_\rho) \approx 0.828\times 2\pi$ are the critical angles where  $\sqrt{L_1^2+L_2^2}$ diverges. \label{fig:local_rate}}
\end{figure}

Given the gradient $a_\rho$, the offset $b_{\rho,\eps}$, and an angle $\theta$ (measured with respect to the horizontal axis), we seek the pair $(L_1,L_2)$ on the boundary of $\calL(\eps;R_1^*,R_2^*)$ such that $L_2 = L_1\tan\theta$.  It is easily seen that this point is obtained by solving for the intersection of the line $L_2 = a_{\rho } L_1 + b_{\rho,\eps}$ with $L_2 = L_1\tan\theta$. The two lines coincide when 
\begin{align}
L_1 = \frac{b_{ \rho,\eps}}{\tan\theta-a_\rho},\quad\mbox{and}\quad L_2 = \frac{b_{ \rho,\eps}\tan\theta}{\tan\theta-a_\rho}. \label{eqn:intersection}
\end{align}
In Fig.~\ref{fig:local_region}, we see that there is only a single angle $\theta^*_{\rho,\eps}\approx  3.253\mbox{ rads}$   for which the  point of intersection in \eqref{eqn:intersection} is also on the boundary of  $\bG\Psi^{-1}(\bV(\rho),\eps)$, yielding $(L_1,L_2)\approx (-0.920,-0.103)$.  In other words, there is only one angle for which coding with a fixed input distribution   $\calN(\bzero,\bV(\rho))$ is optimal in the second-order sense (i.e.\ for which the added term $\beta \bD(\rho)$ in \eqref{eqn:second2} is of no additional help and $\beta=0$ is optimal).  For all the other angles, we should choose a non-zero coefficient $\beta$, which corresponds to choosing an input distribution that varies with $n$. 
 
Finally, in Fig.~\ref{fig:local_rate}, we plot the norm of the vector of second-order rates $[L_1, L_2]^T$ in \eqref{eqn:intersection} against $\theta$, the angle of approach. For $\eps<\frac{1}{2}$, the point  $[L_1, L_2]^T$ may be interpreted as that corresponding to the ``smallest backoff'' from the first-order  optimal rates.\footnote{There may be some imprecision in the use of the word ``backoff'' here as for angles in the second (resp.\ fourth) quadrant, $L_2$ (resp.\ $L_1$) is positive.  On the other hand, one could generally refer to ``backoff'' as moving in some inward direction relative to the capacity region boundary, even if it is in a direction where one of the second-order rates increases. The same goes for the term ``addition''. } Thus, $\sqrt{L_1^2+L_2^2}$ is a measure of the total backoff. For $\eps>\frac{1}{2}$,  $[L_1, L_2]^T$ corresponds to the ``largest addition'' to the first-order rates. It is noted that the norm tends to infinity when the angle tends to $\pi+\arctan(a_\rho)$  (from above) or  $2\pi+\arctan(a_\rho)$ (from below). This corresponds to an approach almost parallel to the gradient at the point on the boundary parametrized by $\rho$. A similar phenomenon was observed for the Slepian-Wolf problem~\cite{TK12}.

\section{Concluding Remarks}
We have identified the optimal second-order coding rate region of the Gaussian MAC with 
degraded message sets. There are two reasons as to why the analysis here is more tractable 
vis-\`a-vis  finite blocklength or second-order analysis for the the discrete memoryless MAC 
(DM-MAC) studied extensively in \cite{TK12, huang12, Mol12,  Mou13,  Scarlett13b,  Haim12}. 
Gaussianity allows us to identify the boundary of the capacity region and associate each 
point on the boundary with an input distribution parametrized by $\rho$. For the DM-MAC, 
one needs to take the convex closure of the union over input distributions $P_{X_1 , X_2}$ 
to define the capacity region~\cite[Sec.~4.5]{elgamal}, and hence the boundary points are
more difficult to characterize. In addition, one needs to ensure  in a converse proof 
(possibly related to the {\em wringing technique} of Ahlswede~\cite{Ahl82})  that the 
codewords pairs are almost orthogonal. By leveraging on the assumption of degraded message 
sets, we circumvent this requirement. 

For future investigations, we note that the Gaussian broadcast channel~\cite[Sec.~5.5]{elgamal} is a problem which is similar to the Gaussian MAC with degraded message sets (e.g. both require superposition coding, and each point on the boundary is achieved by a unique input distribution). As such, we expect that some of the  second-order analysis  techniques  contained herein may be applicable to the  Gaussian broadcast channel. The authors have recently adapted the techniques herein for the discrete memoryless MAC with degraded message sets \cite{ScaTan14}, again obtaining a conclusive characterization of the second-order rate region. 

\section{Proof of Theorem~\ref{thm:global}: Global Second-Order Result} \label{sec:prf_global}

\subsection{Converse Part}  \label{sec:prf_global_converse}
We first prove the outer bound in~\eqref{eqn:unions}.  The analysis is split into seven steps.

\subsubsection{A Reduction from Maximal to Equal Power Constraints} \label{sec:exact}

Let $\calC_{\mathrm{eq}}(n,\eps)$ be the $(n,\eps)$-capacity region in the case that~\eqref{eqn:power1} and~\eqref{eqn:power2} are equality constraints, i.e., $\|f_{1,n}(m_1, m_2)\|_2^2=nS_1$  and $\|f_{2,n}(m_2)\|_2^2=nS_2$  for all $(m_1, m_2)$.  We claim that 
\begin{align}
\calC_{\mathrm{eq}}(n,\eps)\subset\calC (n,\eps)\subset\calC_{\mathrm{eq}}(n+1,\eps). \label{eqn:eqreduction}
\end{align}
The lower bound is obvious, because the equal power constraint is more stringent than the maximal power constraint. The upper bound follows by noting that the decoder for the length-$(n+1)$ code can ignore the last symbol, which can be chosen to equalize the powers. 

It follows from \eqref{eqn:eqreduction} that for the purpose of second-order asymptotics, $\calC_{\mathrm{eq}}(n,\eps)$ and $\calC(n,\eps)$ are equivalent. This argument was also used in \cite[Lem.~39]{PPV10} and \cite[Sec.~XIII]{Sha59b}. Henceforth, we     assume that all codewords $(\bx_1, \bx_2)$ have normalized powers {\em exactly} equal to $(S_1,S_2)$.  

\subsubsection{A Reduction from Average to Maximal Error Probability} \label{sec:max_error}

Let $\calC_{\mathrm{max}}(n,\eps)$ be the $(n,\eps)$-capacity region in the
case that, along with the replacements in the previous step,
\eqref{eqn:error_prob1} is replaced by
\begin{equation}
    \max_{m_1\in[M_{1,n}],m_2\in[M_{2,n}]}\Pr\big( (\rvM_1,\rvM_2) \ne (\hat{\rvM}_1,\hat{\rvM}_2) \,\big|\, (\rvM_1,\rvM_2)=(m_1,m_2) \big)  \le\eps_n.
\end{equation}
That is, the average error probability is replaced by the maximal
error probability.  Here we show that $\calC(n,\eps)$
and $\calC_{\mathrm{max}}(n,\eps)$ are equivalent for the purposes
of second-order asymptotics, thus allowing us to focus on the maximal
error probability for the converse proof.  

By combining ideas from Csisz\'ar-K\"orner~\cite[Lem.~16.2]{Csi97} and 
Polyanskiy \cite[Sec 3.4.4]{Pol10}, we will start with
the average-error code, and use an expurgation argument to 
obtain a maximal-error code having the same asymptotic rates
and error probability.  Let $\eps_n(m_1,m_2)$ be the error 
probability given that the message pair $(m_1,m_2)$ is encoded, and let
\begin{equation}
    \eps_n(m_2) := \frac{1}{M_{1,n}}\sum_{m_1=1}^{M_{1,n}}\eps_n(m_1,m_2) \label{eq:eps_m2}
\end{equation}
be the error probability for message $m_2$, averaged over $\rvM_1$.

Consider a sequence of codes with message sets $\calM_{1,n}$ and 
$\calM_{2,n}$, having an error probability not exceeding $\eps_n$.
Let $\tilde{\calM}_{2,n}$ contain the fraction $\frac{1}{\sqrt n}$ of 
the messages $m_2\in\calM_{2,n}$ with the highest values of $\eps_n(m_2)$
(here and subsequently, we ignore rounding issues, since these do not 
affect the argument).  It follows that 
\begin{equation}
\eps_n(m_2) \le \frac{\eps_n}{1-\frac{1}{\sqrt{n}}}
\end{equation}
since otherwise the codewords not appearing in $\tilde{\calM}_{2,n}$
would contribute more than $\eps_n$ to the average error probability 
of the original code, causing a contradiction.

Before proceeding, we observe the simple fact that for each $m_2$, we can
arbitrarily re-arrange the codewords $\{\bx_1(m_1,m_2)\}_{m_1=1}^{M_{1,n}}$ (e.g.~interchanging
the codewords corresponding to two different $m_1$ values) without changing the
average or maximal error probability.  In contrast, for the standard MAC,
$\bx_1$ can only depend on $m_1$, meaning that such a re-arrangement cannot be done
\emph{separately} for each value of $m_2$.  Thus, the assumption of degraded 
message sets is crucial in the following arguments. This should be unsurprising,
since the capacity regions for the average and maximal error differ in general
for the standard MAC \cite{dueck}.

For each $m_2\in\tilde{\calM}_{2,n}$, let $\tilde{\calM}_{1,n}(m_2)$
contain the fraction $\frac{1}{\sqrt n}$ of the messages $m_1$ with the
highest values of $\eps_n(m_1,m_2)$.  By relabeling the codewords in
accordance with the previous paragraph if necessary, we can assume that
$\tilde{\calM}_{1,n} := \tilde{\calM}_{1,n}(m_2)$ is the same for each $m_2$.  
Repeating the argument following \eqref{eq:eps_m2}, we conclude that  
\begin{equation}
    \eps_n(m_1,m_2) \le \frac{\eps_n(m_2)}{1-\frac{1}{\sqrt{n}}} \le \frac{\eps_n}{\big(1-\frac{1}{\sqrt{n}}\big)^2} = \eps_n + O\left(\frac{1}{\sqrt{n}}\right) \label{eq:eps_reduction}
\end{equation}
for all $m_1\in\tilde{\calM}_{1,n}$ and $m_2\in\tilde{\calM}_{2,n}$.  Moreover,
we have by construction that 
\begin{align}
    \frac{1}{n}\log\big| \tilde{\calM}_{j,n}\big| = \frac{1}{n}\log\big| {\calM}_{j,n} \big| -  \frac{\log n}{2n} \label{eq:msg_reduction}
\end{align}
for $j=1,2$.  By absorbing the remainder terms in \eqref{eq:eps_reduction} and
\eqref{eq:msg_reduction} into the third-order term $\overg(\rho,\eps,n)$
in \eqref{eqn:Rout}, we see that it suffices to prove the converse result
for the maximal error probability. 

\subsubsection{Correlation Type Classes} \label{sec:types}

Define $\calI_0 := \{0\}$ and  $\calI_k := (\frac{k-1}{n},\frac{k}{n}], k \in [n]$, and let $\calI_{-k} := -\calI_k$ for $k\in [n]$. We see that the family $\{\calI_k:k\in [-n:n]\}$ forms a partition of $[-1, 1]$. Consider the {\em correlation type classes} (or simply {\em type classes})
\begin{align}
\calT_n(k)& := \left\{ (\bx_1, \bx_2)  : \frac{\langle\bx_1,\bx_2\rangle}{\|\bx_1\|_2 \|\bx_2\|_2} \in\calI_k \right\}  
\end{align}
where $k \in [-n: n]$, and $\langle \bx_1,\bx_2\rangle :=\sum_{i=1}^n x_{1i}x_{2i}$ is the standard inner product in $\bbR^n$.   The total number of type classes is $2n+1$, which is polynomial in $n$ analogously to the case of discrete alphabets~\cite[Ch.~2]{Csi97}.

Here we perform a further reduction (along with those in the first two steps) to codes 
for which all codeword pairs have the same type.  Let the codebook
$\calC := \{ (\bx_1(m_1, m_2),\bx_2(m_2)) : m_1 \in {\calM}_{1,n}, m_2 \in \calM_{2,n}\}$
be given; in accordance with the previous two steps, we assume that it has
codewords meeting the power constraints with equality, and \emph{maximal} 
error probability not exceeding $\eps_n$.
For each $m_2\in \calM_{2,n}$, we can find a set 
$\tilde{\calM}_{1,n}(m_2) \subset {\calM}_{1,n}$ (re-using the notation of the
previous step) such that all pairs of codewords 
$(\bx_1(m_1, m_2),\bx_2(m_2))$,  $m_1 \in \tilde{\calM}_{1,n}(m_2)$ have the same 
type, say indexed by  $k(m_2) \in [-n:n]$, and such that
\begin{align}
\frac{1}{n}\log\big| \tilde{\calM}_{1,n}(m_2)\big|\ge\frac{1}{n}\log\big|  {\calM}_{1,n}(m_2)\big|-\frac{\log (2n+1)}{n},\quad\forall\, m_2\in \calM_{2,n}. \label{eqn:reduction1}
\end{align}
We may assume that all the sets $\tilde{\calM}_{1,n}(m_2), m_2\in \calM_{2,n}$ have the same cardinality; otherwise, we can remove extra codeword pairs from some sets $\tilde{\calM}_{1,n}(m_2)$ and \eqref{eqn:reduction1} will still be satisfied. Similarly to the previous step, we may assume (by relabeling if necessary) that $\tilde{\calM}_{1,n}:=\tilde{\calM}_{1,n}(m_2)$ is the same for each $m_2$. We now have a subcodebook  $\tilde{\calC}_1:=\{(\bx_1(m_1, m_2),\bx_2(m_2)) : m_1 \in \tilde{\calM}_{1,n}, m_2 \in \calM_{2,n}\}$, where for each $m_2$, all the codeword pairs have the same type and \eqref{eqn:reduction1} is satisfied.  Across the $m_2$'s, there may be different types  indexed by $k(m_2) \in [-n:n]$, but there   exists a dominant type  indexed by $k^* \in \{k(m_2): m_2 \in \calM_{2,n}\}$ and a set $\tilde{\calM}_{2,n} \subset {\calM}_{2,n}$ such that 
\begin{align}
\frac{1}{n}\log\big| \tilde{\calM}_{2,n} \big|\ge \frac{1}{n}\log\big|  {\calM}_{2,n} \big|-\frac{\log (2n+1)}{n}. \label{eqn:reduction2}
\end{align}
As such, we have shown that  there exists a subcodebook $\tilde{\calC}_{12}:=\{(\bx_1(m_1, m_2),\bx_2(m_2)) : m_1 \in \tilde{\calM}_{1,n}, m_2 \in \tilde{\calM}_{2,n}\}$ of constant type  indexed by $k^*$  whose sum rate satisfies 
\begin{align}
\frac{1}{n}\log\big| \tilde{\calM}_{1,n}\times  \tilde{\calM}_{2,n}  \big|\ge\frac{1}{n}\log\big|  {\calM}_{1,n}\times   {\calM}_{2,n}  \big|-\frac{2\log (2n+1)}{n}. \label{eqn:reduction3}
\end{align}  
The reduced code clearly has a maximal error probability no larger than that of $\calC$.
Combining this observation with \eqref{eqn:reduction2} and \eqref{eqn:reduction3}, we see that the converse part of Theorem \ref{thm:global} for fixed-type codes implies the same for general codes, since the additional $O\big(\frac{\log n}{n})$ factors in \eqref{eqn:reduction2} and \eqref{eqn:reduction3} can be absorbed into the third-order term $\overg(\rho,\eps,n)$.  Thus, in the remainder of the proof, we limit our attention to fixed-type codes.  For each $n$, the type is indexed by $k\in[-n:n]$, and we define $\hrho := \frac{k}{n} \in [-1,1]$.  In some cases, we will be interested in \emph{sequences} of such values, in which case we will make the dependence on $n$ explicit by writing $\hrho_n$.

\subsubsection{A Verd\'u-Han-type  Converse Bound}
We  now  state a  non-asymptotic converse bound based on analogous bounds in Han's work on the information 
spectrum approach for the  general MAC  \cite[Lem.~4]{Han98} and in Boucheron-Salamatian's work on the information 
spectrum approach for the general  broadcast channel with degraded message sets~\cite[Lem.~2]{bouch00}. 
The bound only requires that the \emph{average} error probability is
no larger than $\eps_n$, which is guaranteed by the fact that the maximal 
error probability is no larger than $\eps_n$.  That is, the reduction 
to the maximal error probability in Section \ref{sec:max_error} was performed 
for the sole purpose of making the reduction to fixed types in Section 
\ref{sec:types} possible.

\begin{proposition} \label{prop:vh}
    Fix a blocklength $n\ge 1$, auxiliary  output distributions  $Q_{\bY|\bX_2}$ and $Q_{\bY }$, and a 
    constant $\gamma>0$. For any $(n,M_{1} ,M_{2} ,S_1, S_2,\eps)$-code with codewords of fixed 
    empirical powers $S_1$ and $S_2$ falling into a single correlation type class $\calT_n(k)$, there 
    exist random vectors $(\bX_1,\bX_2)$ with joint distribution $P_{\bX_1, \bX_2}$  supported on 
    $\{(\bx_1,\bx_2)\in\calT_n(k) :\|\bx_j\|_2^2 = nS_j,j=1,2\}$ such that 
    \begin{align}
    \eps\ge\Pr(\calA\cup\calB)-2e^{-n\gamma}, \label{eqn:vh}
    \end{align}
    where 
    \begin{align}
    \calA &:=\left\{ \frac{1}{n}\log\frac{W^n(\bY|\bX_1, \bX_2)}{Q_{\bY|\bX_2} (\bY|\bX_2)}\le  \frac{1}{n}\log M_{1} -\gamma\right\} \label{eqn:calAn} \\
    \calB &:=\left\{ \frac{1}{n}\log\frac{W^n(\bY|\bX_1, \bX_2)}{Q_{\bY}(\bY)} \le  \frac{1}{n}\log \big(M_{1} M_{2} \big)-\gamma\right\},\label{eqn:calBn}
    \end{align}
    with $\bY \,|\,\{ \bX_1=\bx_1,\bX_2=\bx_2 \} \sim W^n(\cdot|\bx_1,\bx_2)$.
\end{proposition}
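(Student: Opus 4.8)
The plan is to adapt the one-shot information-spectrum converse of Verd\'u--Han~\cite{Han98} and Boucheron--Salamatian~\cite{bouch00} to the superposition structure forced by the degraded message sets; nothing beyond those arguments is needed, and the two-event form of $\calA_n,\calB_n$ in~\eqref{eqn:calAn}--\eqref{eqn:calBn} is precisely what that structure produces. I would start by fixing an arbitrary code of the stated fixed-type, fixed-power form and taking $P_{\bX_1,\bX_2}$ to be the law of $(\bX_1,\bX_2)=(f_{1,n}(\rvM_1,\rvM_2),f_{2,n}(\rvM_2))$ with $(\rvM_1,\rvM_2)$ uniform on $[M_{1,n}]\times[M_{2,n}]$. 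By construction this distribution is supported on $\{(\bx_1,\bx_2)\in\calT_n(k):\|\bx_j\|_2^2=nS_j\}$, so the only thing left to prove is the inequality~\eqref{eqn:vh}.

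Writing $\calD_{m_1,m_2}:=\varphi_n^{-1}(m_1,m_2)\subseteq\bbR^n$ for the pairwise-disjoint decoding regions, I would express the probability of correct decoding as $1-\eps_n=\frac{1}{M_{1,n}M_{2,n}}\sum_{m_1,m_2}\int_{\calD_{m_1,m_2}}W^n(\by\,|\,\bx_1(m_1,m_2),\bx_2(m_2))\,\rmd\by$, and then split each summand according to whether the triple $(\bX_1,\bX_2,\bY)$ lies in $\calA_n\cup\calB_n$ or in the complement $\calA_n^c\cap\calB_n^c$. The contribution of $\calA_n^c\cap\calB_n^c$ is at most $\Pr(\calA_n^c\cap\calB_n^c)$. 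For the contribution of $\calA_n\cup\calB_n$ I would use a union bound to separate an $\calA_n$-piece from a $\calB_n$-piece and apply a change of measure to each: on $\calA_n$ one has $W^n(\by|\bx_1,\bx_2)\le M_{1,n}e^{-n\gamma}Q_{\bY|\bX_2}(\by|\bx_2)$, so replacing $W^n$ by this bound, pulling out the constant, and using that for each fixed $m_2$ the sets $\{\calD_{m_1,m_2}\}_{m_1}$ are disjoint while $Q_{\bY|\bX_2}(\cdot|\bx_2)$ integrates to one, this piece is at most $e^{-n\gamma}$; on $\calB_n$ one has $W^n(\by|\bx_1,\bx_2)\le M_{1,n}M_{2,n}e^{-n\gamma}Q_{\bY}(\by)$, and the same computation --- now using that \emph{all} of the $\calD_{m_1,m_2}$ are disjoint and $Q_{\bY}$ integrates to one --- again gives at most $e^{-n\gamma}$. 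Adding the three contributions yields $1-\eps_n\le\Pr(\calA_n^c\cap\calB_n^c)+2e^{-n\gamma}$, which is exactly~\eqref{eqn:vh} after noting $\Pr(\calA_n^c\cap\calB_n^c)=1-\Pr(\calA_n\cup\calB_n)$.

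I do not anticipate a real obstacle: the argument reduces to two change-of-measure steps together with the disjointness of the decoding sets. The single point that needs attention --- and the only place the degraded-message-set hypothesis is used --- is that $Q_{\bY|\bX_2}$ conditions on $\bX_2$ alone, so in the $\calA_n$-bound the disjointness must be invoked slice-by-slice over $m_2$ (with $m_1$ ranging within the slice); this is legitimate precisely because $\bx_2=f_{2,n}(m_2)$ depends on $m_2$ only, and it is also the reason the two events $\calA_n,\calB_n$ suffice here, in contrast to the general (non-degraded) MAC where additional events would be required. If desired, one can also verify that the type-class restriction in the statement plays no role in the bound itself --- it is inherited from the reduction in Section~\ref{sec:types} and enters only through the support of $P_{\bX_1,\bX_2}$.
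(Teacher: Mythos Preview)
Your proposal is correct and is exactly the standard Verd\'u--Han/Boucheron--Salamatian argument that the paper defers to; the paper in fact omits the proof entirely, simply citing~\cite{Han98,bouch00}. Your choice of $P_{\bX_1,\bX_2}$ as the code's empirical distribution and the two change-of-measure steps (using disjointness of $\{\calD_{m_1,m_2}\}_{m_1}$ within a fixed $m_2$-slice for $\calA_n$, and of all $\calD_{m_1,m_2}$ for $\calB_n$) are precisely what is intended.
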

\begin{proof}
    The proof is nearly identical to those appearing in~\cite{Han98,bouch00,Hayashi03}, so 
    we omit the details.  The starting point is the basic identity
    \begin{equation}
        \eps  \ge \Pr(\calA\cup\calB) - \Pr(\calA \cap \text{no error}) - \Pr(\calB \cap \text{no error}). \label{eq:vh_main_step}
    \end{equation} 
    We can upper bound the second probability by $e^{-n\gamma}$ by explicitly writing it
    in terms of the distributions of the codewords and the channel, and using \eqref{eqn:calAn} 
    to upper bound $W^n$ by $Q_{\bY|\bX_2} M_1 e^{-n\gamma}$.
    Handling the third term in \eqref{eq:vh_main_step} similarly yields a second 
    $e^{-n\gamma}$ term, thus resulting in \eqref{eqn:vh}.
\end{proof}

There are several differences  in Proposition~\ref{prop:vh} compared to~\cite[Lem.~4]{Han98}. First, in our work, there are constraints on the codewords, and the support of the input distribution $P_{\bX_1, \bX_2}$  is specified to reflect this. Second, there are two (instead of three) events in the probability in \eqref{eqn:vh}   because the informed encoder $f_{1,n}$ has access to both messages.   Third, we can choose arbitrary output distributions  $Q_{\bY|\bX_2}$ and $Q_{\bY}$.  This generalization is analogous  to the  non-asymptotic   converse bound by Hayashi and Nagaoka for   classical-quantum channels~\cite[Lem.~4]{Hayashi03}. The freedom to choose the output distribution is crucial in both our problem and~\cite{Hayashi03}.

\subsubsection{Evaluation of the Verd\'u-Han Bound for $\hrho\in(-1,1)$} \label{sec:eval_vh}
Recall from Sections \ref{sec:exact} and \ref{sec:types} that the codewords satisfy exact power constraints and belong to a single type class $\calT_n(k)$.  In this subsection, we consider the case that $\hrho := \frac{k}{n} \in (-1,1)$, and we derive bounds that will be useful for sequences $\hrho_n$ bounded away from $-1$ and $1$.  In Section \ref{sec:eval_vh2}, we present alternative bounds to handle the case that $\hrho_n\to\pm1$.  

We set $\gamma :=\frac{\log n}{2n}$ in \eqref{eqn:vh}, yielding $2e^{-n\gamma}=\frac{2 }{\sqrt{n}}$. 
Moreover, we choose the output distributions  $Q_{\bY|\bX_2}$ and $Q_{\bY}$ to be the $n$-fold 
products of $Q_{Y|X_2}$ and $Q_Y$, defined in \eqref{eqn:out1}--\eqref{eqn:out2} respectively, with $\hrho$ in place of $\rho$.

We now characterize the statistics of the first and second moments of 
$\sum_{i=1}^{n}\bj(x_{1i}, x_{2i}, Y_i)$ in \eqref{eqn:info_dens} for fixed sequences $(\bx_1, \bx_2)\in\calT_n(k)$.
From Appendix~\ref{sec:moments}, these moments can be expressed as
affine functions of the empirical powers $\frac{1}{n}\|\bx_1\|_2^2$, $\frac{1}{n}\|\bx_2\|_2^2$ 
and the empirical correlation coefficient $\frac{\langle\bx_1,\bx_2\rangle}{ \|\bx_1\|_2 \|\bx_2\|_2}$.
The former two quantities are fixed due to the reduction in Section \ref{sec:exact},
and the latter is within $\frac{1}{n}$ of $\hrho$ by the assumption
that $(\bx_1,\bx_2)\in\calT_n(k)$.  Moreover, a direct substitution into
\eqref{eqn:meanA} and \eqref{eqn:element_a} reveals that the mean 
vector and covariance matrix coincide with $\bI(\hrho)$ and $\bV(\hrho)$ when 
$\frac{\langle\bx_1,\bx_2\rangle}{ \|\bx_1\|_2 \|\bx_2\|_2}$ is \emph{precisely} 
equal to $\hrho$.  Combining the preceding observations, we obtain
\begin{align}
\left\|\bbE\left[\frac{1}{n}\sum_{i=1}^n \bj(x_{1i},x_{2i}, Y_i)\right] - \bI(\hrho )\right\|_\infty &\le \frac{\xi_1}{n} \label{eqn:expectation_j} \\
\left\|\cov\left[ \frac{1}{\sqrt{n}}\sum_{i=1}^n \bj(x_{1i},x_{2i}, Y_i)\right]  -  \bV(\hrho )\right\|_{\infty} &\le \frac{\xi_2}{n} \label{eqn:Vcalc}
\end{align}
for $\bY \sim W^n(\cdot|\bx_1,\bx_2)$, where $\xi_1>0$ and $\xi_2>0$ are constants.
Moreover, we can take these constants to be independent of $\hrho$, since the corresponding 
coefficients in \eqref{eqn:meanA} and \eqref{eqn:element_a} are uniformly bounded.

Let $R_{j,n} := \frac{1}{n}\log M_{j,n}$ for $j = 1,2$, and let $\bR_{n} := [R_{1,n}  , R_{1,n} + R_{2,n}  ]^T$.  We have 
\begin{align}
\Pr(\calA\cup\calB)=1-\Pr(\calA^c\cap\calB^c)=1-\bbE_{\bX_1,\bX_2} \big[ \Pr(\calA^c\cap\calB^c | \bX_1,\bX_2) \big] \label{eqn:complement}
\end{align}
and in particular, using the definition of $\bj(x_1,x_2,y)$ in \eqref{eqn:info_dens} and the fact that $Q_{\bY|\bX_2}$ and $Q_{\bY}$ are product distributions,
\begin{align}
\Pr(\calA^c\cap\calB^c | \bx_1,\bx_2)& = \Pr\left( \frac{1}{n}\sum_{i=1}^n \bj(x_{1i},x_{2i}, Y_i) > \bR_{n}  -\gamma\bone \right) \label{eqn:AB}\\
&\le \Pr\left( \frac{1}{n}\sum_{i=1}^n\Big( \bj(x_{1i},x_{2i}, Y_i)-\bbE[\bj(x_{1i},x_{2i}, Y_i)] \Big) >   \bR_{n} -\bI(\hrho ) -\gamma\bone-\frac{\xi_1}{n}\bone  \right), \label{eqn:use_norm_less} 
\end{align}
where \eqref{eqn:use_norm_less} follows from~\eqref{eqn:expectation_j}.

We are now in a position to apply the multivariate Berry-Esseen theorem~\cite{Got91, Bha10} (see Appendix~\ref{app:be}).
The first two moments are bounded according to \eqref{eqn:expectation_j}--\eqref{eqn:Vcalc},
and in Appendix \ref{sec:moments} we show that, upon replacing
the given $(\bx_1,\bx_2)$ pair  by a different pair yielding the same 
statistics of $\sum_{i=1}^{n}\bj(x_{1i},x_{2i},Y_i)$ if necessary (cf. Lemma \ref{lem:dependence}), the required 
third moment is uniformly bounded   (cf. Lemma \ref{lem:T_bd}).  It follows that
\begin{align}
&\Pr(\calA^c\cap\calB^c | \bx_1,\bx_2)  \nn\\
&\le    \Psi \Bigg(\sqrt{n} \Big( I_1(\hrho )  \! +\! \gamma \!+\!\frac{\xi_1}{n} \!-\!  R_{1,n}  \Big) ,
  \sqrt{n} \Big(  I_{12}(\hrho ) \!+ \!\gamma \!+\!\frac{\xi_1}{n}\!- \! (R_{1,n} \! +\! R_{2,n} )\Big) ; \cov\left[ \frac{1}{\sqrt{n}}\sum_{i=1}^n \bj(x_{1i},x_{2i}, Y_i)\right]   \Bigg)  \!+\! \frac{\psi(\hrho ) }{\sqrt{n}}, \label{eqn:berry}
\end{align}
where $\psi(\hrho)$ represents the remainder term. 
By Taylor expanding the continuously differentiable function $(z_1,z_2,\bV)\mapsto\Psi(z_1,z_2;\bV)$, and using the approximation in~\eqref{eqn:Vcalc} and the fact that $\det(\bV(\hrho))>0$ for $\hrho\in(-1,1)$, we obtain
\begin{align}
 \Pr(\calA^c\cap\calB^c | \bx_1,\bx_2)   \le  \Psi \left( \sqrt{n} \big(  I_1(\hrho ) -R_{1,n}   \big) , \sqrt{n} \big(  I_{12}(\hrho )   - (R_{1,n}  + R_{2,n})\big) ; \bV(\hrho )   \right) + \frac{\eta(\hrho )\log n}{\sqrt{n}} \label{eqn:taylor}
\end{align}
for some suitable remainder term $\eta(\hrho)$.
 It should be noted that $\psi(\hrho ),\eta(\hrho )\to \infty$ as $\hrho \to  \pm 1$, since $\bV(\hrho )$ becomes singular as $\hrho \to \pm 1$.  Despite this non-uniformity, we conclude from \eqref{eqn:vh}, \eqref{eqn:complement} and \eqref{eqn:taylor} that any $(n,\eps)$-code with codewords in $\calT_n(k)$ must have rates that satisfy 
\begin{align}
\begin{bmatrix}
R_{1,n}  \\ R_{1,n}  + R_{2,n} 
\end{bmatrix}& \in \bI(\hrho ) + \frac{\Psi^{-1}\Big(\bV(\hrho ),\eps + \frac{2}{\sqrt{n}} + \frac{\eta(\hrho )\log n}{\sqrt{n}}\Big)}{\sqrt n} \label{eqn:constant_type_bd0}.
\end{align}
The following ``continuity'' lemma for $\eps\mapsto\Psi^{-1}(\bV,\eps)$ is proved in Appendix \ref{app:upper_bd_set}.
\begin{lemma} \label{lem:approx}
Fix $0<\eps<1$ and a positive sequence $\lambda_n=o(1)$. Let $\bV$ be a non-zero positive semi-definite matrix. There exists a function $h(\bV,\eps )$ such that
\begin{align}
\Psi^{-1}\big( \bV,\eps+\lambda_n)\subset\Psi^{-1}\big( \bV,\eps )+h(   \bV,\eps)\,\lambda_n\, \bone, \label{eqn:inclusion}
\end{align}
and such that $h(\bV(\rho),\eps)$ is finite for each $\rho\ne\pm 1$, while being possibly divergent only as $\rho\to\pm1$. 
\end{lemma}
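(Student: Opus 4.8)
The plan is to prove the inclusion pointwise, exploiting the fact that both $\Psi^{-1}(\bV,\eps)$ and $\Psi^{-1}(\bV,\eps+\lambda_n)$ are \emph{down-sets}: with $(U_1,U_2)\sim\calN(\bzero,\bV)$, the map $(a_1,a_2)\mapsto\Psi(a_1,a_2;\bV)=\Pr(U_1\le a_1,\,U_2\le a_2)$ is coordinatewise non-decreasing, so $\Psi^{-1}(\bV,\delta)$ is closed under decreasing either coordinate. Hence $(z_1,z_2)\in\Psi^{-1}(\bV,\eps)+h\lambda_n\bone$ if and only if $\Psi(-z_1+h\lambda_n,\,-z_2+h\lambda_n;\bV)\ge 1-\eps$, so the claim is equivalent to producing $h=h(\bV,\eps)$ such that $\Psi(a_1,a_2;\bV)\ge 1-\eps-\lambda_n$ implies $\Psi(a_1+h\lambda_n,\,a_2+h\lambda_n;\bV)\ge 1-\eps$ (writing $a_j:=-z_j$). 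Since $\lambda_n=o(1)$, we may and do assume $\lambda_n\le(1-\eps)/2$; the claim is vacuous once $\eps+\lambda_n\ge 1$, and only such large $n$ enter the application \eqref{eqn:constant_type_bd0}.

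Take $(a_1,a_2)$ with $\Psi(a_1,a_2;\bV)\ge 1-\eps-\lambda_n$; if already $\Psi(a_1,a_2;\bV)\ge 1-\eps$ we are done by monotonicity. Otherwise, consider $\phi(t):=\Psi(a_1+t,\,a_2+t;\bV)$, which is continuously differentiable (for $\det\bV>0$), satisfies $\phi(0)\in[\,1-\eps-\lambda_n,\,1-\eps)$, and increases to $1$; let $t^\star>0$ be smallest with $\phi(t^\star)=1-\eps$. For $t\in[0,t^\star]$ the point $(a_1+t,a_2+t)$ lies in $\calA_0:=\{(a_1,a_2):(1-\eps)/2\le\Psi(a_1,a_2;\bV)\le 1-\eps\}$. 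The crux is to show that the directional derivative $\phi'(t)=\int_{-\infty}^{a_2+t}p(a_1+t,u)\,\rmd u+\int_{-\infty}^{a_1+t}p(u,a_2+t)\,\rmd u$, with $p$ the (strictly positive) joint density of $(U_1,U_2)$, is bounded below by a constant $\kappa=\kappa(\bV,\eps)>0$ on $\calA_0$. Granting this, $\lambda_n\ge 1-\eps-\phi(0)=\int_0^{t^\star}\phi'(t)\,\rmd t\ge\kappa\,t^\star$, so $t^\star\le\lambda_n/\kappa$, and $h(\bV,\eps):=1/\kappa(\bV,\eps)$ works since $\Psi(a_1+h\lambda_n,a_2+h\lambda_n;\bV)=\phi(h\lambda_n)\ge\phi(t^\star)=1-\eps$.

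The one delicate point is that $\calA_0$ is not compact: it has two arms, one along the vertical asymptote $a_1\to\sqrt{\bbE[U_1^2]}\,\Phi^{-1}(1-\eps)$ as $a_2\to\infty$, one along the horizontal asymptote $a_2\to\sqrt{\bbE[U_2^2]}\,\Phi^{-1}(1-\eps)$ as $a_1\to\infty$, so a bare compactness argument fails. To get around this, note first that on $\calA_0$ we have $\Psi(a_1,a_2;\bV)\ge(1-\eps)/2$, and since $\Psi(a_1,a_2;\bV)\le\Pr(U_1\le a_1)$ (and symmetrically), both $a_1$ and $a_2$ are bounded below by constants depending only on $\bV,\eps$. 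Next pick $A=A(\bV,\eps)$ with $\Pr(U_1\le A,\,U_2\le A)>1-\eps$, so that no point of $\calA_0$ has $a_1\ge A$ and $a_2\ge A$ simultaneously; thus $\calA_0\subset\{a_1\le A\}\cup\{a_2\le A\}$. On $\{a_1\le A\}\cap\calA_0$ the coordinate $a_1$ lies in a fixed compact interval while $a_2$ is bounded below, so $\phi'\ge p_1(a_1)\,\Pr(U_2\le a_2\mid U_1=a_1)$, where $p_1$ is the marginal density of $U_1$ (bounded away from $0$ on that interval) and the conditional probability is $\Phi\big((a_2-\mu(a_1))/\tau\big)$ with $\mu(a_1)$ ranging over a compact set and $\tau^2=\bbE[U_2^2]-\bbE[U_1U_2]^2/\bbE[U_1^2]>0$ fixed — hence bounded below by a positive constant; the piece $\{a_2\le A\}$ is symmetric, and $\kappa$ is the smaller of the two bounds.

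For the degenerate case $\rank\bV=1$, write $\bV=\bw\bw^{T}$ with $\bw\ne\bzero$, so that $(U_1,U_2)\dequal N\bw$ with $N\sim\calN(0,1)$; then $\Psi(a_1+t,a_2+t;\bV)$ is a coordinatewise-monotone, piecewise-affine function of $t$ fed into $\Phi$ — a single value of $\Phi$ or a difference of two, according to the signs of the components of $\bw$ — still increasing to $1$, and the same argument applies with the lower bound on $\phi'$ now coming from $\Phi'$ being bounded away from $0$ on compact intervals. The only genuine obstacle in the whole argument is the non-compactness of the annulus $\calA_0$; everything else is a routine consequence of the down-set structure and the fundamental theorem of calculus. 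Note finally that the construction yields $h=h(\bV,\eps)$ independent of $n$, as the lemma requires, with the inclusion valid for every $n$ with $\lambda_n\le(1-\eps)/2$ — i.e., for all sufficiently large $n$, which is all that \eqref{eqn:constant_type_bd0} uses.
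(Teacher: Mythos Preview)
Your proof is correct and follows the same essential strategy as the paper: both argue that the diagonal shift needed to pass from the $(\eps+\lambda_n)$-level to the $\eps$-level of $\Psi$ is $O(\lambda_n)$ by lower-bounding the rate at which $\Psi$ increases along the diagonal. The paper does this via marginal CDF increments, writing $\lambda_n\ge\Phi(-z_j/\sqrt{V_{jj}})-\Phi((-z_j-\nu_n)/\sqrt{V_{jj}})\gtrsim\nu_n\cdot\calN(z_j;0,V_{jj})$ and taking the better of $j=1,2$; you do it via the directional derivative $\phi'(t)$ of the joint CDF and the fundamental theorem of calculus. These are two bookkeepings of the same idea.

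Where your argument genuinely adds something is in the uniformity. The paper concludes by observing that $\min\{|z_1|,|z_2|\}<\infty$ for each boundary point and hence that $\min_{j}f(z_j,V_{jj})^{-1}$ is ``finite''; as written this does not yield a bound independent of $(z_1,z_2)$, which is what is needed for $h$ to depend only on $(\bV,\eps)$. Your explicit decomposition of the non-compact annulus $\calA_0$ into the two arms $\{a_1\le A\}$ and $\{a_2\le A\}$, and the observation that on each arm one coordinate is confined to a compact interval while the other is bounded below, is exactly the missing step. You also separate out the rank-one case, which the paper does not. One small quibble: when $\eps+\lambda_n\ge1$ the left-hand side of \eqref{eqn:inclusion} is all of $\bbR^2$, so the inclusion is not vacuous but actually false; your restriction to $\lambda_n\le(1-\eps)/2$ (hence to sufficiently large $n$) is the right fix, and as you note this is all the application requires.
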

We conclude from Lemma~\ref{lem:approx} that
\begin{align}
\Psi^{-1}\Big(\bV(\hrho ),\eps + \frac{2}{\sqrt{n}} + \frac{\eta(\hrho )\log n}{\sqrt{n}}\Big) \subset  \Psi^{-1}\big(\bV(\hrho ),\eps  \big) +\frac{ h(\hrho,\eps) \log n}{\sqrt{n}}\bone \label{eqn:taylor_expand_set}
\end{align}
where $h(\hrho,\eps):=h(\bV(\hrho),\eps)$ diverges only as $\hrho\to\pm 1$. Uniting \eqref{eqn:constant_type_bd0} and \eqref{eqn:taylor_expand_set}, we deduce that
\begin{align}
\begin{bmatrix}
R_{1,n}  \\ R_{1,n}  + R_{2,n} 
\end{bmatrix} \in \bI(\hrho ) + \frac{\Psi^{-1}\big(\bV(\hrho ),\eps\big)}{\sqrt{n}}  + \frac{h(\hrho ,\eps)\log n}{n}\bone.\label{eqn:constant_type_bd}
\end{align}

\subsubsection{Evaluation of the Verd\'u-Han Bound with $\hrho_n\to\pm1$} \label{sec:eval_vh2}

Here we consider a sequence of codes of a single type   indexed by $k_n$ such 
that $\hrho_n := \frac{k_n}{n} \to 1$.  The case $\hrho_n\to-1$ is handled 
similarly, and the details are thus omitted.  Our aim is to show that 
\begin{align}
\begin{bmatrix}
R_{1,n}  \\ R_{1,n}  + R_{2,n} 
\end{bmatrix} \in \bI(\hrho_n) + \frac{\Psi^{-1}\big(\bV(\hrho_n),\eps\big)}{\sqrt{n}}  + o\left(\frac{1}{\sqrt n}\right)\bone. \label{eqn:rho1aim1}
\end{align}
The following lemma states that   as $\hrho_n\to 1$, the set $\Psi^{-1}(\bV(\hrho_n),\eps\big)$ in \eqref{eqn:rho1aim1} can be approximated by $\Psi^{-1}(\bV(1),\eps\big)$, which is a simpler rectangular set. The proof of the lemma is provided in Appendix~\ref{app:bounds_set}.
\begin{lemma} \label{lem:bounds_set}
Fix $0<\eps<1$ and a sequence $\{\rho_n\}$ such that $\hrho_n\to 1$.  There exist  positive sequences $a_n,b_n = \Theta( (1-\hrho_n)^{1/4})$ and $c_n= \Theta( (1-\hrho_n)^{1/2})$ satisfying
\begin{equation}
 \begin{bmatrix}
0  \\ \sqrt{V_{12}(1)}\Phi^{-1}(\eps + a_n)
\end{bmatrix}^-  -  b_n \bone \subset\Psi^{-1}(\bV(\hrho_n),\eps)\subset  \begin{bmatrix}
0  \\ \sqrt{V_{12}(1)}\Phi^{-1}(\eps  )
\end{bmatrix}^-  +  c_n \bone \label{eqn:set_up_bd} .
\end{equation}
\end{lemma}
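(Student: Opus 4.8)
The plan is to reduce both inclusions in \eqref{eqn:set_up_bd} to statements about the orthant tail of $(U_1,U_2)\sim\calN(\bzero,\bV(\hrho_n))$, i.e.\ about $\Psi^{-1}(\bV(\hrho_n),\eps)=\{(z_1,z_2):\Pr(U_1\ge z_1,\,U_2\ge z_2)\ge1-\eps\}$, exploiting the precise way this Gaussian degenerates as $\hrho_n\to1$. Write $\delta_n:=1-\hrho_n\downarrow0$ and assume (for $n$ large) $\hrho_n\in(0,1)$. First I would record the elementary asymptotics of the entries of $\bV(\hrho_n)$: since $\rvV(x)/x\to1$ and $\rvV(x,y)/x\to\frac{y+2}{2(y+1)}$ as $x\to0$, and $S_1(1-\hrho_n^2)=\Theta(\delta_n)$ while $S_1+S_2+2\hrho_n\sqrt{S_1S_2}\to y_1:=(\sqrt{S_1}+\sqrt{S_2})^2>0$, one gets $V_1(\hrho_n)=\Theta(\delta_n)$, $V_{1,12}(\hrho_n)=\Theta(\delta_n)$ and $V_{12}(\hrho_n)=V_{12}(1)-\Theta(\delta_n)\to V_{12}(1)>0$. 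Hence the correlation coefficient of $(U_1,U_2)$ is $r_n=\Theta(\sqrt{\delta_n})\to0$; the regression coefficient $\beta_n:=V_{1,12}(\hrho_n)/V_1(\hrho_n)$ tends to $\frac{y_1+2}{2(y_1+1)}\in(\tfrac12,1)$, so $\liminf_n(1-\beta_n)>0$; and writing $U_2=\beta_n U_1+U_2'$ with $U_2'$ independent of $U_1$ (they are jointly Gaussian and uncorrelated), $\var(U_2')=V_{12}(\hrho_n)(1-r_n^2)=V_{12}(1)-\Theta(\delta_n)<V_{12}(1)$.

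For the right-hand (outer) inclusion I would only use the two marginals: $(z_1,z_2)\in\Psi^{-1}(\bV(\hrho_n),\eps)$ forces $z_1\le\sqrt{V_1(\hrho_n)}\,\Phi^{-1}(\eps)$ and $z_2\le\sqrt{V_{12}(\hrho_n)}\,\Phi^{-1}(\eps)$. The former is $\le0$ if $\eps\le\tfrac12$ and is $O(\sqrt{\delta_n})$ always; the latter is $\sqrt{V_{12}(1)}\Phi^{-1}(\eps)+O(\delta_n)$ by smoothness of $\sqrt{\rvV(\cdot)}$ near $y_1$. Both are therefore implied by $(z_1,z_2)\in\Psi^{-1}(\bV(1),\eps)+c_n\bone=\{z_1\le c_n,\ z_2\le\sqrt{V_{12}(1)}\Phi^{-1}(\eps)+c_n\}$ once $c_n:=C\sqrt{\delta_n}$ with $C=C(\eps,S_1,S_2)$ large enough.

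The left-hand (inner) inclusion is the substantive part. Fix $(z_1,z_2)$ with $z_1\le-b_n$ and $z_2\le\sqrt{V_{12}(1)}\Phi^{-1}(\eps+a_n)-b_n$. Because $z_1\le-b_n$ and $\beta_n>0$, one has the deterministic inclusion $\{U_1\ge-b_n\}\cap\{U_2'\ge z_2+\beta_n b_n\}\subseteq\{U_1\ge z_1\}\cap\{U_2\ge z_2\}$, so by independence of $U_1$ and $U_2'$,
\[
\Pr(U_1\ge z_1,\,U_2\ge z_2)\ \ge\ \Pr(U_1\ge-b_n)\cdot\Pr(U_2'\ge z_2+\beta_n b_n).
\]
The first factor is $\Phi\!\big(b_n/\sqrt{V_1(\hrho_n)}\big)\ge1-\tfrac12\exp\!\big(-\tfrac{b_n^2}{2V_1(\hrho_n)}\big)$, which, with $b_n=\Theta(\delta_n^{1/4})$ and $V_1(\hrho_n)=\Theta(\delta_n)$, is $1$ minus a quantity $\mu_n$ that is super-polynomially small in $\delta_n^{-1}$. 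For the second factor, $z_2\le\sqrt{V_{12}(1)}\Phi^{-1}(\eps+a_n)-b_n$ gives $-z_2-\beta_n b_n\ge(1-\beta_n)b_n-\sqrt{V_{12}(1)}\Phi^{-1}(\eps+a_n)$; combined with $\sqrt{\var(U_2')}=\sqrt{V_{12}(1)}(1-\Theta(\delta_n))$ and a Taylor expansion of $\Phi^{-1}$ at $\eps$ and of $\Phi$ at $-\Phi^{-1}(\eps)$ (valid since $\eps$ is fixed in $(0,1)$ and $a_n\to0$), one finds
\[
\Pr(U_2'\ge z_2+\beta_n b_n)\ \ge\ 1-\eps\quad\text{whenever}\quad\frac{(1-\beta_n)\,b_n}{\sqrt{V_{12}(1)}}\ \ge\ \frac{a_n}{\phi(\Phi^{-1}(\eps))}+O(\delta_n^{1/2}).
\]
Since $\liminf_n(1-\beta_n)>0$, taking $b_n:=\delta_n^{1/4}$ and $a_n:=\kappa\,\delta_n^{1/4}$ with $\kappa>0$ a small enough constant not only satisfies this but yields the stronger $\Pr(U_2'\ge z_2+\beta_n b_n)\ge1-\eps+\delta_n^{1/4}$, which absorbs the deficit $\mu_n$; hence $\Pr(U_1\ge z_1,U_2\ge z_2)\ge1-\eps$, i.e.\ $(z_1,z_2)\in\Psi^{-1}(\bV(\hrho_n),\eps)$. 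These $a_n,b_n$ are $\Theta(\delta_n^{1/4})$ and $c_n=\Theta(\delta_n^{1/2})$ as claimed; the case $\eps>\tfrac12$ (where $\Phi^{-1}(\eps+a_n)>0$) needs only harmless extra bookkeeping of signs in the $O(\delta_n^{1/2})$ term.

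The main obstacle is exactly the balancing in the last display. Relaxing the confidence level from $\eps$ to $\eps+a_n$ costs $\approx a_n/\phi(\Phi^{-1}(\eps))$, and the only compensating quantity is the drift $(1-\beta_n)b_n/\sqrt{V_{12}(1)}$ that appears because we shrink the $U_1$-window to $[-b_n,\infty)$ and push it through the regression $U_2=\beta_n U_1+U_2'$; making this drift genuinely of the same order $\delta_n^{1/4}$ as the cost relies on $\liminf_n(1-\beta_n)>0$ and $\var(U_2')<V_{12}(1)$, i.e.\ on the regression coefficient of the nearly-singular Gaussian having limit $\frac{y_1+2}{2(y_1+1)}<1$ and on $\bV(\hrho_n)$ remaining a genuine covariance whose correlation vanishes like $\sqrt{\delta_n}$. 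The power $1/4$ is the sweet spot that makes both $b_n^2/V_1(\hrho_n)\to\infty$ (so the $U_1$-factor is negligible) and $b_n\to0$, and the power $1/2$ in $c_n$ comes from $\sqrt{V_1(\hrho_n)}\,|\Phi^{-1}(\eps)|=\Theta(\sqrt{\delta_n})$ in the $\eps>\tfrac12$ case.
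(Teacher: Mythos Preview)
Your proof is correct. The outer inclusion is handled identically to the paper: both of you simply bound the two marginals, yielding $z_1\le\sqrt{V_1(\hrho_n)}\,\Phi^{-1}(\eps)=O(\sqrt{\delta_n})$ and $z_2\le\sqrt{V_{12}(\hrho_n)}\,\Phi^{-1}(\eps)=\sqrt{V_{12}(1)}\Phi^{-1}(\eps)+O(\delta_n)$.

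For the inner inclusion you take a genuinely different route. The paper avoids any structural decomposition of the Gaussian and uses only the elementary additive bound $\Pr(A\cap B)\ge\Pr(B)-\Pr(A^c)$: with $A=\{Z_{1,n}\le b_n\}$ and $B=\{Z_{2,n}\le -(\sqrt{V_{12}(1)}\Phi^{-1}(\eps+a_n)-b_n)\}$, the deficit to absorb is the additive term $\Pr(Z_{1,n}>b_n)=\Phi(-b_n/\sqrt{V_1(\hrho_n)})$, which vanishes for $b_n=\delta_n^{1/4}$, and $a_n$ is then chosen so the $\Phi$-term for $Z_{2,n}$ equals $1-\eps$ plus this deficit. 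Your argument instead exploits the regression decomposition $U_2=\beta_n U_1+U_2'$ with $U_2'\indep U_1$, turning the bound into a product of two probabilities and requiring you to verify that $\beta_n\to\frac{y_1+2}{2(y_1+1)}\in(\tfrac12,1)$ so that the drift $(1-\beta_n)b_n$ is genuinely $\Theta(\delta_n^{1/4})$. Both approaches arrive at the same $\delta_n^{1/4}$ scaling for the same reason (one needs $b_n^2/V_1(\hrho_n)\to\infty$ and $b_n\to0$), but the paper's is shorter and needs no information about $\beta_n$ or $\var(U_2')$, while yours gives a clearer picture of why the near-singular Gaussian behaves as it does.
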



From the inner bound in Lemma~\ref{lem:bounds_set}, in order to show \eqref{eqn:rho1aim1} it suffices to show 
\begin{align}
\begin{bmatrix} R_{1,n}  \\ R_{1,n}  + R_{2,n} \end{bmatrix} 
\le \bI(\hrho_n) + \sqrt{\frac{V_{12}(1)}{n}}\begin{bmatrix} 0 \\ \Phi^{-1}(\eps) \end{bmatrix} + o\left(\frac{1}{\sqrt n}\right)\bone , \label{eqn:rho1aim2}
\end{align}
where we absorbed the sequences $a_n ,b_n$ into the $o\big(\frac{1}{\sqrt{n}}\big)$ term.

We return to the step in \eqref{eqn:use_norm_less}, which 
when combined with the Verd\'{u}-Han-type bound in Proposition \ref{prop:vh} (with $\gamma := \frac{\log n}{2n}$) yields
for some $(\bx_1,\bx_2)\in\calT_n(k)$ that
\begin{align}
    \eps_n &\ge 1-\Pr\bigg( \frac{1}{n}\sum_{i=1}^n\Big( \bj(x_{1i},x_{2i}, Y_i)-\bbE[\bj(x_{1i},x_{2i}, Y_i)] \Big) > \bR_n - \bI(\hrho_n) - \gamma\bone-\frac{\xi_1}{n}\bone \bigg) - \frac{2}{\sqrt n} \\
           &\ge \max\Bigg\{ \Pr\bigg( \frac{1}{n}\sum_{i=1}^n\Big( j_1(x_{1i},x_{2i}, Y_i)-\bbE[j_1(x_{1i},x_{2i}, Y_i)] \Big) \le R_{1,n} - I_1(\hrho_n) - \gamma - \frac{\xi_1}{n} \bigg), \nn \\
                            &\qquad \Pr\bigg( \frac{1}{n}\sum_{i=1}^n\Big( j_{12}(x_{1i},x_{2i}, Y_i)-\bbE[j_{12}(x_{1i},x_{2i}, Y_i)] \Big) \le R_{1,n} + R_{2,n} - I_{12}(\hrho_n) - \gamma - \frac{\xi_1}{n} \bigg) \Bigg\} - \frac{2}{\sqrt n}. \label{eqn:vhweakened}
\end{align}

From \eqref{eqn:Vcalc} and the assumption that $\hrho_n \to 1$,
the variance of $\sum_{i=1}^{n}j_{12}(x_{1i},x_{2i},Y_i)$ equals 
$n(V_{12}(1)+o(1))$.  Since $V_{12}(1)>0$, we
can treat the second term in the maximum in \eqref{eqn:vhweakened} in an 
identical fashion to the single-user setting \cite{Strassen,PPV10}
to obtain the second of the element-wise inequalities in \eqref{eqn:rho1aim2}.
It remains to prove the first, i.e. to show that no $\Theta\big( \frac{1}{\sqrt n} \big)$ 
addition to $R_{1,n}$  is possible for $\eps\in(0,1)$.  

Since $V_{1}(1)=1$ and $V_1(\cdot)$ is continuous in $\rho$, we have $V_1(\hrho_n)\to0$.
Combining this observation with \eqref{eqn:Vcalc}, we conclude that the
variance of $\sum_{i=1}^n j_{1}(x_{1i},x_{2i}, Y_i)$ is $o(n)$, and we thus
have from Chebyshev's inequality that
\begin{equation}
    \Pr\bigg( \frac{1}{n}\sum_{i=1}^n\Big( j_{1}(x_{1i},x_{2i}, Y_i)-\bbE[j_{1}(x_{1i},x_{2i}, Y_i)] \Big) \le \frac{c}{\sqrt{n}} \bigg) \to 1 \label{eqn:rho1case2}
\end{equation}
for all $c>0$. Substituting \eqref{eqn:rho1case2} into \eqref{eqn:vhweakened} 
and taking $c\to0$ yields $R_{1,n} \le I_{1}(\hrho_n) + o\big(\frac{1}{\sqrt n}\big)$, as desired.

\subsubsection{Completion of the Proof}

Combining \eqref{eqn:constant_type_bd} and \eqref{eqn:rho1aim1}, we conclude
that for any sequence of codes with error probability not exceeding
$\eps\in(0,1)$, we have for some sequence $\hrho_n\in[-1,1]$ that
\begin{align}
\begin{bmatrix}
R_{1,n}  \\ R_{1,n}  + R_{2,n} 
\end{bmatrix} \in \bI(\hrho_n) + \frac{\Psi^{-1}\big(\bV(\hrho_n),\eps\big)}{\sqrt{n}}  + \overg(\hrho_n,\eps,n)\bone, \label{eqn:rho1aim}
\end{align}
where $\overg(\rho,\eps,n)$ satisfies the conditions   in the 
theorem statement.  Specifically, the first condition follows from~\eqref{eqn:constant_type_bd} (with $\overg(\rho,\eps,n) :=  h(\rho,\eps)\frac{\log n}{n}$), and the second from \eqref{eqn:rho1aim1} (with $\overg(\rho,\eps,n) =  o\big(\frac{1}{\sqrt{n}}\big)$).  This concludes the proof of the global converse.

\subsection{Direct Part} \label{sec:glo_direct}
 
We now prove the inner bound in \eqref{eqn:unions}. At a high level, we will adopt the strategy of drawing random codewords on appropriate  spheres, similarly to Polyanskiy {\em et al.} \cite[Thm.~54]{PPV10} and Tan-Tomamichel~\cite{TanTom14}.

\subsubsection{Random-Coding Ensemble}

Let $\rho\in[0,1]$ be a fixed correlation parameter. The ensemble will be defined  
in such a way that, with probability one, each codeword pair falls into the set
\begin{equation}
    \calD_n(\rho) :=  \Big\{\big(\bx_1,\bx_2\big) :  \|\bx_{1}\|_2^2 = nS_1,  \|\bx_{2}\|_2^2  =  nS_2,  \langle\bx_1,\bx_2\rangle =  n\rho\sqrt{S_1S_2}  \Big\}.\label{eqn:setD}
\end{equation}
This means that the power constraints  in \eqref{eqn:power_constraints} are satisfied with equality, and the 
empirical correlation between each codeword pair is exactly $\rho$.    We use superposition coding,
in which the codewords are generated according to
\begin{equation}
    \bigg\{\Big(\bX_2 (m_2) ,\{\bX_1(m_1,m_2)\}_{m_1=1}^{M_{1,n}}\Big)\bigg\}_{m_2=1}^{M_{2,n}} \sim\prod_{m_2=1}^{M_{2,n}}\bigg( P_{\bX_2}(\bx_2(m_2))\prod_{m_1=1}^{M_{1,n}}P_{\bX_1|\bX_2}(\bx_1(m_1,m_2)|\bx_2(m_2))  \bigg) \label{eqn:super}
\end{equation}
for codeword distributions $P_{\bX_2}$ and $P_{\bX_1|\bX_2}$. We choose the codeword distributions to be 
\begin{align}
P_{\bX_2}(\bx_2) & \propto \delta\big\{\|\bx_2\|_2^2 = nS_2\big\},\qquad\mbox{and} \label{sc_px2}\\
P_{\bX_1|\bX_2}(\bx_1|\bx_2) & \propto \delta\big\{\|\bx_1\|_2^2 = nS_1, \langle\bx_1,\bx_2\rangle=n\rho\sqrt{S_1S_2}\big\} ,\label{sc_px1} 
\end{align}
where $\delta\{\cdot\}$ is the Dirac $\delta$-function, and 
$P_{\bX}(\bx)\propto\delta\{\bx\in\calA\}$ means that $P_{\bX}(\bx)=\frac{\delta\{\bx\in\calA\}}{c}$, 
with the normalization constant $c>0$ chosen such that $\int_{\calA} P_{\bX}(\bx)\, \rmd \bx = 1$.  
In other words, each $\bX_2(m_2), m_2\in [M_{2,n}]$ is drawn uniformly from an $(n-1)$-sphere (i.e.\ an $(n-1)$-dimensional 
manifold  in $\bbR^n$) with radius $\sqrt{nS_2}$ and for each $m_2$, each $\bx_1(m_1, m_2), m_1\in [M_{1,n}]$ is drawn 
uniformly from the set of all $\bx_1$ satisfying the power and correlation coefficient constraints with equality.  
We will see that this set is in fact an $(n-2)$-sphere of radius $\sqrt{nS_1(1-\rho^2)}$, and is thus non-empty
for all $\rho\in[0,1]$.  These distributions clearly ensure that the codeword pairs belong to $\calD_n(\rho)$ with probability one. 

\subsubsection{A Feinstein-type Achievability Bound}
We now state a  non-asymptotic achievability  based on an analogous bound for the   MAC~\cite[Lem.~3]{Han98}.  This bound can be considered as a dual of Proposition~\ref{prop:vh}.  Define    
\begin{align}
P_{\bX_1|\bX_2} W^n(\by|\bx_2) &: = \int_{\bbR^n} P_{\bX_1|\bX_2}(\bx_1|\bx_2) W^n(\by|\bx_1, \bx_2)\, \rmd \bx_1 ,\\
P_{\bX_1,\bX_2} W^n(\by )  &:= \int_{\bbR^n} \int_{\bbR^n} P_{\bX_1,\bX_2}(\bx_1,\bx_2) W^n(\by|\bx_1, \bx_2)\, \rmd\bx_1\, \rmd\bx_2
\end{align}
to be output  distributions induced by a joint distribution  $P_{\bX_1, \bX_2}$ and the channel $W^n$. 
Moreover, let $\frac{\rmd P_1}{\rmd P_2}$ denote the Radom-Nikodym derivative between two 
probability distributions $P_1$ and $P_2$.
\begin{proposition} \label{prop:fein}
    Fix a blocklength $n\ge 1$, a joint distribution $P_{\bX_1, \bX_2}$ 
    such that $\|\bX_1\|_2^2\le nS_1$ and $\|\bX_2\|_2^2\le nS_2$ almost surely, auxiliary output distributions
     $Q_{\bY|\bX_2}$ and $Q_{\bY }$, a constant $\gamma>0$, and two sets $\calA_1 \subseteq \calX_2^n \times \calY^n$
     and $\calA_{12} \subseteq \calY^n$. Then there exists an 
     $(n,M_{1} ,M_{2},S_1,S_2,\eps)$-code for which 
    \begin{align}
    \eps\le\Pr(\calF\cup\calG) + \Lambda_{1}e^{-n\gamma} + \Lambda_{12}e^{-n\gamma} + \Pr\big((\bX_2,\bY) \notin \calA_1\big) + \Pr\big(\bY \notin \calA_{12}\big), \label{eqn:fein}
    \end{align}
    where 
    \begin{equation}
    \Lambda_{1} := \sup_{(\bx_2,\by)\in\calA_{1}} \frac{\rmd P_{\bX_1|\bX_2}W^n(\,\cdot\,|\bx_2)}{\rmd Q_{\bY|\bX_2}(\,\cdot\,|\bx_2)}(\by) ,\quad  \Lambda_{12}:= \sup_{\by\in\calA_{12}} \frac{\rmd P_{\bX_1,\bX_2}W^n}{\rmd Q_{\bY}}(\by), \label{eqn:K12}
    \end{equation}
    and
    \begin{align}
    \calF &:=\left\{ \frac{1}{n}\log\frac{W^n(\bY|\bX_1, \bX_2)}{Q_{\bY|\bX_2} (\bY|\bX_2)}\le  \frac{1}{n}\log M_{1} +\gamma\right\} \label{eqn:calEn} \\
    \calG &:=\left\{ \frac{1}{n}\log\frac{W^n(\bY|\bX_1, \bX_2)}{Q_{\bY}(\bY)} \le  \frac{1}{n}\log \big(M_{1} M_{2} \big)+\gamma\right\} \label{eqn:calFn}
    \end{align}
    with $\bY \,|\,\{ \bX_1=\bx_1,\bX_2=\bx_2\} \sim W^n(\cdot|\bx_1,\bx_2)$.
\end{proposition}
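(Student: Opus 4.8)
The plan is to pair a superposition random-coding construction with a two-threshold (Feinstein-type) decoder, and then to use two change-of-measure steps to replace the output distributions $P_{\bX_1|\bX_2}W^n$ and $P_{\bX_1,\bX_2}W^n$ that are actually induced by the code with the auxiliary distributions $Q_{\bY|\bX_2}$ and $Q_{\bY}$, at the cost of the multiplicative factors $\Lambda_1$ and $\Lambda_{12}$ defined in \eqref{eqn:K12}. This mirrors the arguments of Han~\cite{Han98} and Molavianjazi~\cite{Mol13}, specialised to the degraded-message-set structure, which is what collapses the usual three error events of the MAC down to two.

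First I would draw the cloud centres $\bX_2(m_2)$, $m_2\in[M_{2,n}]$, \iid from the marginal $P_{\bX_2}$ of $P_{\bX_1,\bX_2}$ and, conditioned on each $\bX_2(m_2)$, the satellites $\bX_1(m_1,m_2)$, $m_1\in[M_{1,n}]$, conditionally \iid from $P_{\bX_1|\bX_2}(\cdot\,|\,\bX_2(m_2))$. Since $\|\bX_1\|_2^2\le nS_1$ and $\|\bX_2\|_2^2\le nS_2$ hold $P_{\bX_1,\bX_2}$-almost surely, every codeword pair in the random codebook satisfies \eqref{eqn:power1}--\eqref{eqn:power2} with probability one, so no codebooks need be discarded and a genuine code can be extracted at the end by the usual expurgation-free averaging argument. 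The decoder, on observing $\by$, searches for the unique pair $(m_1,m_2)$ that simultaneously fails both $\calF_n$ and $\calG_n$ of \eqref{eqn:calEn}--\eqref{eqn:calFn} for the codewords $(\bx_1(m_1,m_2),\bx_2(m_2))$, and declares an error if no such unique pair exists.

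Assuming the pair $(1,1)$ was transmitted, I would bound the ensemble-averaged error probability by a sum of three contributions. (a) The true pair violates at least one threshold; by construction this event has probability exactly $\Pr(\calF_n\cup\calG_n)$. (b) Some competitor $(m_1,1)$ with $m_1\ne 1$ passes the first threshold: conditioning on the true cloud centre $\bX_2(1)=\bx_2$ makes each such $\bX_1(m_1,1)$ independent of $\bY$ with law $P_{\bX_1|\bX_2}(\cdot\,|\,\bx_2)$, while $\bY\mid\bx_2\sim(P_{\bX_1|\bX_2}W^n)(\cdot\,|\,\bx_2)$; bounding the latter by $\Lambda_1 Q_{\bY|\bX_2}(\cdot\,|\,\bx_2)$, interchanging the order of integration, and using the elementary estimate $\int Q_{\bY|\bX_2}(\by|\bx_2)\,\bone\{W^n(\by|\bx_1,\bx_2)> M_{1,n}e^{n\gamma}Q_{\bY|\bX_2}(\by|\bx_2)\}\,\rmd\by\le M_{1,n}^{-1}e^{-n\gamma}$ gives a per-competitor bound $\Lambda_1 M_{1,n}^{-1}e^{-n\gamma}$; a union bound over the fewer than $M_{1,n}$ competitors yields $\Lambda_1 e^{-n\gamma}$. (c) Some competitor $(m_1,m_2)$ with $m_2\ne 1$ passes the second threshold: here $(\bX_1(m_1,m_2),\bX_2(m_2))\sim P_{\bX_1,\bX_2}$ is independent of $\bY\sim P_{\bX_1,\bX_2}W^n$, and the identical argument with $\Lambda_{12}$ in place of $\Lambda_1$ and a union bound over the fewer than $M_{1,n}M_{2,n}$ competitors gives $\Lambda_{12}e^{-n\gamma}$. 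Adding the three terms establishes \eqref{eqn:fein} for the averaged code, and hence for at least one code in the ensemble.

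Since this is the standard Feinstein/Han argument adapted to degraded message sets, no step is genuinely hard; the only points that require care are exploiting the conditional-independence structure of the superposition ensemble correctly — in particular, that for a wrong $m_1$ but correct $m_2$ the competing satellite is independent of $\bY$ only \emph{after} conditioning on the true cloud centre $\bX_2(1)$ — and the two change-of-measure inequalities, which are precisely what introduces the freedom to choose the auxiliary distributions $Q_{\bY|\bX_2}$ and $Q_{\bY}$ that is exploited later when matching the achievability bound to the converse of Theorem~\ref{thm:global}.
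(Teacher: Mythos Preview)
Your proof is correct and is exactly the standard Feinstein/Han threshold-decoding argument that the paper has in mind; the paper in fact omits the proof entirely, pointing to \cite[Thm.~4]{Mol13} and noting that the argument is standard. Your write-up correctly identifies the two key ingredients that distinguish this bound from the classical one---the degraded-message structure collapsing three events to two, and the change-of-measure steps that introduce $\Lambda_1,\Lambda_{12}$---and handles the conditional-independence subtlety in the ``wrong $m_1$, right $m_2$'' case properly.
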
 
\begin{proof}
    The proof is essentially identical to~\cite[Lem.~3]{Han98} (among others), so we omit the details.
    We consider superposition coding of the form given in \eqref{eqn:super}, along with
    a threshold decoder that searches for a codeword pair $(x_1,x_2)$ violating the inequalities
    in \eqref{eqn:calEn}--\eqref{eqn:calFn}.  The first term in \eqref{eqn:fein} is the probability that
    the transmitted pair fails to meet this condition. The two subsequent terms correspond to
    the probability that some incorrect pair does meet this condition, and are obtained using
    the union bound and a standard change of measure argument (e.g.~see \cite{Hayashi09}).  
    The final two terms are obtained by treating the events therein as errors (i.e.~atypical events), 
    thus permitting the restrictions to $\calA_{1}$ and $\calA_{12}$ in \eqref{eqn:K12}.
\end{proof}

The main difference between \eqref{eqn:fein} and traditional Feinstein-type threshold decoding bounds 
(e.g.~\cite[Lem.~3]{Han98}, \cite[Lem.~1]{Lan06}) is that we have the freedom to choose arbitrary output 
distributions  $Q_{\bY|\bX_2}$ and $Q_{\bY }$; this comes at the cost of introducing the multiplicative 
factors $\Lambda_{1}$ and $\Lambda_{12}$ that depend on the maximum value of the Radon-Nikodym 
derivatives in~\eqref{eqn:K12}.  Our bound in \eqref{eqn:fein} allows us to exclude ``atypical'' values of $(\bx_2,\by) \notin \calA_1$ and $\by\notin\calA_{12}$, thus facilitating the bounding of $\Lambda_{1}$ and $\Lambda_{12}$.

As with all analyses involving uniform coding on spheres \cite{TanTom14,PPV10, Mol13}, it is imperative to 
control $\Lambda_{1}$ and $\Lambda_{12}$. For this purpose, we leverage the following lemma,
which is proved in Appendix~\ref{app:rn_prf}.  For concreteness, we make the dependence of certain
quantities appearing in Proposition \ref{prop:fein} on $n$ and $\rho$ explicit, e.g. $\Lambda_1(n,\rho)$.

\begin{lemma} \label{lem:rn_bd}
    Consider the setup of Proposition~\ref{prop:fein}, where the output distributions 
    are given by $Q_{\bY|\bX_2}:=(P_{X_1|X_2} W)^n$ and $Q_{\bY }:=(P_{X_1, X_2}  W)^n$ with
    $P_{X_1, X_2} :=\calN(\bzero,\bSigma(\rho))$ (see \eqref{eqn:sigmamatrix}), and the
    joint distribution $P_{\bX_1,\bX_2}$ is described by~\eqref{sc_px2}--\eqref{sc_px1}.  
    There exist sets $\calA_{1}$ and $\calA_{12}$ (depending on $n$ and $\rho$) such that 
    \begin{gather}
        \max_{\rho\in[0,1]} \max\{ \Lambda_{1}(n,\rho),\Lambda_{12}(n,\rho) \} \le \Lambda \label{eqn:Lambda_plus} \\
        \max_{\rho\in[0,1]} \max \big\{ \Pr\big((\bX_2,\bY) \notin \calA_1(n,\rho)\big), \Pr\big(\bY \notin \calA_{12}(n,\rho)\big)\big\} \le e^{- n\psi}, \label{eqn:atypical}
    \end{gather}
    for all $n>N$, where $\Lambda <\infty$, $\psi>0$ and $N \in\bbN$ are constants not depending on $\rho$.
\end{lemma}

Note that the uniformity of \eqref{eqn:Lambda_plus}--\eqref{eqn:atypical} in $\rho$ is
crucial for handling $\rho$ varying with $n$, as is required in Theorem \ref{thm:global}. 

\subsubsection{Analysis of the Random-Coding Error Probability for $\rho_n\to\rho\in[0,1)$}\label{sec:fixed_rho}

We now use Proposition~\ref{prop:fein} with the joint distribution $P_{\bX_1, \bX_2}$ in~\eqref{sc_px2}--\eqref{sc_px1}. By construction, the probability of either codeword violating the power constraint is zero.  We choose the  output distributions $Q_{\bY|\bX_2}:=(P_{X_1|X_2} W)^n$ and $Q_{\bY }:=(P_{X_1, X_2}  W)^n$  to be of the convenient product form. By using Lemma~\ref{lem:rn_bd} and Proposition~\ref{prop:fein}, we obtain
\begin{equation}
    \eps_n \le 1 - \Pr\left( \frac{1}{n}\sum_{i=1}^n\bj (X_{1i},X_{2i},Y_i) > \bR_n + \gamma\bone\right) + 2\Lambda e^{-n\gamma} + 2e^{-n\psi}\label{eqn:feinstein} 
\end{equation}
where the information density vector $\bj(x_1, x_2,y)$ is defined with respect to $P_{X_1|X_2}W(y|x_2)$ 
and $P_{X_1,X_2}W(y)$, which coincide with $Q_{Y|X_2}$ and $Q_Y$ in \eqref{eqn:out1}--\eqref{eqn:out2}.   
Choosing $\gamma := \frac{\log n}{2n}$, we notice that the final term in \eqref{eqn:feinstein} is ${2\Lambda}/{\sqrt{n}}$.
We thus obtain
\begin{equation}
    \eps_n \le \max_{(\bx_1,\bx_2)\in\calD_n(\rho)} 1 - \Pr\left( \frac{1}{n}\sum_{i=1}^n\bj (x_{1i},x_{2i},Y_i) > \bR_n + \gamma\bone\right)  + \frac{2\Lambda}{\sqrt{n}}  + 2e^{-n\psi}. \label{eqn:feinstein2}
\end{equation}

Using the definition of $\calD_n(\rho)$ in \eqref{eqn:setD} and the expressions
for the information densities in Appendix \ref{sec:moments}, we see that the empirical mean and empirical covariance of the information densities are exactly  equal to the true mutual information vector and dispersion matrix respectively, i.e.\ 
\begin{align}
\bbE\left[ \frac{1}{n}\sum_{i=1}^n\bj (x_{1i},x_{2i},Y_i) \right] & = \bI(\rho) \label{eqn:costmoment1} ,\quad\mbox{and} \\
\cov\left[ \frac{1}{\sqrt n}\sum_{i=1}^n\bj (x_{1i},x_{2i},Y_i) \right] &=  \bV(\rho)\label{eqn:costmoment2}
\end{align}
for all $(\bx_1,\bx_2)\in\calD_n(\rho)$. These are the analogues of 
\eqref{eqn:expectation_j}--\eqref{eqn:Vcalc} in the converse proof, with the slack 
parameters $\xi_1$ and $\xi_2$ replaced by zero. 
By applying the multivariate Berry-Esseen theorem~\cite{Got91, Bha10} (see Appendix~\ref{app:be}) to \eqref{eqn:feinstein2} and performing Taylor expansions similarly to Section \ref{sec:eval_vh}, we obtain
\begin{align}
\eps_n &\le 1 - \Psi\Big( \sqrt{n}\big(  I_1(\rho)- R_{1,n}\big), \sqrt{n}\big( I_{12}(\rho) - ( R_{1,n} + R_{2,n})\big) ; \bV(\rho)\Big) + \frac{\zeta(\rho,\delta)\log n}{\sqrt{n}} ,
\end{align}
where $\zeta(\rho,\delta)$ is a function depending only on $\rho$ and $\delta$,
and diverging only as $\rho\to1$. 
By inverting the relationship between the rates and the error probability similarly to Section \ref{sec:eval_vh}, 
we obtain the desired result for any sequence $\{\rho_n\}$  converging to some $\rho\in[0,1)$, i.e. the first part of the theorem. 

\subsubsection{Analysis of the Random-Coding Error Probability for $\rho_n\to1$}

We now consider a sequence of parameters such that $\rho_n\to1$.  Similarly to
\eqref{eqn:rho1aim2}, it suffices to show the achievability of $(R_{1,n},R_{2,n})$ satisfying 
\begin{align}
\begin{bmatrix} R_{1,n}  \\ R_{1,n}  + R_{2,n} \end{bmatrix} 
\ge \bI(\rho_n) + \sqrt{\frac{V_{12}(1)}{n}}\begin{bmatrix} 0 \\ \Phi^{-1}(\eps) \end{bmatrix} + o\left(\frac{1}{\sqrt n}\right) \bone, \label{eqn:rho1aim3}
\end{align}
rather than the equivalent form given by \eqref{eqn:rho1aim}; see the outer bound in Lemma~\ref{lem:bounds_set}.

Applying the union bound to one minus the probability in~\eqref{eqn:feinstein2}, we obtain
\begin{equation}
    \eps_n \le \Pr\bigg( \frac{1}{n}\sum_{i=1}^{n} j_1(x_{1i},x_{2i},Y_i) \le R_{1,n} + \gamma \bigg) + \Pr\bigg( \frac{1}{n}\sum_{i=1}^{n} j_1(x_{1i},x_{2i},Y_i) \le R_{1,n} + R_{2,n} + \gamma \bigg) + \frac{2\Lambda}{\sqrt{n}} + 2e^{-n\psi}  \label{eqn:rho1feinstein}
\end{equation}
for some $(\bx_1,\bx_2)\in\calD_n(\rho_n)$. 
The remaining arguments are again similar to Section 
\ref{sec:eval_vh2}, so we only provide a brief outline.  We fix a small $c>0$ and choose
\begin{equation}
    R_{1,n} = I_{1}(\rho_n) - \frac{c}{\sqrt{n}} - \gamma. \label{eqn:rho1R1n}
\end{equation}
Using \eqref{eqn:costmoment1}--\eqref{eqn:costmoment2} and applying
Chebyshev's inequality similarly to \eqref{eqn:rho1case2},
we see that 
\begin{equation}
    \Pr\bigg( \frac{1}{n}\sum_{i=1}^{n} j_1(x_{1i},x_{2i},Y_i) \le R_{1,n} + \gamma \bigg) \to 0
\end{equation}
for any $c>0$ (recall that $1-\rho_n\to0$ implies $V_{1}(\rho_n)\to0$).   
Hence, and applying the {\em univariate} Berry-Esseen theorem  \cite[Sec.~XVI.5]{feller} to the second probability 
in \eqref{eqn:rho1feinstein}, we obtain \eqref{eqn:rho1aim3} and the second 
part of Theorem \ref{thm:global}.

\section{Proof of Theorem~\ref{thm:local}: Local Second-Order Result}\label{sec:prf_local}

\subsection{Converse Part} \label{sec:localconverse}
We now present the  proof of the converse part of Theorem~\ref{thm:local}.    

\subsubsection{Proof for case (i) ($\rho=0$)}

To prove the converse part for case (i), it suffices to consider the
most optimistic case, namely $M_{2,n}=1$ (i.e. no information is sent
by the uninformed user).  From the single-user dispersion result given in
\cite{Hayashi09,PPV10} (cf.~\eqref{eqn:single_user}), the number of messages for user 1
must satisfy 
\begin{equation}
\log M_{1,n} \le nI_{1}(0) + \sqrt{nV_{1}(0)}\Phi^{-1}(\eps) + o(\sqrt{n}),
\end{equation}
thus proving the converse part of \eqref{eqn:second1}.

\subsubsection{Passage to a Convergent Subsequence}

In the remainder of the proof, we consider cases (ii) and (iii).
Fix  a correlation coefficient $\rho\in(0,1]$, and consider any sequence of 
$(n,M_{1,n},M_{2,n},S_1, S_2,\eps_n)$-codes satisfying \eqref{eqn:second_def}. 
Let us consider the associated rates $\{ (R_{1,n}, R_{2,n} ) \}_{n\in\bbN}$, where 
$R_{j,n}=\frac{1}{n}\log M_{j,n}$ for $j = 1,2$. As required by Definition~\ref{def;second}, we suppose that these codes satisfy 
\begin{align}
\liminf_{n\to\infty}R_{j,n} &\ge R_j^*,    \label{eqn:first_order_opt} \\
\liminf_{n\to\infty} {\sqrt{n}}\big( R_{j,n} -  R_j^*\big )&\ge L_j,\quad j = 1,2,  \label{eqn:sec_order_opt} \\
\limsup_{n\to\infty}\eps_n &\le\eps \label{eqn:error_prob}
\end{align}
for some $(R_1^*,R_2^*)$ on the boundary parametrized by $\rho$, i.e.\ $R_1^*=I_1(\rho)$ and 
$R_1^*+R_2^*=I_{12}(\rho)$. The first-order optimality condition in \eqref{eqn:first_order_opt} 
is not  explicitly required by Definition~\ref{def;second}, but it is implied by \eqref{eqn:sec_order_opt}.
Letting $\bR_n := [ R_{1,n},  R_{1,n}+ R_{2,n}]^T$, we have from the global converse bound in~\eqref{eqn:unions} 
that there exists at a (possibly non-unique) sequence $\{\rho_n\}_{n\in\bbN}\subset [-1,1]$ such that 
\begin{equation}
    \bR_n \in \bI(\rho_n)+\frac{\Psi^{-1}(\bV(\rho_n),\eps)}{\sqrt{n}} +  \overg(\rho_n,\eps,n) \bone.
\end{equation}
Since we used the $\liminf$ for the rates and $\limsup$ for the error probability in
Definition \ref{def;second}, we may pass to a \emph{convergent} (but otherwise arbitrary) subsequence 
of $\{\rho_n\}$, say indexed by $\{n_k\}_{k\in\bbN }$.  Recalling that the $\liminf$ 
(resp. $\limsup$)  is the infimum (resp. supremum) of all subsequential limits, any 
converse result associated with this subsequence also applies to the original sequence.
Note that at least one convergent subsequence is guaranteed to exist, since $[-1,1]$ is compact.  

For the sake of clarity, we avoid explicitly writing the subscript $k$. However, 
it should be understood that asymptotic notations such as $O(\cdot)$ and $(\cdot)_n \to (\cdot)$ 
are taken with respect to the convergent subsequence.

\subsubsection{Establishing The Convergence of $\rho_n$ to $\rho$} \label{sec:establishing1}

Although $\overg(\rho_n,\eps,n)$ depends on $\rho_n$, we know from Theorem
\ref{thm:global} that it is $o\big(\frac{1}{\sqrt n}\big)$ for both
$\rho_n\to\pm1$ and $\rho_n\to\rho\in(-1,1)$.  Hence, and making use of 
the previous step, we have
\begin{equation}
    \bR_n \in \bI(\rho_n)+\frac{\Psi^{-1}(\bV(\rho_n),\eps)}{\sqrt{n}} +  o\left(\frac{1}{\sqrt n}\right)\bone. \label{eqn:uniformconverse}
\end{equation}
We claim that this result implies that $\rho_n$ converges to $\rho$. Indeed, since the boundary 
of the capacity region is curved and uniquely  parametrized by $\rho$ for $\rho\in(0,1]$, 
$\rho_n\not\to\rho$ implies for some $\delta>0$ and for all sufficiently large $n$
that either $I_1(\rho_n) \le I_1(\rho)-\delta$ or $I_{12}(\rho_n) \le I_{12}(\rho) - \delta$. 
We also have from \eqref{eqn:uniformconverse} that  $R_{1,n}\le I_1(\rho_n)+\frac{\delta}{2}$ and 
$R_{1,n}+R_{2,n}\le I_{12}(\rho_n)+\frac{\delta}{2}$ for sufficiently large $n$.   Combining
these observations, we see that 
$R_{1,n}\le I_1(\rho)-\frac{\delta}{2}$ or $R_{1,n}+R_{2,n}\le I_{12}(\rho )-\frac{\delta}{2}$ . 
This, in turn, contradicts the first-order optimality conditions in \eqref{eqn:first_order_opt}.

\subsubsection{Taylor Expansion of the Mutual Information Vector} 

Because each entry of  $\bI(\rho)$ is twice continuously differentiable, a Taylor expansion yields 
\begin{align}
\bI(\rho_n) = \bI(\rho) + \bD(\rho) (\rho_n-\rho) + O\big( (\rho_n - \rho)^2 \big) \bone, \label{eqn:taylor_expand_I}
\end{align}
where $\bD(\rho)$ is the derivative of $\bI$ defined in \eqref{eqn:derI}.   In the same way, since each entry of $\bV(\rho)$ is continuously differentiable in $\rho$, we have 
\begin{align}
\|\bV(\rho_n) - \bV(\rho) \|_{\infty}   = O( \rho_n-\rho  )  . 
\end{align}
We claim that these expansions, along with \eqref{eqn:uniformconverse}, imply that
\begin{align}
\bR_n &\in\bI(\rho) + \bD(\rho) (\rho_n-\rho) + \frac{\Psi^{-1}(\bV(\rho ) ,\eps) }{\sqrt{n}} +     \left[ o\left(\frac{1}{\sqrt n}\right) +O\big( (\rho_n - \rho)^2 \big)+ O\left(\frac{ (\rho_n-\rho)^{1/2} }{\sqrt{n}} \right) \right]\bone. \label{eqn:after_taylor}
\end{align}
The final term in the square parentheses results from  the outer bound in Lemma \ref{lem:bounds_set}
for the case $\rho=1$.  For $\rho\in(0,1)$ a standard Taylor expansion yields \eqref{eqn:after_taylor} with
the last term replaced by $O\big(\frac{\rho_n-\rho}{\sqrt{n}}\big)$, and it follows that \eqref{eqn:after_taylor}
holds for any given $\rho\in(0,1]$.

\subsubsection{Completion of the Proof for Case (ii) ($\rho\in(0,1)$)}  \label{sec:completing}

Suppose for the time being that $\rho_n-\rho = O(\frac{1}{\sqrt{n}})$, and hence 
$\tau_n:=\sqrt{n}(\rho_n-\rho)$ is a bounded sequence. By the Bolzano-Weierstrass 
theorem~\cite[Thm.\ 3.6(b)]{Rudin}, $\{\tau_n\}$ contains a convergent 
subsequence, say indexed by $\{n'_k\}$; let the limit of this subsequence be $\beta\in\bbR$. 
For the blocklengths indexed by $n'_k$, we know from \eqref{eqn:after_taylor} that 
\begin{align}
\sqrt{n'_k}\big(\bR_{n'_k}-\bI(\rho)\big)\in\beta\, \bD(\rho) + \Psi^{-1}(\bV(\rho ) ,\eps)  +o(1) \, \bone, \label{eqn:subseq}
\end{align}
where the $o(1)$ term combines the $o\big(\frac{1}{\sqrt{n}}\big)$ term  in \eqref{eqn:after_taylor} 
and the deviation $(\tau_{n'_k} - \beta)\max \{ -D_1(\rho) ,  D_{12}(\rho) \}$.  
From the second-order optimality condition in \eqref{eqn:sec_order_opt}, we know that  
{\em every} convergent subsequence of $\{R_{j,n}\}_{n\in\bbN}$ has a subsequential 
limit that satisfies $\lim_{k\to\infty} \sqrt{n_k} \big(R_{j,n_k}-R_j^*)\ge L_j$ for 
$j = 1,2$. In other words,  for all $\gamma>0$, there exist an integer  $K_1$ such that 
\begin{align}
\sqrt{n'_k} \big(R_{1,n'_k}-I_1(\rho)\big) &\ge L_1-\gamma \\
\sqrt{n'_k} \big(R_{1,n'_k}+R_{1,n'_k}-I_{12}(\rho)\big) &\ge L_1+L_2-2\gamma 
\end{align}
for all   $k\ge K_1$. Thus, we may lower bound the components in the vector on the left of 
\eqref{eqn:subseq} by $L_1-\gamma$ and $L_1 + L_2-2\gamma$. There also exists an integer $K_2$ 
such that the $o(1)$  terms are upper bounded by $\gamma$ for all $k\ge K_2$. We conclude that any 
pair of $(\eps,R_1^*, R_2^*)$-second-order achievable rate pairs $(L_1, L_2)$ must satisfy
\begin{align}
\begin{bmatrix}
L_1-2\gamma \\ L_1 + L_2 -3\gamma
\end{bmatrix} \in \bigcup_{\beta\in\bbR}\left\{  \beta\, \bD(\rho)+ \Psi^{-1}(\bV(\rho ) ,\eps) \right\}.
\end{align}
Finally, since $\gamma>0$ is arbitrary, we can take $\gamma \downarrow 0$, thus yielding
the right-hand side of \eqref{eqn:second2}.
 
To complete the proof, we must handle the case that $\rho_n - \rho$ is not $O\big(\frac{1}{\sqrt n}\big)$.
By passing to another subsequence if necessary, we may assume that 
$\rho_n - \rho = \omega\big( \frac{1}{\sqrt n} \big)$.  Roughly speaking, 
in \eqref{eqn:after_taylor},  the term $\frac{1}{\sqrt{n}}\Psi^{-1}(\bV(\rho ) ,\eps)$  
is dominated by $\bD(\rho)(\rho_n-\rho)$, and hence the second-order term scales as
 $\omega(\frac{1}{\sqrt{n}})$ instead of the desired $\Theta(\frac{1}{\sqrt{n}})$.  
 To be more precise, because 
\begin{equation}
\Psi^{-1}(\bV(\rho),\eps) \subset\begin{bmatrix}
\sqrt{V_1(\rho)}\Phi^{-1}(\eps) \\ \sqrt{V_{12}(\rho)}\Phi^{-1}(\eps)
\end{bmatrix}^- ,
\end{equation}
the bound in~\eqref{eqn:after_taylor} implies that  $\bR_n$ must satisfy
\begin{align}
\bR_n \in\bI(\rho) + \bD(\rho) (\rho_n-\rho) +  \frac{1}{\sqrt{n}}\begin{bmatrix}
\sqrt{V_1(\rho)}\Phi^{-1}(\eps) \\ \sqrt{V_{12}(\rho)}\Phi^{-1}(\eps)
\end{bmatrix}^-  + o(\rho_n-\rho)\bone.
\end{align}
Therefore, we have
\begin{align}
\bR_n  \le\bI(\rho) + \bD(\rho) (\rho_n-\rho) + o( \rho_n-\rho )\bone. \label{eqn:omega_term}
\end{align}
Since the first entry of $\bD(\rho)$ is negative and the second entry is positive, \eqref{eqn:omega_term}
implies that at least one of the two $\liminf$ values in \eqref{eqn:sec_order_opt} is equal
to $-\infty$.  That is, there are either no values of $L_1$ or no values of $L_2$ such that
the desired second-order rate conditions are satisfied.  We conclude that this case plays no 
role in the characterization of $\calL$.

\subsubsection{Completion of the Proof for Case (iii) ($\rho=1$)}

The case $\rho=1$ is handled in essentially the same way as 
$\rho\in(0,1)$, so we only state the differences.  Since
$\beta$ represents the difference between $\rho_n$ and $\rho$,
and since $\rho_n\le1$, we should only consider the case
that $\beta \le 0$.  Furthermore, for $\rho=1$ the set
$\Psi^{-1}(\bV(\rho),\eps)$ can be written in a simpler form; see
Lemma~\ref{lem:bounds_set}.  Using this form,
we readily obtain \eqref{eqn:second3}.

\subsection{Direct Part}

We obtain the local result from the global result using a similar
(yet simpler) argument to the converse part in Section~\ref{sec:localconverse}.  
For fixed $\rho\in[0,1]$ and $\beta\in\bbR$, let 
\begin{equation}
    \rho_n := \rho + \frac{\beta}{\sqrt{n}}, \label{eqn:rho_sequence}
\end{equation}
where we require $\beta \ge 0$ (resp. $\beta\le0$) when $\rho=0$ (resp. $\rho=1$).
By Theorem~\ref{thm:global} (global bound) and the definition of $R_{\mathrm{in}}(n,\eps;\rho)$ in \eqref{eqn:Rin}, 
rate pairs $(R_{1,n},R_{2,n})$ satisfying
\begin{equation}
    \bR_n \in \bI(\rho_n)+\frac{\Psi^{-1}(\bV(\rho_n),\eps)}{\sqrt{n}} +  o\left(\frac{1}{\sqrt{n}}\right)\bone \label{eqn:direct_Rn}
\end{equation}
are $(n,\eps)$-achievable.
Substituting~\eqref{eqn:rho_sequence} into~\eqref{eqn:direct_Rn} and performing Taylor expansions in 
an identical fashion to the converse part (cf.\ the argument from~\eqref{eqn:taylor_expand_I} to~\eqref{eqn:after_taylor}), we obtain  
\begin{equation}
    \bR_n \in \bI(\rho) + \frac{\beta\, \bD(\rho)}{\sqrt{n}}+\frac{\Psi^{-1}(\bV(\rho),\eps)}{\sqrt{n}} + o\left(\frac{1}{\sqrt{n}}\right)\bone. \label{eqn:direct_Rn2}
\end{equation}
We immediately obtain the desired result for case (ii) where $\rho\in [0,1)$.  We also
obtain the desired result for case (iii) where $\rho=1$  using the alternative form
of $\Psi^{-1}(\bV(1),\eps)$ (see Lemma~\ref{lem:bounds_set}), similarly
to the converse proof.

For case (i), we substitute $\rho=0$ into \eqref{eqn:dvalues0}  and~\eqref{eqn:dvalues} to obtain
$\bD(\rho)=[0 ~ D_{12}(\rho)]^T$ with $D_{12}(\rho)>0$.  Since $\beta$
can be arbitrarily large, it follows from \eqref{eqn:direct_Rn2} that $L_2$ can take any real value.
Furthermore, the set $\Psi^{-1}(\bV(0),\eps)$ contains vectors with a first entry
arbitrarily close to $\sqrt{V_1(0)}\Phi^{-1}(\eps)$ (provided that the
other entry is sufficiently negative), and we thus obtain \eqref{eqn:second1}.

\appendices
\numberwithin{equation}{section}

\section{Moments of the Information Density Vector} \label{sec:moments}
Let $\rho\in[-1,1]$ be given, and recall the definition of the information density vector in \eqref{eqn:info_dens}, and the choices of $Q_{Y|X_2}$ and $Q_Y$ in \eqref{eqn:out1}--\eqref{eqn:out2}. For a given pair of sequences $(\bx_1, \bx_2)$, form the random vector
\begin{align}
\bA_n:=\frac{1}{\sqrt{n}}\sum_{i=1}^n \bj(x_{1i},x_{2i},Y_i), \label{eqn:An}
\end{align}  
where $Y_i|\{X_{1i}=x_{1i},X_{2i}=x_{2i}\}\sim W(\cdot|x_{1i}, x_{2i})$. Define the constants $\alpha:=S_1 (1-\rho^2)$, $\vartheta := S_1 + S_2 + 2\rho\sqrt{S_1 S_2}$ and $\kappa:=\rho\sqrt{{S_1}/{S_2}}$.  Then, it can be verified that 
\begin{align}
j_1(x_1, x_2,Y) &= \frac{1}{2}\log(1+\alpha)  -\frac{Z^2}{2} + \frac{(x_1-\kappa x_2+Z)^2}{2(1+\alpha)}     = \frac{-\alpha Z^2 + 2(x_1-\kappa x_2)Z }{ 2(1+\alpha)}  + f_1(x_1,x_2), \label{eqn:j1_def}\\
j_{12}(x_1, x_2,Y) &= \frac{1}{2}\log(1+\vartheta) -\frac{Z^2}{2} + \frac{(x_1+x_2+Z)^2}{2(1+\vartheta)} = \frac{-\vartheta Z^2 + 2(x_1 + x_2 ) Z }{ 2(1+\vartheta)} +f_{12}(x_1, x_2)  ,\label{eqn:j12_def}
\end{align}
where $Z:=Y-x_1-x_2\sim\calN(0,1)$ and $f_1(x_1, x_2)$ and $f_{12}(x_1, x_2)$ are deterministic functions that will not affect the covariance matrix. Taking the expectation, we obtain
\begin{align}
\bbE\big[ j_1(x_1, x_2,Y)\big] &  =  \frac{1}{2}\log(1+\alpha) -\frac{1}{2} + \frac{1+(x_1-\kappa x_2)^2}{2(1+\alpha) } =  \frac{1}{2}\log(1+\alpha) + \frac{(x_1-\kappa x_2)^2-\alpha}{2(1+\alpha) }  \label{eqn:mean_j1} , \\
\bbE\big[ j_{12}(x_1, x_2,Y)\big]& = \frac{1}{2}\log(1+\vartheta) -\frac{1}{2} + \frac{1+(x_1+x_2)^2}{2(1+\vartheta)} = \frac{1}{2}\log(1+\vartheta) +\frac{(x_1+x_2)^2 - \vartheta}{2(1+\vartheta)} \label{eqn:mean_j2} . 
\end{align}
Setting $x_1 \leftarrow x_{1i}$, $x_2\leftarrow x_{2i}$ and $Y \leftarrow Y_i$  in \eqref{eqn:mean_j1} and \eqref{eqn:mean_j2} and summing  over all $i$, we conclude that  the    mean vector of $\bA_n$   is 
\begin{align}
\bbE \big[\bA_n\big] = \sqrt{n}\left[ 
 \rvC(\alpha) + \frac{  \|\bx_1 -   \kappa\bx_2\|_2^2  - n \alpha }{2n(1+\alpha) }   \quad\,\,
 \rvC(\vartheta) +  \frac{  \|\bx_1 + \bx_2 \|_2^2 -  n \vartheta}{2n(1+ \vartheta ) }   
\right]^T . \label{eqn:meanA}
\end{align}
 From \eqref{eqn:j1_def} and \eqref{eqn:j12_def}, we deduce that
\begin{align}
\var\big[j_1(x_1, x_2,Y) \big] &= \var\Big[ \frac{-\alpha Z^2 + 2(x_1-\kappa x_2)Z }{ 2(1+\alpha)} \Big] = \frac{\alpha^2+2(x_1-\kappa x_2)^2}{ 2(1+\alpha)^2}, \label{eqn:var1} \\
\var\big[j_{12}(x_1, x_2,Y) \big]&= \var\Big[ \frac{-\vartheta Z^2 + 2(x_1 + x_2 ) Z }{ 2(1+\vartheta)} \Big] = \frac{\vartheta^2+2(x_1+x_2)^2}{ 2(1+\vartheta)^2}, \label{eqn:var2}
\end{align}
where we have used $\var[Z^2]=2$ and $\cov[Z^2, Z]=\bbE Z^3 - (\bbE Z )(\bbE Z^2 )=0$.  
The covariance is 
\begin{align}
&\cov\big[j_1(x_1, x_2,Y), j_{12}(x_1, x_2,Y) \big]  =\cov\Big[\frac{-\alpha Z^2 + 2(x_1-\kappa x_2)Z }{ 2(1+\alpha)} , \frac{-\vartheta Z^2 + 2(x_1 + x_2 ) Z }{ 2(1+\vartheta)} \Big]\\
&=\frac{1}{4(1+\alpha)(1+\vartheta)}\Big\{ \bbE\big[ ( -\alpha Z^2 + 2(x_1-\kappa x_2 ) Z)(-\vartheta Z^2 + 2(x_1 +x_2)Z)  \big] \nn\\
&\qquad\qquad\qquad\qquad\qquad\qquad  -\bbE\big[   -\alpha Z^2 + 2(x_1-\kappa x_2 ) Z \big]\bbE\big[-\vartheta Z^2 + 2(x_1 +x_2)Z   \big] \Big\}\\
&=\frac{3\alpha\vartheta+ 4(x_1-\kappa x_2 )(x_1 +x_2) -\alpha\vartheta}{4(1+\alpha)(1+\vartheta)}=\frac{ \alpha\vartheta+ 2(x_1^2+(1-\kappa)x_1x_2-\kappa x_2^2   ) }{ 2(1+\alpha)(1+\vartheta)}. \label{eqn:cov}
\end{align}
Setting $x_1 \leftarrow x_{1i}$, $x_2\leftarrow x_{2i}$ and $Y \leftarrow Y_i$  in \eqref{eqn:var1}, \eqref{eqn:var2} and \eqref{eqn:cov} and summing  over all $i$, we conclude that  covariance  matrix  of $\bA_n$  is
\begin{align}
\cov\big[ \bA_n\big]  \!=\! \left[\begin{array}{cc}
 \dfrac{  n\alpha^2 +2 \|\bx_1 - \kappa\bx_2\|_2^2    }{2n( 1+ \alpha)^2} & \dfrac{  n\alpha\vartheta \! + \! 2( \|\bx_1\|_2^2 \!+\! ( 1\!-\! \kappa) \langle \bx_1, \bx_2 \rangle \!-\! \kappa \|\bx_2\|_2^2 ) }{2n(1+ \alpha)(1+\vartheta)}\\[2ex]
 \dfrac{  n\alpha\vartheta \! +\! 2( \|\bx_1\|_2^2 \!+\! ( 1\!-\! \kappa) \langle \bx_1, \bx_2 \rangle \!-\! \kappa \|\bx_2\|_2^2 ) }{2n(1+ \alpha)(1+\vartheta)} &    \dfrac{ n\vartheta^2 +2\|\bx_1 + \bx_2\|_2^2  }{2n ( 1+ \vartheta)^2}
\end{array} \right] . \label{eqn:element_a} 
\end{align} 

In the remainder of the section, we analyze the third absolute moments associated with $\bA_n$ appearing
in the multivariate Berry-Esseen theorem~\cite{Got91, Bha10} (see Appendix \ref{app:be}).  The following lemma
will be used to replace any given $(\bx_1,\bx_2)$ pair by an ``equivalent'' pair (in the sense
that the statistics of $\bA_n$ are unchanged) for which the
corresponding third moments have the desired behavior.  This is analogous to
Polyanskiy \emph{et al.}~\cite{PPV10}, where for  the AWGN channel, one can use  a  spherical symmetry argument to replace
any given sequence $\bx$ such that $\|\bx\|_2^2=nS$ with a fixed sequence 
$(\sqrt{S},\cdots,\sqrt{S})$. In fact, this symmetry argument has been used by many other authors including Shannon \cite{Sha59b}.

\begin{lemma} \label{lem:dependence}
    The joint distribution of $\bA_n$ depends on $(\bx_1,\bx_2)$ only through the
    powers $\|\bx_1\|_2^2$, $\|\bx_2\|_2^2$ and the inner product $\langle\bx_1, \bx_2\rangle$.
\end{lemma}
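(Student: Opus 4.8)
The plan is to realize $\bA_n$ as a deterministic function, evaluated at the Gaussian noise vector, of a triple of scalar statistics whose joint law visibly depends on $(\bx_1,\bx_2)$ only through $\|\bx_1\|_2^2$, $\|\bx_2\|_2^2$ and $\langle\bx_1,\bx_2\rangle$. Write $\bZ:=(Z_1,\ldots,Z_n)$ with $Z_i:=Y_i-x_{1i}-x_{2i}$, so that $\bZ\sim\calN(\bzero,\bI_n)$, and set $\ba:=\bx_1-\kappa\bx_2$ and $\bb:=\bx_1+\bx_2$. Plugging the decompositions \eqref{eqn:j1_def}--\eqref{eqn:j12_def} into the definition \eqref{eqn:An} of $\bA_n$, and using $-\alpha\sum_i Z_i^2+2\sum_i(x_{1i}-\kappa x_{2i})Z_i=-\alpha\|\bZ\|_2^2+2\langle\ba,\bZ\rangle$ together with the analogous identity for the second coordinate, one obtains
\begin{equation}
\bA_n=\begin{bmatrix}\dfrac{-\alpha}{2(1+\alpha)}\\ \dfrac{-\vartheta}{2(1+\vartheta)}\end{bmatrix}\frac{\|\bZ\|_2^2}{\sqrt n}+\begin{bmatrix}\dfrac{1}{1+\alpha}\,\dfrac{\langle\ba,\bZ\rangle}{\sqrt n}\\ \dfrac{1}{1+\vartheta}\,\dfrac{\langle\bb,\bZ\rangle}{\sqrt n}\end{bmatrix}+\frac{1}{\sqrt n}\sum_{i=1}^n\begin{bmatrix}f_1(x_{1i},x_{2i})\\ f_{12}(x_{1i},x_{2i})\end{bmatrix}.
\end{equation}
Hence $\bA_n=\bG\!\big(\|\bZ\|_2^2,\langle\ba,\bZ\rangle,\langle\bb,\bZ\rangle\big)$ for an affine map $\bG$ whose intercept is the last, purely deterministic, vector.

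Next I would check that this intercept depends on $(\bx_1,\bx_2)$ only through the three target quantities: reading off \eqref{eqn:j1_def}--\eqref{eqn:j12_def} gives $f_1(x_1,x_2)=\tfrac12\log(1+\alpha)+\tfrac{(x_1-\kappa x_2)^2}{2(1+\alpha)}$ and $f_{12}(x_1,x_2)=\tfrac12\log(1+\vartheta)+\tfrac{(x_1+x_2)^2}{2(1+\vartheta)}$, so that $\sum_i f_1(x_{1i},x_{2i})$ and $\sum_i f_{12}(x_{1i},x_{2i})$ are, up to $\rho$-dependent constants, the quadratic forms $\tfrac{\|\ba\|_2^2}{2(1+\alpha)}$ and $\tfrac{\|\bb\|_2^2}{2(1+\vartheta)}$, and both $\|\ba\|_2^2=\|\bx_1\|_2^2-2\kappa\langle\bx_1,\bx_2\rangle+\kappa^2\|\bx_2\|_2^2$ and $\|\bb\|_2^2=\|\bx_1\|_2^2+2\langle\bx_1,\bx_2\rangle+\|\bx_2\|_2^2$ have the required form. (This is consistent with the explicit first and second moments in \eqref{eqn:meanA} and \eqref{eqn:element_a}, which also depend on $(\bx_1,\bx_2)$ only through these quantities, although that alone does not pin down the full law.)

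The crux is to show that the joint law of the triple $\big(\|\bZ\|_2^2,\langle\ba,\bZ\rangle,\langle\bb,\bZ\rangle\big)$ depends on $(\ba,\bb)$ only through the Gram matrix $\bigl[\begin{smallmatrix}\|\ba\|_2^2&\langle\ba,\bb\rangle\\ \langle\ba,\bb\rangle&\|\bb\|_2^2\end{smallmatrix}\bigr]$, which I would do via rotational invariance of $\calN(\bzero,\bI_n)$. Given another pair $(\ba',\bb')$ with the same Gram matrix, equality of Gram matrices forces the same linear-dependence relations and equal ranks of $\{\ba,\bb\}$ and $\{\ba',\bb'\}$, so the assignment $\ba\mapsto\ba'$, $\bb\mapsto\bb'$ extends to a linear isometry of $\spn\{\ba,\bb\}$ onto $\spn\{\ba',\bb'\}$, and then to an orthogonal matrix $\bQ$ on $\bbR^n$. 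Since $\bQ^T\bZ\dequal\bZ$ and $\bQ$ preserves norms and inner products, writing $\bZ':=\bQ^T\bZ$ gives $\big(\|\bZ\|_2^2,\langle\ba',\bZ\rangle,\langle\bb',\bZ\rangle\big)=\big(\|\bZ'\|_2^2,\langle\ba,\bZ'\rangle,\langle\bb,\bZ'\rangle\big)\dequal\big(\|\bZ\|_2^2,\langle\ba,\bZ\rangle,\langle\bb,\bZ\rangle\big)$. Combining this with the first two steps, and noting that the Gram matrix of $(\ba,\bb)=(\bx_1-\kappa\bx_2,\bx_1+\bx_2)$ is itself a function of $\|\bx_1\|_2^2,\|\bx_2\|_2^2,\langle\bx_1,\bx_2\rangle$ alone, the law of $\bA_n$ is determined by these three quantities. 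The one place needing genuine care is the extension of a $2$-dimensional isometry to an orthogonal transformation of $\bbR^n$: this is standard, but one must treat the rank-$0$ and rank-$1$ degenerate cases and use $n\ge2$ when $\spn\{\ba,\bb\}$ is two-dimensional. Everything else is bookkeeping.
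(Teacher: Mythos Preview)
Your proof is correct and follows essentially the same idea as the paper's: both expand $\bA_n$ via \eqref{eqn:j1_def}--\eqref{eqn:j12_def} into a function of $\|\bZ\|_2^2$ and inner products of $\bZ$ with linear combinations of $\bx_1,\bx_2$, then invoke the rotational invariance of $\bZ\sim\calN(\bzero,\bI_n)$ to conclude that only the Gram data matter. The paper's version is a brief sketch that stops at ``since $\bZ$ is i.i.d.\ Gaussian, the last two terms depend on $(\bx_1,\bx_2)$ only through $\|\bx_1\|_2^2$ and $\|\bx_2\|_2^2$,'' whereas you spell out the orthogonal-matrix argument in full (including the joint law with $\|\bZ\|_2^2$ and the degenerate-rank caveat), which is a genuine improvement in rigor but not a different method.
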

\begin{proof}
    This follows by substituting \eqref{eqn:j1_def}--\eqref{eqn:j12_def} into \eqref{eqn:An}
    and using the symmetry of the additive noise sequence $\bZ=(Z_1, \ldots,  Z_n)$.  
    For example, from \eqref{eqn:j1_def}, the first entry of $\bA_n$ can be written as
    \begin{equation}
        \frac{1}{\sqrt n}\left( \frac{n}{2}\log(1+\alpha) - \frac{1}{2}\|\bZ\|_2^2 + \frac{1}{2(1+\alpha)}\|\bx_1-\kappa\bx_2+\bZ\|_2^2 \right),
    \end{equation}
    and the desired result follows by writing 
    \begin{equation}
        \|\bx_1-\kappa\bx_2+\bZ\|^2 = \|\bx_1\|^2 + \kappa^2\|\bx_2\|^2 + \|\bZ\|^2 - 2\kappa\langle\bx_1,\bx_2\rangle + 2\langle\bx_1-\kappa\bx_2,\bZ\rangle.\label{eqn:dependence}
    \end{equation}
    Since $\bZ$ is i.i.d. Gaussian (and in particular, circularly symmetric), the 
    distribution of the final term depends on $(\bx_1,\bx_2)$ only through $\|\bx_1 - \kappa\bx_2\|$, 
    which in turn depends only on $\|\bx_1\|_2^2$, $\|\bx_2\|_2^2$ and $\langle\bx_1, \bx_2\rangle$.
\end{proof}

We now provide lemmas showing that, upon replacing a given pair $(\bx_1,\bx_2)$
with an equivalent pair using Lemma~\ref{lem:dependence} if necessary, the
corresponding third moments have the desired behavior. It will prove useful 
to work with the empirical correlation coefficient
\begin{equation}
   \rho_{\mathrm{emp}}(\bx_1, \bx_2):= \frac{\langle\bx_1, \bx_2\rangle}{\|\bx_1\|_2\|\bx_2\|_2}.
\end{equation}
It is easily seen that Lemma \ref{lem:dependence} remains true when the
inner product $\langle\bx_1,\bx_2\rangle$ is replaced by this normalized  quantity.

\begin{lemma} \label{lem:T_bd}
For any fixed $\tilde{\rho} \in [-1,1]$, $S_1>0$ and $S_2>0$,  
there exists a sequence of  pairs $(\bx_1,\bx_2)$ (indexed by increasing lengths $n$) such that $\|\bx_1\|_2^2 = nS_1$, $\|\bx_2\|_2^2 = nS_2$, 
$\rho_{\mathrm{emp}}(\bx_1,\bx_2) = \tilde{\rho}$, and
\begin{equation}
   \tilde{T}_n := \sum_{i=1}^n \bbE\left[ \Big\|  \frac{1}{\sqrt{n}} \big(\bj(x_{1i},x_{2i}, Y_i) - \bbE[\bj(x_{1i},x_{2i}, Y_i)]\big)\Big\|_2^3 \right] = O\left(\frac{1}{\sqrt{n}}\right), \label{eqn:Bn}
\end{equation}
where the $O\big(\frac{1}{\sqrt{n}}\big)$ term is uniform in $\tilde{\rho} \in [-1,1]$.
\end{lemma}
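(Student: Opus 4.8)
The plan is to exhibit an explicit ``canonical'' pair $(\bx_1,\bx_2)$ for each $n$ and each target correlation $\tilde\rho$, and then bound the third absolute moment term-by-term using the closed-form expressions for $\bj$ derived earlier in this appendix. First I would use Lemma~\ref{lem:dependence} to reduce the problem: since the joint law of $\bA_n$ depends on $(\bx_1,\bx_2)$ only through $\|\bx_1\|_2^2$, $\|\bx_2\|_2^2$ and $\langle\bx_1,\bx_2\rangle$ (equivalently $\rho_{\mathrm{emp}}$), it suffices to pick \emph{any} convenient pair with the prescribed powers $nS_1$, $nS_2$ and empirical correlation $\tilde\rho$. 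The natural choice is the ``constant'' pair in which every coordinate of $\bx_1$ equals $\sqrt{S_1}$ and every coordinate of $\bx_2$ equals $\sqrt{S_2}\,\sgn(\tilde\rho)$ on a $|\tilde\rho|$-fraction of the coordinates and $-\sqrt{S_2}\,\sgn(\tilde\rho)$ on the rest (adjusting one coordinate for divisibility, which only perturbs things by $O(1/n)$ and is absorbed into the $O(1/\sqrt n)$ bound); more simply, since we only need \emph{existence} of one good sequence, I would take $x_{1i}=\sqrt{S_1}$ for all $i$ and choose the signs of $x_{2i}=\pm\sqrt{S_2}$ so that the inner product equals $n\tilde\rho\sqrt{S_1S_2}$. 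With this choice, each summand $\bj(x_{1i},x_{2i},Y_i)$ takes one of at most two possible distributions (depending on the sign pattern), so the per-term third moments are bounded by a single universal constant.

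The key computation is then the following: by \eqref{eqn:j1_def}--\eqref{eqn:j12_def}, each component $j_\ell(x_{1i},x_{2i},Y_i)-\bbE[\,\cdot\,]$ is a quadratic-plus-linear function of the standard Gaussian $Z_i$, namely $\frac{-\alpha(Z_i^2-1)+2(x_{1i}-\kappa x_{2i})Z_i}{2(1+\alpha)}$ and the analogous expression with $\vartheta$. Using the elementary inequality $\|\bv\|_2^3 \le C(|v_1|^3+|v_2|^3)$ and the fact that $\bbE|aZ^2-a+bZ|^3 \le C'(|a|^3+|b|^3)$ for a standard Gaussian $Z$ (all absolute moments of $Z$ are finite constants), I would bound the $i$-th summand of $\tilde T_n$ by $\frac{1}{n^{3/2}}$ times a constant depending only on $\alpha,\vartheta,\kappa$ and $|x_{1i}-\kappa x_{2i}|,|x_{1i}+x_{2i}|$. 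Since with the canonical choice $|x_{1i}|\le\sqrt{S_1}$ and $|x_{2i}|\le\sqrt{S_2}$ for all $i$, and since $\alpha=S_1(1-\tilde\rho^2)\in[0,S_1]$, $\vartheta=S_1+S_2+2\tilde\rho\sqrt{S_1S_2}\in[0,(\sqrt{S_1}+\sqrt{S_2})^2]$, $\kappa=\tilde\rho\sqrt{S_1/S_2}\in[-\sqrt{S_1/S_2},\sqrt{S_1/S_2}]$ are all bounded uniformly over $\tilde\rho\in[-1,1]$, each summand is at most $\frac{C''}{n^{3/2}}$ with $C''$ independent of $\tilde\rho$ and $n$. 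Summing over the $n$ terms gives $\tilde T_n \le C''/\sqrt n$, which is exactly the claim, with the uniformity in $\tilde\rho$ as asserted.

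The main obstacle, such as it is, is mostly bookkeeping rather than conceptual: one has to make sure the reduction via Lemma~\ref{lem:dependence} genuinely lets us ignore the precise sequence and only track $(\|\bx_1\|_2^2,\|\bx_2\|_2^2,\rho_{\mathrm{emp}})$, and that the edge cases $\tilde\rho=\pm1$ (where $\alpha=0$, so the first component of $\bj$ becomes $\frac{(x_{1i}-\kappa x_{2i})^2}{0}$ formally) are handled. But at $\tilde\rho=\pm1$ one has $x_{1i}=\kappa x_{2i}$ identically for the canonical pair (since $\kappa\sqrt{S_2}=\pm\sqrt{S_1}$), so the numerator $x_{1i}-\kappa x_{2i}$ vanishes and $j_1\equiv\frac12\log(1+\alpha)-\frac{Z_i^2}{2}+\frac{Z_i^2}{2(1+\alpha)}$ has a finite third moment even as $\alpha\to0$ (it tends to $0$); more carefully, $j_1 = \frac{-\alpha(Z_i^2-1)+2(x_{1i}-\kappa x_{2i})Z_i}{2(1+\alpha)}$ with both $\alpha\to0$ and $x_{1i}-\kappa x_{2i}\to0$, so the whole expression is $O(1-\tilde\rho^2)$-small and certainly uniformly bounded. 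Hence the bound $C''$ can indeed be taken uniform over the closed interval $[-1,1]$. I would close by remarking that a slightly sharper version of this argument — tracking the $\Theta(1-\tilde\rho)$ scaling of the first-component moment — gives the related Lemma~\ref{lem:tildeT} and Lemma~\ref{lem:T1_rho_bd} used elsewhere, but for the present statement the crude uniform constant suffices.
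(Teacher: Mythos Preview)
Your proposal is correct and follows essentially the same route as the paper: pick a canonical pair with $x_{1i}=\sqrt{S_1}$ and $x_{2i}\approx\pm\sqrt{S_2}$, use the explicit quadratic-plus-linear form \eqref{eqn:j1_def}--\eqref{eqn:j12_def} of $j_1,j_{12}$ in $Z$ to bound each per-coordinate third absolute moment by a constant depending only on $S_1,S_2$, and sum to get $O(n^{-1/2})$ uniformly in $\tilde\rho$.

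The one place you are looser than the paper is the ``adjust one coordinate for divisibility'' step: a single free coordinate cannot simultaneously enforce $\|\bx_2\|_2^2=nS_2$ and $\rho_{\mathrm{emp}}(\bx_1,\bx_2)=\tilde\rho$ exactly, and the lemma as stated demands both equalities. The paper handles this by perturbing \emph{two} entries of $\bx_2$, taking them to be $\sqrt{S_2(1+\eta)}$ and $-\sqrt{S_2(1-\eta)}$ so that the power constraint is preserved identically while $\eta\in(-1,1)$ tunes the inner product continuously (with a separate but equally simple choice when $|\tilde\rho|>\frac{n-1}{n}$). Once you make this adjustment your argument goes through verbatim, since the perturbed entries still satisfy $|x_{2i}|\le\sqrt{2S_2}$ and hence fall under your uniform per-term bound.
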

\begin{proof}
Using the fact that  $\|\bv\|_2\le  \|\bv\|_1$ and $(|a|+|b|)^3 \le 4|a|^3 + 4|b|^3$, we obtain
\begin{align}
 \tilde{T}_n &\le\sum_{i=1}^n \bbE\left[ \Big\|  \frac{1}{\sqrt{n}} \big(\bj(x_{1i},x_{2i}, Y_i) - \bbE[\bj(x_{1i},x_{2i}, Y_i)]\big)\Big\|_1^3 \right]  \label{eqn:Bn_bd1}\\
 &\le 4\sum_{i=1}^n \bbE\left[ \Big|  \frac{1}{\sqrt{n}} \big(j_1(x_{1i},x_{2i}, Y_i) - \bbE[j_1(x_{1i},x_{2i}, Y_i)]\big)\Big|^3 \right] \nn\\
 &\qquad+ 4\sum_{i=1}^n \bbE\left[ \Big|  \frac{1}{\sqrt{n}} \big(j_{12}(x_{1i},x_{2i}, Y_i) - \bbE[j_{12}(x_{1i},x_{2i}, Y_i)]\big)\Big|^3 \right]\label{eqn:Bn_bd2} .
\end{align}
We now specify $(\bx_1,\bx_2)$ whose powers and correlation match those given
in the lemma statement.  Assuming for the time being that 
$|\tilde{\rho}| \le \frac{n-1}{n}$, we choose
\begin{align}
    \bx_1 &= \big(\sqrt{S_1},\cdots,\sqrt{S_1}\big) \label{eqn:seq1} \\
    \bx_2 &= \big(\sqrt{S_2(1+\eta)},\sqrt{S_2},\cdots,\sqrt{S_2},-\sqrt{S_2(1-\eta)},-\sqrt{S_2},\cdots,-\sqrt{S_2}\big), \label{eqn:seq2}
\end{align}
where $\eta\in(-1,1)$, and $\bx_2$ contains $k\ge1$ negative entries and 
$n-k\ge1$ positive entries.  It is easily seen that $\|\bx_1\|_2^2=nS_1$ and 
$\|\bx_2\|_2^2=nS_2$, as desired.  Furthermore, we can choose $k$ and $\eta$
to obtain the desired correlation since
\begin{equation}
    \langle\bx_1,\bx_2\rangle=\big(n-2(k-1)+\sqrt{1+\eta}-\sqrt{1-\eta}\big)\sqrt{S_1S_2}, \label{eqn:innerprod}
\end{equation}
and since the range of the function $f(\eta):=\sqrt{1+\eta}-\sqrt{1-\eta}$
for $\eta\in(-1,1)$ is given by $(-\sqrt{2},\sqrt{2})$.  

Using \eqref{eqn:j1_def}--\eqref{eqn:j12_def}, it can easily be verified that the third absolute
moment of each entry of $\bj(x_1,x_2,Y)$ (i.e.\ $\bbE \big|j_1(x_1, x_2,Y)-\bbE [j_1(x_1, x_2,Y)] \big|^3$ and $\bbE \big|j_{12}(x_1, x_2,Y)-\bbE [j_{12}(x_1, x_2,Y)] \big|^3$) is bounded above by some constant  for any
$(x_1,x_2)=(\sqrt{S_1},\pm\sqrt{cS_2})$ ($c\in(0,2)$).  We thus obtain \eqref{eqn:Bn}
using \eqref{eqn:Bn_bd2}. The proof is concluded by noting that a similar argument applies for the
case $\tilde{\rho} \in (\frac{n-1}{n},1]$ by replacing \eqref{eqn:seq2} by
\begin{equation}
    \bx_2 = \big(\sqrt{S_2(1+\eta)},\sqrt{S_2(1-\eta)},\sqrt{S_2},\cdots,\sqrt{S_2}\big), \label{eqn:seq2a}
\end{equation}
and similarly (with negative entries) when $\tilde{\rho} \in [-1,\frac{n-1}{n})$.
\end{proof}

\section{A Multivariate Berry-Esseen Theorem} \label{app:be}
In this section, we state a version of the multivariate Berry-Esseen theorem~\cite{Got91, Bha10} that is suited for our needs in this paper. The following is a restatement  of Corollary 38 in \cite{WKT13}. 

\begin{theorem} \label{thm:multidimensional-berry-esseen}
Let $\bU_1, \ldots, \bU_n$ be independent, zero-mean random vectors in $\bbR^d$. Let $\hat{\bZ}_n := \frac{1}{\sqrt{n}}(\bU_1 + \cdots + \bU_n)$,
Assume $\bV:=\cov(\hat{\bZ}_n)$ is positive definite with minimum eigenvalue $\lambda_{\min}(\bV)>0$. Let $t := \frac{1}{n} \sum_{i=1}^n \bbE\big[ \| \bU_i \|_2^3\big]$ and let $\bZ$ be a zero-mean  Gaussian random vector with covariance $\bV$. Then, for all $n \in \mathbb{N}$, 
\begin{equation}
\sup_{\mathscr{C} \in \mathfrak{C}_d} \big| \Pr( \hat{\bZ}_n \in \mathscr{C} ) - \Pr ( \bZ \in \mathscr{C} ) \big| 
\le \frac{k_d\, t}{\lambda_{\min}(\bV)^{3/2} \sqrt{n}},
\end{equation}
where $\mathfrak{C}_d$ is the family of all convex, Borel measurable  subsets of $\mathbb{R}^d$,  and $k_d$ is a function only of the dimension $d$ (e.g., $k_2 = 265$).
\end{theorem}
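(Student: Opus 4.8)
Since Theorem~\ref{thm:multidimensional-berry-esseen} is a restatement of \cite[Cor.~38]{WKT13}, we only outline how one would establish it; the plan is to reduce to the isotropic case and then run the classical Fourier-analytic smoothing argument, whose crucial input is a linear bound on the Gaussian measure of thin shells around the boundaries of \emph{convex} sets. \emph{Step 1 (Standardization).} Put $\bW_i:=\bV^{-1/2}\bU_i$ and $\bM_n:=\bV^{-1/2}\bL_n=\tfrac{1}{\sqrt n}\sum_{i=1}^n\bW_i$, so that $\cov(\bM_n)=\tfrac1n\sum_i\cov(\bW_i)=\bI_d$. For any convex Borel $\mathscr{C}$, the set $\bV^{-1/2}\mathscr{C}$ is again convex and Borel, $\bV^{-1/2}\bZ\sim\calN(\bzero,\bI_d)$, and $\|\bV^{-1/2}\|_{\mathrm{op}}=\lambda_{\min}(\bV)^{-1/2}$, whence $\bbE\|\bW_i\|_2^3\le\lambda_{\min}(\bV)^{-3/2}\bbE\|\bU_i\|_2^3$. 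Writing $\gamma_d$ for the standard $d$-dimensional Gaussian law and $\bar t:=\tfrac1n\sum_i\bbE\|\bW_i\|_2^3$, it therefore suffices to prove $\sup_{\mathscr{C}\in\mathfrak{C}_d}|\Pr(\bM_n\in\mathscr{C})-\gamma_d(\mathscr{C})|\le k_d\,\bar t/\sqrt n$, since $\bar t\le\lambda_{\min}(\bV)^{-3/2}t$.

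\emph{Step 2 (Smoothing lemma and convex perimeter bound).} By the $d$-dimensional Esseen smoothing inequality (Bhattacharya--Rao, Sazonov), for any $T>0$ the quantity $\sup_{\mathscr{C}\in\mathfrak{C}_d}|\Pr(\bM_n\in\mathscr{C})-\gamma_d(\mathscr{C})|$ is at most a dimension-only constant times $\int_{\|\bt\|_2\le T}|\phi_n(\bt)-e^{-\|\bt\|_2^2/2}|\,\|\bt\|_2^{-1}\,\rmd\bt + \omega_d(c_d/T)$, where $\phi_n(\bt):=\bbE[e^{\mathrm{i}\langle\bt,\bM_n\rangle}]$ and $\omega_d(\delta):=\sup_{\mathscr{C}\in\mathfrak{C}_d}\gamma_d\big((\partial\mathscr{C})^\delta\big)$ is the Gaussian oscillation modulus over convex sets. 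The decisive point — and the only place where convexity, rather than mere Borel measurability, enters — is the classical estimate $\omega_d(\delta)\le a_d\,\delta$; hence the second term is $O(1/T)$.

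\emph{Step 3 (Characteristic-function estimate and conclusion).} Choose $T:=c_0\sqrt n/\bar t$ for a small absolute constant $c_0$. Since $\bbE\bW_j=\bzero$, a second-order Taylor bound gives $|\phi_{\bW_j}(\bt/\sqrt n)|\le\exp\!\big(-\tfrac12\bt^\top\cov(\bW_j)\bt/n+\tfrac16\|\bt\|_2^3\bbE\|\bW_j\|_2^3/n^{3/2}\big)$, so that, using $\tfrac1n\sum_j\cov(\bW_j)=\bI_d$, $|\phi_n(\bt)|=\prod_j|\phi_{\bW_j}(\bt/\sqrt n)|\le\exp(-\tfrac12\|\bt\|_2^2+\tfrac16\|\bt\|_2^3\bar t/\sqrt n)\le e^{-\|\bt\|_2^2/4}$ on $\|\bt\|_2\le T$. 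A third-order Taylor expansion of $\log\phi_{\bW_j}$ about the origin, together with the matched first and second moments, then yields $|\phi_n(\bt)-e^{-\|\bt\|_2^2/2}|\le C\,(\bar t/\sqrt n)\,P(\|\bt\|_2)\,e^{-\|\bt\|_2^2/4}$ on the same ball for a fixed polynomial $P$. Integrating, the first term in Step 2 is $O(\bar t/\sqrt n)$, and the second is $O(1/T)=O(\bar t/\sqrt n)$; combining and unwinding Step 1 gives the bound $k_d\,t/(\lambda_{\min}(\bV)^{3/2}\sqrt n)$ of the theorem.

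\emph{Main obstacle.} The three steps are individually standard, but the real effort — and the origin of the explicit constant $k_2=265$ — lies in tracking every dimension-only constant through the argument, above all in a sharp quantitative form of the convex perimeter bound $\omega_d(\delta)\le a_d\delta$ and of the multidimensional smoothing inequality. One either imports the optimal versions of these two facts from the literature (e.g.\ bounds on surface areas of convex bodies, and Gaussian-perimeter estimates) or re-derives them; the remaining Taylor-expansion bookkeeping is routine.
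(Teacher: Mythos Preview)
The paper does not prove Theorem~\ref{thm:multidimensional-berry-esseen}; it is stated in Appendix~\ref{app:be} purely as a citation (``a restatement of Corollary 38 in \cite{WKT13}'', itself derived from \cite{Got91,Bha10}) and then invoked as a black box. So there is no ``paper's own proof'' to compare against.

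Your outline is the standard Fourier-analytic route (G\"otze, Bhattacharya--Rao): whiten to identity covariance, apply a smoothing inequality whose boundary term is controlled via the Gaussian perimeter bound $\gamma_d\big((\partial\mathscr{C})^\delta\big)\le a_d\delta$ for convex $\mathscr{C}$, and bound the Fourier integral by third-order Taylor expansion of the log-characteristic functions. This is indeed how the cited results are obtained, and your sketch correctly identifies where convexity is used and where the dimensional constants originate. One small caveat: the form of the smoothing inequality you wrote, with the kernel $\|\bt\|_2^{-1}$, is schematic rather than literal --- the actual inequality involves a specific smoothing density and a somewhat different structure --- but as you flag the argument as an outline this is acceptable. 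Since the paper treats the theorem as an imported tool, your sketch goes well beyond what the paper itself does.
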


\section{Proof of Lemma~\ref{lem:approx}} \label{app:upper_bd_set} 
Fix $(z_1, z_2)  \in\Psi^{-1}\big( \bV,\eps+\lambda_n)$ and define  $\bZ = ( Z_1, Z_2)\sim\calN(\bzero,\bV)$. Since $\Psi^{-1}\big( \bV,\eps )$ is monotonic in the sense that  $\Psi^{-1}\big( \bV,\eps ) \subset\Psi^{-1}\big( \bV,\eps' )$ for $\eps\le\eps'$,  it suffices to verify that $(z_1, z_2)$ belongs to the set on the right-hand side of \eqref{eqn:inclusion} for those $(z_1, z_2)$ on the boundary of  $\Psi^{-1}\big( \bV,\eps+\lambda_n)$. That is (cf. \eqref{eqn:psiinv}),
\begin{align}
\Pr\big( Z_1\le -z_1, Z_2\le -z_2\big)=1-(\eps+\lambda_n). \label{eqn:integral}
\end{align}
 Define  $\nu_n :=\inf\left\{ \nu>0: (-z_1-\nu, -z_2-\nu) \in \Psi^{-1} ( \bV,\eps )\right\}$.  We need to show that $\nu_n=o(1)$ is bounded above by some linear function of $\lambda_n$. By using~\eqref{eqn:integral} and the definition of $\nu_n$, we see that 
\begin{align}
\lambda_n&=\Pr\big(Z_1 \in [-z_1-\nu_n, -z_1] \cup Z_2 \in [ -z_2-\nu_n,  -z_2] \big)   \\
& \ge\max_{j=1,2}\left\{ \Phi\left(\frac{ -z_j}{\sqrt{V_{jj}}}  \right) -\Phi\left( \frac{ -z_j-\nu_n}{\sqrt{V_{jj}}}\right)   \right\} . \label{eqn:min2}
\end{align}
The assumption that $\bV$ is a non-zero positive-semidefinite matrix ensures that at least one of $V_{jj}, j=1,2$ is non-zero. We have the lower bound
\begin{align}
 \Phi\left(\frac{ -z }{\sqrt{V }}  \right) -\Phi\left( \frac{ -z -\nu_n}{\sqrt{V }}\right) \ge\frac{\nu_n}{\sqrt{V }} \min\left\{ \calN(  z;0,V  ), \calN(  z+\nu_n;0,V )  \right\}.
\end{align}
Hence, for all $n$ large enough, each of the terms in $\{\cdot\}$ in~\eqref{eqn:min2} is bounded below by $\nu_n    f(z_j,  V_{jj} )$ for $j = 1,2$ where $f(z,V):=\frac{1}{2\sqrt{V}}\calN( z;0,V)$ satisfies $\lim_{ z\to\pm\infty}f(z,V)=0$.   Hence, $\nu_n\le \lambda_n\min_{j=1,2}\{  f(z_j,  V_{jj} )^{-1}\}$. For every fixed $\eps\in (0,1)$, every $(z_1, z_2)  \in\Psi^{-1}\big( \bV,\eps+\lambda_n)$  satisfies $\min\{|z_1|, |z_2|\}<\infty$, and hence $\min_{j=1,2}\{  f(z_j,  V_{jj} )^{-1}\}$ is finite.  This concludes the proof.  
\section{Proof of Lemma~\ref{lem:bounds_set}} \label{app:bounds_set}
Recall that $\hrho_n\to 1$. We start by proving the inner bound on $\Psi^{-1}(\bV(\hrho_n),\eps)$. Let $(w_1, w_2)$ be   an arbitrary element of the left-hand-side  of \eqref{eqn:set_up_bd}, i.e.\ $w_1\le -b_n$ and $w_2\le\sqrt{V_{12}(1)}\Phi^{-1}(\eps+a_n) - b_n$.  Define the random variables  $(Z_{1,n}, Z_{2,n})\sim\calN(\bzero,\bV(\hrho_n))$ and the sequence    $b_n:=( 1-\hrho_n )^{1/4}$. Consider
\begin{align}
\Pr\big(Z_{1,n}\le -w_1, Z_{2,n}\le -w_2\big)  &\ge \Pr \Big(Z_{1,n}\le  b_n , Z_{2,n}\le -\big(\sqrt{V_{12}(1)}\Phi^{-1}(\eps+a_n) - b_n\big) \Big)\\
&\ge \Pr\Big(Z_{2,n}\le -\big(\sqrt{V_{12}(1)}\Phi^{-1}(\eps+a_n) -b_n\big) \Big)-\Pr\Big( Z_{1,n} >   b_n \Big)  \label{eqn:set_mani}\\
&= \Phi\left(\frac{-\big(\sqrt{V_{12}(1)}\Phi^{-1}(\eps+a_n) -b_n\big)  }{\sqrt{V_{12}(\hrho_n)}} \right) - \Phi\left( \frac{-b_n}{\sqrt{V_{1}( \hrho_n)}}\right) . \label{eqn:mu_n}
\end{align}
From the choice of $b_n$ and the fact that  $\sqrt{V_1(\hrho_n)}  = \Theta(\sqrt{1 -\hrho_n})$  (since $V_1(\rho)=\Theta(1-\rho)$ as $\rho\to 1$ by continuous differentiability), the argument of the second term scales as $-(1-\hrho_n)^{-1/4}$, which tends to $-\infty$. Hence,  the second term vanishes.  We may thus choose a vanishing sequence $a_n$ so that the expression in \eqref{eqn:mu_n} equals $1-\eps$. Such a choice satisfies $a_n  =\Theta( b_n)=\Theta(( 1-\hrho_n )^{1/4} )$, in accordance with the lemma statement.   From the definition in \eqref{eqn:psiinv}, we have proved that $(w_1,w_2)\in \Psi^{-1}(\bV(\hrho_n),\eps)$ for this choice  of $(a_n,b_n)$.

For the outer bound on  $\Psi^{-1}(\bV(\hrho_n),\eps)$, let $(u_1,u_2)$ be an arbitrary element of $\Psi^{-1}(\bV(\hrho_n),\eps)$. By  definition, 
\begin{equation}
\Pr(Z_{1,n}\le - u_1, Z_{2,n}\le -u_2)\ge 1-\eps,
\end{equation}
where $(Z_{1,n}, Z_{2,n})\sim\calN(\bzero,\bV(\hrho_n))$ as above. Thus, 
\begin{equation}
 1-\eps\le\Pr(  Z_{2,n}\le -u_2) =\Phi\left(\frac{-u_2}{\sqrt{V_{12}(\hrho_n)}}\right).
\end{equation}
This leads to 
\begin{equation}
u_2\le\sqrt{V_{12}(\hrho_n)}\Phi^{-1}(\eps) = \sqrt{V_{12}(1)}\Phi^{-1}(\eps) + c_n'
\end{equation}
for some $c_n'=\Theta(1-\hrho_n)$, since $\rho\mapsto\sqrt{V_{12}(\rho)}$ is continuously differentiable and its derivative does not vanish at $\rho=1$. Similarly, we have
\begin{equation}
u_1\le\sqrt{ V_1(\hrho_n)}\Phi^{-1}(\eps) = c_n''
\end{equation}
for some $c_n'' =\Theta(\sqrt{1-\hrho_n})$, since $V_1(1)=0$ and $\sqrt{V_{1}(\hrho_n)}=\Theta (\sqrt{1-\hrho_n})$.  Letting $c_n:=\max\{ |c_n'|, |c_n''|\}=\Theta(\sqrt{1-\hrho_n})$,  we deduce that $(u_1, u_2)$ belongs to the rightmost set in \eqref{eqn:set_up_bd}. This completes  the proof.

\section{Proof of Lemma~\ref{lem:rn_bd}} \label{app:rn_prf}

Throughout the proof, we use the fact that for jointly Gaussian $(X_1,X_2)$ with powers $(S_1,S_2)$
and correlation $\rho$ (i.e.~the covariance matrix given in \eqref{eqn:sigmamatrix}), we have
\begin{equation}
    X_1|\{X_2=x_2\} \sim \calN\bigg( \rho\sqrt{\frac{S_1}{S_2}}x_2,S_1(1-\rho^2) \bigg). \label{eq:cond_marginal}
\end{equation}
Several aspects of the proof are similar to Polyanskiy {\em et al.}~\cite[Lem.~61]{PPV10}
for the single-user setting, so we focus primarily on the parts that are different. 

\subsection{Upper bounding $\Lambda_1$}

A straightforward symmetry argument reveals that 
\begin{equation}
\frac{\rmd P_{\bX_1|\bX_2}W^n(\,\cdot\,|\bx_2)}{\rmd Q_{\bY|\bX_2}(\,\cdot\,|\bx_2)}(\by)
\end{equation}
is the same for all $\bx_2$ having a fixed magnitude.  Since $\|\bX_2\|_2^2=nS_2$ almost 
surely by construction, we focus on the convenient sequence
$\bx_2 = (\sqrt{nS_2},0,\dotsc,0)$.  The constraint $\langle \bx_1,\bx_2 \rangle = n\rho\sqrt{S_1S_2}$
in \eqref{sc_px1} implies that the first entry of $\bx_1$ equals $\rho\sqrt{nS_1}$ with 
probability one.  Moreover, since $\|\bX_1\|_2^2=nS_1$ almost surely, the remaining $(n-1)$ symbols
must have a total power of $nS_1(1-\rho^2)$.  Since \eqref{sc_px1} is the uniform distribution on 
the set satisfying the given conditions, we conclude that the final $(n-1)$ entries of
$\bX_1$ are uniform on the sphere of radius $\sqrt{nS_1(1-\rho^2)}$ centered at zero. 

We wish to bound the Radon-Nikodym (RN) derivative of $\bY := \bX_1 + \bx_2 + \bZ$ with respect to 
$\bY' := \bX'_1 + \bx_2 + \bZ$, where $\bX_1$ has the conditional distribution in \eqref{sc_px1},
and $\bX'_1$ is i.i.d.~on $P_{X_1|X_2}W$ given $\bx_2$ (recall the choice of
$Q_{\bY|\bX_2}$ in the lemma statement).  For notational convenience, we work with
the vectors $\bYtil := \bY-\bx_2$ and $\bYtil' := \bY'-\bx_2$, we let $\bytil$ denote
a generic sequence equaling $\by-\bx_2$, and we write $\bYtil = (\Ytil_1,\Ytil_2^n)$ to split the
first entry of $\bYtil$ from the other $(n-1)$ entries (and similarly for $\bYtil'$, $\bytil$ and $\bZ$).
Since $\bYtil$ and $\bYtil'$ are shifted versions of $\bY$ and $\bY'$, it suffices to 
bound the RN derivative associated the former sequences.
Observing that $\Ytil_1$ is independent of $\Ytil_2^n$ and similarly for $\bYtil'$, we have
\begin{equation}
    \frac{\rmd P_{\bYtil}}{\rmd P_{\bYtil'}}(\bytil) = \frac{\rmd P_{\Ytil_1}}{\rmd P_{\Ytil'_1}}(\ytil_1) \frac{\rmd P_{\Ytil_2^n}}{\rmd P_{\Ytil_2^{\prime n}}}(\ytil_2^n). \label{eqn:rn_til}
\end{equation}
By \eqref{eq:cond_marginal} and the fact that the first entry of $\bx_{2}$ equals 
$\sqrt{nS_2}$, the first RN derivative  
on the right-hand side equals the ratio of the densities $\calN\big(\rho\sqrt{nS_1},1\big)$ 
and $\calN(\rho\sqrt{nS_1},1+S_1(1-\rho^2))$.  Since the means
coincide and the former has a smaller variance, this derivative is upper bounded by its
value at the mean, which equals
\begin{equation}
\frac{\frac{1}{\sqrt{2\pi}}}{\frac{1}{\sqrt{2\pi(1+S_1(1-\rho^2))}}} = \sqrt{1+S_1(1-\rho^2)},
\end{equation}
and is thus uniformly bounded in $\rho\in[0,1]$.

We now handle the second term in \eqref{eqn:rn_til}, which is between the uniform
distribution on the sphere of radius $\sqrt{nS}$ and the $(n-1)$-fold memoryless extension of  
$\calN(0,1+S)$, where $S := S_1(1-\rho^2)$.  This is the same as the setting
of \cite[Lem.~61]{PPV10} other than two differences: (i) The block length is $n-1$
instead of $n$, so the radius $\sqrt{nS}$ is slightly larger than that
which might be expected in analogy with \cite{PPV10}, namely $\sqrt{(n-1)S}$.
(ii) We must allow for all $S \in [0,S_1]$ (to accommodate for all $\rho\in[0,1]$), 
rather than considering only a fixed  \emph{positive}
value.  Fortunately, the proof of \cite[Lem.~61]{PPV10} turns out to automatically
handle both of these issues.  Rather than repeating the proof here, we simply
outline the differences.

We first note that, as in \cite{PPV10}, we can restrict attention to sequences
$\ytil_2^n$ such that 
\begin{equation}
    (n-1)(1+S-\delta) \le \|\ytil_2^n\|_2^2 \le (n-1)(1+S+\delta) \label{eq:ytil_range}
\end{equation} 
for some $\delta \in (0,1)$, since the Chernoff bound implies that the probability
of all remaining sequences vanishes exponentially fast, explaining the exponentially 
decaying term in \eqref{eqn:atypical}.  Note that the condition
in \eqref{eq:ytil_range} corresponds to the choice of $\calA_1$ in the lemma statement;
in the more general case where $\bx_2$ may differ from $(\sqrt{nS_2},0,\dotsc,0)$, 
$\ytil_2^n$ should be replaced by the projection
of $\by$ onto the $(n-1)$-dimensional subspace orthogonal to $\bx_2$.

Next, we observe that the second term in term in \eqref{eqn:rn_til} depends on 
$\ytil_2^n$ only through its squared magnitude $r := \|\ytil_2^n\|_2^2$. Thus,
using \cite[Eqs.~(212)--(213)]{PPV10} to obtain explicit formulas for the 
densities of $\|\Ytil_2^n\|_2^2$ and $\|\Ytil_2^{\prime n}\|_2^2$, we obtain
the following analog of \cite[Eq.~(426)]{PPV10}:
\begin{equation}
    \frac{\rmd P_{\Ytil_2^n}}{\rmd P_{\Ytil_2^{\prime n}}}(\ytil_2^n) = (1+S)^{\frac{n'}{2}} \exp\Big(-n'\frac{S}{2} - r\frac{S}{2(S+1)} \Big)\big((n'+1)Sr\big)^{-\frac{1}{2}(\frac{n'}{2} - 1)} 2^{\frac{n'}{2}}\Gamma\Big(\frac{n'}{2}\Big)I_{n'/2-1}\big(\sqrt{(n'+1)Sr}\big),
\end{equation}
where $I_k(z)$ is the modified Bessel function of the first kind, $\Gamma(\cdot)$
is the Gamma function, and we have written $n' := n-1$ for the sake of ease 
of comparison with \cite[Lem.~61]{PPV10}. The desired result is now obtained as in \cite[Lem.~61]{PPV10}
by upper bounding the Gamma function and Bessel function using \cite[Eq. (428)]{PPV10}
and \cite[Eq. (430)]{PPV10} (the former of which should be combined with 
$\sinh^{-1}(z) = \log(z+\sqrt{1+z^2})$) and applying algebraic manipulations.

To gain some intuition as to why arbitrarily small values of $S$ are
permitted (which is the main difference in our analysis compared to \cite{PPV10}),
one may consider the case $S=0$, corresponding to $\rho=1$.  This
case is trivial, since it yields $\Ytil_2^n = Z_2^n$ and $\Ytil_2^{\prime n} = Z_2^n$
with probability one, thus yielding an RN derivative of one.

\subsection{Upper bounding $\Lambda_{12}$} 

Observe that, by construction in \eqref{sc_px2}--\eqref{sc_px1}, we have 
$\|\bX_1+\bX_2\|_2^2 = nS_1 + n2\rho\sqrt{S_1S_2} + nS_2$ with probability one.
Thus, by symmetry, $\bX_1 + \bX_2$ is uniform on the sphere of 
radius $\sqrt{n(S_1+S_2+2\rho\sqrt{S_1S_2})}$, and the RN derivative
we seek is \emph{identical} to that characterized in the proof of~\cite[Lem.~61]{PPV10}.
Thus, the desired result follows by choosing $\calA_{12}$ in the same
way as \cite[Eq.~(416)]{PPV10}:
\begin{equation}
    \calA_{12} = \Big\{\by \,:\, n(1+S_1+S_2+2\rho\sqrt{S_1S_2} - \delta) \le \|\by\|_2^2 \le n(1+S_1+S_2+2\rho\sqrt{S_1S_2} + \delta) \Big\}
\end{equation}
for some $\delta\in(0,1)$.

\section*{Acknowledgment}

We are grateful to Ebrahim MolavianJazi for pointing us to a minor
error in an earlier version of the paper. 

This first author has been funded in part by the European Research Council under ERC grant agreement 
259663, by the European Union's 7th Framework Programme under grant agreement 
303633, and by the Spanish Ministry of Economy and Competitiveness under grant TEC2012-38800-C03-03.
The second author has been  supported by A*STAR and NUS   grants   R-263-000-A98-750/133.

\bibliographystyle{ieeetr}
\bibliography{../isitbib}

\begin{IEEEbiographynophoto}{Jonathan Scarlett}
(S'14-M'15) was born in Melbourne, Australia, in 1988.
In 2010, he received the B.Eng.\ degree in electrical engineering and the
B.Sci.\ degree in computer science from the University of Melbourne,
Australia. In 2011, he was a research assistant at the Department of
Electrical \& Electronic Engineering, University of Melbourne. From
October 2011 to August 2014, he was a Ph.D.\ student in the Signal
Processing and Communications Group at the University of Cambridge,
United Kingdom. He is now a post-doctoral researcher with the Laboratory
for Information and Inference Systems at the \'Ecole Polytechnique F\'ed\'erale
de Lausanne, Switzerland. His research interests are in the areas of
information theory, signal processing, and high-dimensional statistics.
He received the Poynton Cambridge Australia International Scholarship, and
the `EPFL Fellows' postdoctoral fellowship co-funded by Marie Curie.
\end{IEEEbiographynophoto}

\begin{IEEEbiographynophoto}{Vincent Y. F. Tan}
(S'07-M'11-SM'15) is an Assistant Professor in
the Department of Electrical and Computer Engineering (ECE) and the
Department of Mathematics at the National University of Singapore (NUS).
He received the B.A.\ and M.Eng.\ degrees in Electrical and Information
Sciences from Cambridge University in 2005 and the Ph.D.\ degree in
Electrical Engineering and Computer Science (EECS) from the Massachusetts
Institute of Technology in 2011. He was a postdoctoral researcher in the
Department of ECE at the University of Wisconsin-Madison and a research
scientist at the Institute for Infocomm (I$^2$R) Research, A*STAR, Singapore.
His research interests include information theory, machine learning and signal
processing.

Dr.\ Tan received the MIT EECS Jin-Au Kong outstanding doctoral thesis
prize in 2011 and the NUS Young Investigator Award in 2014. He is currently 
an Editor of the IEEE Transactions  on Communications. \end{IEEEbiographynophoto}

\end{document}